\DeclareMathAlphabet{\mathpzc}{OT1}{pzc}{m}{it}
\newcommand{\query}[1]{\textsc{#1}}
\newcommand{\mq}{\query{mq}}
\newcommand{\eq}{\query{eq}}
\renewcommand{\paragraph}[1]{\vspace{1mm}\noindent\textbf{\textit{#1}}\ }
\newcommand{\bigparagraph}[1]{\vspace{1mm}\noindent{\fontsize{11}{14}\sffamily\bfseries{#1}}\ }
\newcommand{\itemI}{\vspace{1mm}\noindent{\textcolor{black!50!white}{\fontsize{10}{10}\sffamily\bfseries{I.}}}\ }
\newcommand{\itemII}{\vspace{1mm}\noindent{\textcolor{black!50!white}{\fontsize{10}{10}\sffamily\bfseries{II.}}}\ }
\newcommand{\dmin}{\ensuremath{d_{-\infty}}}
\newcommand{\dmax}{\ensuremath{d_{\infty}}}
\newcommand{\class}[1]{\ensuremath{{\mathbb{#1}}}}
\newcommand{\algor}[1]{{\ensuremath{{\textbf{#1}}}}}
\newcommand{\lstar}{{\ensuremath{\algor{L}^*}}}
\newcommand{\matstar}{{\ensuremath{\algor{MAT}^*}}}
\newcommand{\method}[1]{\ensuremath{\textsf{#1}}}
\newcommand{\lang}[1]{\mathcal{L}({#1})}
\newcommand{\hatlang}[1]{\ensuremath{\hat{\mathcal{L}}({#1})}}
\newcommand{\partition}[1]{\ensuremath{\langle#1\rangle}}
\newcommand{\cpart}[1]{\ensuremath{\Pi_\textsf{conc}({#1})}}
\newcommand{\ppart}[1]{\ensuremath{\Pi_\textsf{pred}({#1})}}
\newcommand{\Asat}{\textsf{A}_\textsf{SAT}}
\newcommand{\rflinesigmaConc}{2\xspace}
\newcommand{\rflinegenGammai}{4\xspace}
\newcommand{\sat}[1]{\ensuremath{\mathit{sat}^{#1}}}
\newcommand{\size}[1]{\ensuremath{\mathit{size}^{#1}}}
\newcommand{\generalize}{{\method{Generalize}}}
\newcommand{\concretize}{{\method{Concretize}}}
\newcommand{\Generalize}{{\method{Generalize}}}
\newcommand{\Concretize}{{\method{Concretize}}}
\newcommand{\Decontaminate}{{\method{Decontaminate}}}
\newcommand{\CharDFA}{\algor{{CharDFA}}}
\newcommand{\InferDFA}{\algor{{InferDFA}}}
\newcommand{\CharSFA}{\algor{CharSFA}}
\newcommand{\InferSFA}{\algor{{InferSFA}}}
\newcommand{\Char}{\algor{Char}}
\newcommand{\Infer}{\algor{{Infer}}}
\newcommand{\algoQSFA}{\algor{Q}_{{\class{M}}}}
\newcommand{\algoQalge}{\algor{Q}_{{\alge{A}}}}
\newcommand{\pfcharsfa}{\method{CharSFA}}
\newcommand{\pfinfersfa}{\method{InferSFA}}
\newcommand{\alge}[1]{\ensuremath{\mathpzc{#1}}}
\newcommand{\aut}[1]{\ensuremath{\mathcal{#1}}}
\newcommand{\dom}[1]{\ensuremath{\mathbb{#1}}}
\newcommand{\atpreds}[1]{\ensuremath{\mathbb{#1}_0}}
\newcommand{\sema}[1]{\ensuremath{\llbracket{#1}\rrbracket}}
\newcommand{\sfa}{\textsc{SFA}}
\renewcommand{\dag}{\textsc{dag}}
\colorlet{pink}{red!40}
\colorlet{blue}{cyan!60}
\begin{document}
\title{Inferring Symbolic Automata}

\author[D. Fisman]{Dana Fisman\lmcsorcid{0000-0002-6015-4170}}[a]
\address{Ben-Gurion University, Be’er Sheva, Israel}

\author[H. Frenkel]{Hadar Frenkel\lmcsorcid{0000-0002-3566-0338}}[b]
\address{CISPA Helmholtz Center for Information Security, Saarbr{\"u}cken, Germany}

\author[S. Zilles]{Sandra Zilles\lmcsorcid{0000-0001-7834-8574}}[c]
\address{University of Regina, Regina, Canada}

\begin{abstract}

We study the learnability of \emph{symbolic finite state automata} (SFA), a model shown useful in many applications in software verification.
The state-of-the-art literature on this topic follows the \emph{query learning} paradigm,
and so far all obtained results are positive. We provide a necessary condition for 
efficient learnability of SFAs in this paradigm, from which we obtain the first negative result.

The main focus of our work lies in the learnability of SFAs under the paradigm of \emph{identification in the limit using polynomial time and data},
and its strengthening \emph{efficient identifiability}, which are concerned with the existence of a systematic set of \emph{characteristic samples}
from which a learner can correctly infer the target language.
We provide a necessary condition  for identification of SFAs in the limit using polynomial time and data, and 
a sufficient condition for efficient learnability of SFAs. 
From these conditions we derive a positive and a negative result.

The performance of a learning algorithm is typically bounded as a function of the size of the 
    representation of the target language. Since SFAs, in general, do not have a canonical form, 
    and there are trade-offs between the complexity of the predicates on the transitions and the number of transitions,
    we start by defining size measures for SFAs.
    We revisit the complexity of procedures on SFAs
    and analyze them according to these measures, paying attention to the special forms of SFAs: \emph{normalized SFAs} and \emph{neat SFAs},
    as well as to SFAs over a \emph{monotonic} effective Boolean algebra. 
    
   This is an extended version of the paper with the same title published in CSL'22~\cite{FFZ22}.
\end{abstract}

\maketitle

\section{Introduction}

\emph{Symbolic finite state automata}, SFAs for short, are an automata model in which transitions between
states correspond to predicates over a domain of concrete alphabet letters. Their purpose is to
cope with situations where the domain of concrete alphabet letters is large or infinite. As an example for automata over  finite large alphabets
consider automata over the alphabet $2^{AP}$ where ${AP}$ is a set of atomic propositions; these are used in model checking~\cite{ClarkeGP2001,BaierKatoenBook}. Another example, used in string sanitizer algorithms~\cite{HooimeijerLMSV11}, are automata over predicates on
the Unicode alphabet which consists of over a million symbols. 
An infinite alphabet is used for example in \emph{event recording automata}, a determinizable class of timed automata~\cite{AlurFH99}
in which an alphabet letter consists of both a symbol from a finite alphabet, and a non-negative real number. Formally, the transition predicates in an SFA
are defined with respect to an effective Boolean algebra as  defined in \autoref{sec:prelims}.

SFAs have proven useful in many applications~\cite{DAntoniVLM14,PredaGLM15,ArgyrosSJKK16,HuD17,SaarikiviV17,MamourasRAIK17} and consequently have been studied as a theoretical model of automata.
Many algorithms for natural questions over these automata already exist in the literature, in particular, Boolean operations, determinization, and  emptiness~\cite{DBLP:conf/icst/VeanesHT10}; minimization~\cite{DAntoniV16}; and language inclusion~\cite{KeilT14}. Recently the subject of learning automata in verification has also attracted attention, as it has been shown useful in many applications, see Vaandrager's survey~\cite{Vaandrager17}.

There already exists substantial literature on learning restricted forms of SFAs~\cite{GrinchteinJL10,MalerM14,ArgyrosSKK16,MalerM17,ChubachiDYS17},
as well as general SFAs~\cite{DrewsD17,ArgyrosD18}, and even non-deterministic residual SFAs~\cite{abs-1902-07417}.
For other types of automata over infinite alphabets, \cite{DBLP:conf/vmcai/HowarSM11} suggests learning abstractions, and~\cite{DBLP:conf/fm/Sheinvald19} presents a learning algorithm for deterministic variable automata.
All these works consider the query learning paradigm, 
and provide extensions to Angluin's \lstar\ algorithm for learning DFAs using membership and equivalence queries~\cite{Angluin87}. Unique 
to these works is the work~\cite{ArgyrosD18} which studies the learnability of SFAs taking as a parameter the learnability of the underlying algebras, providing
positive results regarding specific Boolean algebras.

{One of our contributions is to demonstrate that these positive learnability results are far from trivial. In particular, we show that there are limitations to the power of membership and equivalence queries when it comes to learning SFAs. To do so, we provide a necessary condition for efficient learnability of SFAs in the query learning paradigm, from which we obtain a negative result regarding query learning of SFAs over the propositional algebra. This is, to the best of our knowledge, the first negative result on learning SFAs with membership and equivalence queries and thus gives useful insights into the limitations of the \lstar\ framework in this context.} 

{The main focus of our work lies} on the learning paradigm of \emph{identification in the limit using polynomial time and data}.
We are 
interested in providing  sufficient or necessary conditions for  a class of SFAs to be learnable under this paradigm. To this
end, we 
show that the type of the algebra, in particular whether it is monotonic or not, 
largely influences the learnability of the class. 

Learnability of a class of languages in a certain paradigm greatly depends on the representation chosen for the language. For instance, regular languages are efficiently learnable (both in the paradigm of identification in the limit using polynomial time and data, and in the query learning paradigm using membership and equivalence queries) when represented as DFAs but not when represented as NFAs.
While we are interested in SFAs as the representations, there are various types of SFAs (with the same expressive power),
and the learnabilty results for them  may vary.

The literature on SFAs has mainly focused on a special type of SFA, termed \emph{normalized}, in which there
is at most one transition between every pair of states. This minimization of the number of transitions comes at the cost
of obtaining more complex predicates. 
We consider, in addition to normalized SFAs, another special type of SFAs that we
term \emph{neat SFAs}, which by contrast, allows several transitions between the same pair of states, but restricts
the predicates to be \emph{basic}, as formally defined in \autoref{sec:bool-alge}.

To get on the right track, we first take a global look at the complexity of the standard operations on SFAs, and how
they vary according to the special form. We revisit the results in the literature and analyze them along the measures
we find adequate for the size of an SFA: the number of states ($n$), the number of transitions ($m$) and the size of the most
complex predicate ($l$).\footnote{ Previous results have  concentrated mainly on the number of states.} 
 The results show that most procedures are more efficient on neat~SFAs.

We then turn to study identification of SFAs in the limit using polynomial time and data. We provide a necessary condition a class of SFAs $\class{M}$ should meet in order to be 
identified in the limit using polynomial time and data, and 
a sufficient condition 
a class of SFAs $\class{M}$ should meet in order to be efficiently identifiable. These conditions are expressed in terms of the existence of certain efficiently computable functions, which we 
call $\Generalize_\class{M}$,  $\Concretize_\class{M}$, and  $\Decontaminate_\class{M}$. 
We then provide positive and negative results regarding the learnability of specific classes of SFAs in this paradigm.
In particular,  
we show that the class of SFAs over \emph{any} monotonic algebra is
 efficiently identifiable. 
 
\paragraph{Comparison to the conference version}  Preliminary results of this work appear in~\cite{FFZ22}. This paper extends the results of~\cite{FFZ22} by adding a thorough discussion of the different SFA types and their effect on the complexity on different automata procedures; as well as a new theorem regarding efficient learnability, and additional examples for learning SFAs. In particular, sections~\ref{sec:types},~\ref{sec:special_forms}, and~\ref{sec:procedures}, are all new, as well as \autoref{theorem:efficient} and Examples~\ref{example:mono} and~\ref{ex:decontanimate}.

\paragraph{Outline} The rest of the paper is organized as follows. In \autoref{sec:prelims} we provide the necessary definitions on effective Boolean algebras and SFAs. Section~\ref{sec:types} introduces the special forms of SFAs.
In \autoref{sec:special_forms} we discuss transformations between the special forms.
Section~\ref{sec:procedures} then  reviews the complexity of standard automata procedures along the mentioned parameters. 

We then turn to discuss the learnability of symbolic automata. Section~\ref{sec:learning} provides a short overview and definitions regarding learnability of SFAs. In section~\ref{sec:learn_char_set} we discuss the paradigm of learnability in the limit using polynomial time and data, and provide an overview of learning DFAs in this paradigm. Sections~\ref{subsec:necessary} and~\ref{subsec:sufficient} present a necessary condition and a sufficient condition for the efficient learnability of SFAs, and sections~\ref{subsec:positive} and~\ref{subsec:negative} use these conditions to prove a positive result on the learnability of SFAs over monotonic algebras, and a negative result on the learnability of SFAs over the propositional algebra. 
Section~\ref{sec:query_learning} discusses query learning of SFAs and provides a negative result. 
We conclude in section~\ref{sec:discuss} with a short discussion.

\section{Preliminaries}\label{sec:prelims}

\subsection{Effective Boolean Algebra}\label{sec:bool-alge}
\emph{A Boolean Algebra} \alge{A} can be represented as a tuple $(\dom{D}, \class{P}, \sema{\cdot}, \bot, \top, \vee, $ 
$\wedge, \neg)$ where $\dom{D}$ is a set of domain elements; 
$\class{P}$ is a set of predicates closed under the Boolean connectives, where $\bot,\top\in\class{P}$; the component $\sema{\cdot} : \class{P}\rightarrow 2^\dom{D}$ is the so-called \emph{semantics function}. It satisfies the following three requirements: 
(i) $\sema{\bot} = \emptyset$, 
(ii) $\sema{\top} = \dom{D}$,~~and 
(iii) for all $\varphi,\psi\in \class{P}$, $~\sema{\varphi\vee \psi} = \sema{\varphi}\cup\sema{\psi}$, $~\sema{\varphi\wedge \psi} = \sema{\varphi}\cap\sema{\psi}$, and $\sema{\neg\varphi}= \dom{D}\setminus\sema{\psi}$. A Boolean Algebra is \emph{effective} if all the operations above, as well as satisfiability, are decidable.
Henceforth, we implicitly assume Boolean algebras to be~effective.

One way to define a Boolean algebra is by defining a set $\class{P}_0$ of \emph{atomic formulas} that includes $\top$ and $\bot$ and obtaining $\class{P}$ by closing $\class{P}_0$ for conjunction, disjunction and negation.
For a predicate $\psi\in\class{P}$ we say that $\psi$ is \emph{atomic} if $\psi\in\class{P}_0$. We say that 
$\psi$ is \emph{basic} if $\psi$ is a conjunction of atomic formulas.

We now introduce two Boolean algebras that are discussed extensively in the paper.

\paragraph{The Interval Algebra}
	is the Boolean algebra in which the domain $\dom{D}$ is the set  $\dom{Z}\cup\{-\infty,\infty\}$ of integers augmented with two special symbols with their standard semantics, and the set of atomic formulas $\class{P}_0$ consists of  intervals of the form $[a,b)$ where $a,b\in\dom{D}$. 
	The semantics associated with intervals is the natural one: $\sema{[a,b)}=\{z\in\dom{D}~|~ a\leq z \mbox{ and } z < b\}$. If $a\geq b$ then $\sema{[a,b)} = \emptyset$ and we have that $[a,b)$ is semantically equivalent to $\bot$. 

\paragraph{The Propositional Algebra}
	is defined with respect to a set $AP=\{p_1,p_2,\ldots,p_k\}$ of atomic propositions. 
	The set of \emph{atomic predicates} $\atpreds{P}$ consists of the atomic propositions and their negations as well as $\top$ and $\bot$.
	The  domain $\dom{D}$ consists of all the possible valuations for these propositions, thus it is $\dom{B}^k$ where $\dom{B}\!=\!\{0,1\}$. The semantics of an atomic predicate $p$ is given by $\sema{p_i}=\{v\in\dom{B}^k~|~v[i]=1\}$, and similarly $\sema{\neg p_i}=\{v\in\dom{B}^k~|~v[i]=0\}$.\footnote{In this case  a basic formula is a \emph{monomial}.}

\subsubsection{Predicate Size}\label{sec:representation-and-size}
In order to reason about the complexity of operations over the Boolean algebra (and later, the efficient learnability of SFAs using such Boolean algebras), we need some measure of the size of predicates. We assume the algebra is associated with a function $\size{\class{P}}:\mathbb{P}\rightarrow\mathbb{N}$  returning for each predicate its size. 
If the algebra is defined via a set of atomic propositions,
one
can assume the existence of functions $\size{\class{P}}:\atpreds{P}\rightarrow\mathbb{N}$,
$\size{\class{P}}_{\wedge}:\mathbb{N}\times\mathbb{N}\rightarrow\mathbb{N}$,
$\size{\class{P}}_{\vee}:\mathbb{N}\times\mathbb{N}\rightarrow\mathbb{N}$,
$\size{\class{P}}_{\neg}:\mathbb{N}\rightarrow\mathbb{N}$
according to which the 
 size of predicates can be inductively computed.
Note that the size is a property of the predicate, not the set of concrete elements it represents.

\begin{exa}
For the interval algebra, we define the size of one interval to be $1$, and the size of a general predicate as the size of its parse tree, where leaves are single intervals (whose size is $1$).
Thus for example $\size{\class{P}}(([0,50)\vee[100,200))\wedge [20,60))$ is $5$ whereas the size of the semantically equivalent predicate $[20,50)$ is $1$.

Similarly, for the propositional algebra we define the size of a predicate to be the size of its parse tree. Note that Boolean functions from $\mathbb{B}^k$ to $\mathbb{B}$ can be represented in other ways as well, e.g., using Binary Decision Diagrams (BDDs)~\cite{DBLP:journals/tc/Bryant86}. This would result in a different Boolean algebra (where predicates are BDDs) with a different size measure for predicates.
\end{exa}

\subsection{Symbolic Automata} \label{sec:symbolic_automata}

A \emph{symbolic finite automaton} (\sfa) is a tuple 
$\aut{M} \! \!=\! \!(\alge{A},Q,q_\iota ,F,\Delta)$ where $\alge{A}$ 
is a Boolean algebra, $Q$ is a finite set of states, $q_\iota \in Q$ is the initial state, 
$F \subseteq Q$ is the set of final states, and $\Delta \subseteq Q \times \class{P}_\alge{A} \times Q $ is a finite set of transitions, where $\class{P}_\alge{A} $ is the set of predicates of $\alge{A}$.

We use the term \emph{letters} 
for elements of $\dom{D}$ where $\dom{D}$ is the domain of $\alge{A}$, and the term \emph{words} 
for elements of $\dom{D}^*$.
A \emph{run} of $\aut{M}$ on a word $a_1a_2\ldots a_n$ is a sequence of transitions
 $\langle q_0,\psi_1,q_1\rangle \langle q_1,\psi_2,q_2\rangle\ldots\langle q_{n-1},\psi_n,q_n\rangle$ satisfying that 
 $a_i\in \sema{\psi_i}$, that 
 $\langle q_i,\psi_{i+1},q_{i+1}\rangle\in\Delta$ and that $q_0=q_{\iota}$. 
 Such a run is said to be \emph{accepting} if $q_n\in F$. A word $w=a_1 a_2\ldots a_n$  is said to be \emph{accepted} 
  by $\aut{M}$ if there exists an accepting run of $\aut{M}$ on $w$. The set of words accepted by an SFA $\aut{M}$ is denoted 
$\lang{\aut{M}}$. We  use $\hatlang{\aut{M}}$ for the set of labeled words $\hatlang{\aut{M}}= \{ (w,1)~|~w\in\lang{\aut{M}}\}\cup\{(w,0)~|~w\notin\lang{\aut{M}}\}$.

An \sfa\ is said to be \emph{deterministic} if for every state $q\in Q$ and every letter $a\in\dom{D}$ we have that
$|\{\langle q,\psi,q' \rangle \in\Delta~|~a\in\sema{\psi}\}|\leq 1$, namely from every state and every concrete letter there exists at most one transition.
It is said to be \emph{complete} if  $|\{\langle q,\psi,q'\rangle\in\Delta~|~a\in\sema{\psi}\}|\geq 1$
for every  $q\in Q$ and  $a\in\dom{D}$, namely from every state and every concrete letter there exists at least one transition.
It is not hard to see that,  as is the case for finite 
%
    \begin{figure}[t]
    \centering
    \scalebox{1}{\small
        \begin{tikzpicture}[scale=0.11]
        \tikzstyle{every node}+=[inner sep=0pt]
        \draw [black] (18.3,-19) circle (3);
        \draw (18.3,-19) node {$q_0$};
        \draw [black] (32.4,-19) circle (3);
        \draw (32.4,-19) node {$q_1$};
        \draw [black] (32.4,-19) circle (2.4);
        \draw [black] (10.8,-19) -- (15.3,-19);
        \fill [black] (15.3,-19) -- (14.5,-18.5) -- (14.5,-19.5);
        \draw [black] (19.952,-16.519) arc (135.06693:44.93307:7.625);
        \fill [black] (30.75,-16.52) -- (30.54,-15.6) -- (29.83,-16.31);
        \draw (25.35,-13.78) node [above] {$[0,100)$};
        \draw [black] (33.36,-16.17) arc (189:-99:2.25);
        \draw (37.7,-13) node [right] {$[0,200)$};
        \fill [black] (35.23,-18.04) -- (36.1,-18.41) -- (35.94,-17.42);
        \draw [black] (18.597,-21.973) arc (33.44395:-254.55605:2.25);
        \draw (8,-22.77) node [below] {$[100,\infty)$};
        \fill [black] (16.12,-21.04) -- (15.18,-21.07) -- (15.73,-21.9);
        \draw [black] (30.128,-20.939) arc (-58.82797:-121.17203:9.231);
        \fill [black] (20.57,-20.94) -- (21,-21.78) -- (21.52,-20.93);
        \draw (25.35,-22.77) node [below] {$[200,\infty)$};
        \end{tikzpicture}}
    \caption{{The SFA $\aut{M}$ over ${\alge{A_\dom{N}}}$ 
    }}
    \label{fig:SFA}
    \end{figure}
%
 automata (over concrete alphabets), non-determinism does not add expressive power
but does add succinctness. When $\aut{A}$ is deterministic we use $\Delta(q,w)$ to denote the state $\aut{A}$ reaches on reading the word $w$ from state
$q$. If $\Delta(q_\iota,w)=q$ then  $w$ is  
 termed an \emph{access word to state} $q$. 
 If $w$ is the smallest access word according to lexicographic order we say that $w$ is the \emph{lex-access} word to state $q$.

\begin{exa}\label{ex:SFA}
Consider the SFA $\aut{M}$ given in \autoref{fig:SFA}. 
It is defined over the algebra ${\alge{A_\dom{N}}}$ which is the interval algebra restricted to the domain $\dom{D} = \mathbb{N}\cup \{\infty  \}$. The language of $\aut{M}$ is the set of all words over $\mathbb{D}$ 
of the form $w_1 \cdot d \cdot w_2$ where $w_1$ is some word over the domain $\mathbb{D}$, the letter $d$ satisfies $0 \leq d < 100$ and all letters of the word $w_2$ are numbers smaller than~200. 
The lex-access word to state $q_0$ is $\epsilon$, and $0$ is the lex-access word to state $q_1$.
\end{exa}

\section{Types of Symbolic Automata}\label{sec:types}
Since the complexity of a learning algorithm for a class of languages $\class{L}$ using
some representation $\class{R}$ is measured with respect to the
size of the smallest representation $R\in\class{R}$ for the unknown language $L\in\class{L}$,
we first need to agree how to measure the size of an SFA. Subsection~\ref{sec:size} explains
why the number of states is not a sufficient measure, and proposes an alternative using three parameters.
Optimizing different parameters leads to different special forms which are discussed in \autoref{sec:special-forms}.

\subsection{Size of an \sfa} \label{sec:size}
We note that there is a trade-off between the number of transitions and the
complexity of the transition predicates. 
The size of an automaton (not a symbolic one) is typically measured by its number of states. This is since for DFAs, the size of the alphabet is assumed to be a given constant, and the rest of the parameters, in particular the transition relation, are at most quadratic in the number of states. In the case of \sfa s the situation is different, as the size of the predicates labeling the transitions can vary greatly. In fact, if we measure the size of a predicate by the number of nodes in its parse \dag, then the size of a formula can grow unboundedly (and the same is true for other reasonable size measures for predicates).
The size and structure of the predicates influence the complexity of their satisfiability check, and thus the complexity of the corresponding algorithms. On the other hand there might be a trade-off between the size of the transition predicates and the number of transitions; e.g., a predicate of the form $\psi_1 \vee \psi_2 \ldots \vee \psi_k$ can be replaced by $k$ transitions, each one labeled by $\psi_i$ for $1\leq i \leq k$. 

Therefore, we measure the size of an \sfa\ by three parameters: the number of states~($n$), the maximal out-degree of a state ($m$) and the largest size of a predicate ($l$).

In addition, in order to analyze the complexity of automata algorithms discussed in \autoref{sec:boolean_operations} and \autoref{sec:special_boolean_operations}, for a class $\class{P}$ of predicates over a Boolean algebra $\alge{A}$, 
we use the complexity measure 
$\sat{\class{P}}(l)$, which is the complexity of satisfiability check for a predicate of size $l$ in $\class{P}$.  We also use $\sat{\class{P}_0}(l)$ for the respective complexities when restricted to atomic predicates.

\subsection{Special Form SFAs}\label{sec:special-forms}
We turn to define special types of \sfa s, which affect the complexity of related procedures. 

\paragraph{Neat and Normalized SFAs} 
The literature defines an \sfa\ as \emph{normalized} if for every two states $q$ and $q'$ there exists at most one transition from $q$ to $q'$.  
This definition prefers fewer transitions at the cost of potentially complicated predicates.
By contrast, preferring simple transitions at the cost of increasing the number of transitions, leads to~\emph{neat} \sfa s.
We define an \sfa\ to be \emph{neat} if
all transition predicates are basic predicates. 

\paragraph{Feasibility} 
The second distinction concerns the fact that an \sfa\ can have transitions with unsatisfiable predicates.
A symbolic automaton is said to be \emph{feasible}  if for every $\langle q,\psi,q'\rangle\in\Delta$ we have that $\sema{\psi}\neq \emptyset$. 
{Feasibility is an orthogonal property to being neat or normalized.} 

\paragraph{Monotonicity}
The third distinction we make concerning the nature of a given \sfa\ regards its underlying algebra. 
A Boolean algebra $\alge{A} $ over domain $\dom{D}$ is said to be  \emph{monotonic} if
the following conditions hold.
\begin{enumerate}
    \item There exists a total order $<$ on the elements of $\dom{D}$;
    and
    \item There exist two elements $\dmin$ and $\dmax$ such that $\dmin\leq d$ and $d \leq \dmax$ for all $d\in\dom{D}$;
    and
    \item An atomic predicate $\psi\in\atpreds{P}$ can be associated with two concrete values $a$ and $b$ such that $\sema{\psi}=\{d\in\dom{D}~:~ a\leq d < b \}$. Henceforth, we denote an atomic predicate $\psi$ over a monotonic algebra as $\psi = [a,b)$ where $\sema{\psi}=\{d\in\dom{D}~:~ a\leq d < b \}$. If $b\leq a$ then we have that $\sema{\psi} = \emptyset$ and thus the predicate is equivalent to $\bot$. 
\end{enumerate}

 The interval algebra is clearly monotonic, as is the similar algebra obtained using $\dom{R}$ (the real numbers) instead of $\dom{Z}$ (the integers). On the other hand, the propositional algebra is clearly non-monotonic.

\begin{exa}
The SFA $\aut{M}$ 
from Example~\ref{ex:SFA} (\autoref{fig:SFA}) is defined over a monotonic algebra, and is neat, normalized, deterministic and complete.  
\end{exa}

\section{Transformations to Special Forms}\label{sec:special_forms}

We now address the task of transforming SFAs into their special forms as presented in \autoref{sec:types}.
We discuss transformations to the special forms  \emph{neat}, \emph{normalized} and \emph{feasible} automata, measured
as suggested using $\langle n, m,l \rangle$ --- the number of states, the maximal out-degree of a state, and the largest size of a predicate.

\subsection{Neat Automata}\label{sec:neat}
Since each  predicate in a neat SFA is a conjunction of atomic predicates,  neat automata are  intuitive, and the number of transitions in the SFA reflects the complexity of the different operations, as opposed to the situation with normalized SFAs. 
For the class $\class{P}_0$ of basic formulas, 
$\sat{\class{P}_0}(l)$  is usually more efficient than $\sat{\class{P}}(l)$, and in particular is
polynomial for the algebras we consider here. 
This is since, for a basic predicate $\varphi$ that is a conjunction of $l$ atomic predicates, satisfiability testing can be reduced to checking that  there are no two atomic predicates that contradict each other.
Since satisfiability checking directly affects the complexity of various algorithms discussed in \autoref{sec:boolean_operations}, neat SFAs allow for efficient automata operations, as we show in \autoref{sec:special_boolean_operations}.

\subsubsection{Transforming to Neat} \label{sec:transforming-neat}
Given a general SFA $\aut{M}$ of size $\langle n, m,l \rangle$, we can construct a neat SFA $\aut{M}'$ of size $\langle n, m\cdot 2^l, l \rangle$, by transforming each transition predicate to a DNF
formula, and turning each disjunct  into an individual transition. 
The number of states, $n$, remains the same. However, the number of transitions can grow exponentially due to the transformation to DNF.
In the worst case, the size of the most complex  predicate can remain the same after the transformation, resulting in the same $l$ parameter for both automata.

Note that there is not necessarily a unique minimal neat SFA. For instance, a predicate $\psi$ over the propositional algebra with $AP=\{p_1,p_2,p_3\}$, satisfying $\sema{\psi}=\{[100],[101],[111]\}$ can be represented using the two basic transitions $(p_1\wedge \neg p_2)$ and $(p_1 \wedge p_2 \wedge p_3)$; or alternatively using the two basic transitions $(p_1\wedge p_3)$ and $(p_1 \wedge \neg p_2 \wedge \neg p_3)$, though it cannot be represented using one basic transition.\footnote{This is related to the fact that there is no unique DNF formula -- a neat automaton ``breaks" the DNF to transitions according to the disjunctions in the formula. If there is no unique DNF formula, then there is no unique neat SFA. }

Although in the general case, the transformation from normalized to neat SFAs is exponential, for monotonic algebras we have the following lemma, which follows directly from the definition of monotonic algebras and basic predicates.

\begin{lem}\label{obsrv:basic_monotonic}
Over a monotonic algebra,
the conjunction of two atomic predicates is also an atomic predicate; 
inductively, any basic formula that does not contain negations, over a monotonic algebra, is an atomic predicate.
In addition, the
negation of an atomic predicate is a disjunction of at most 2 atomic predicates.
\end{lem}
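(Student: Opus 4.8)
The plan is to prove the lemma directly from the definition of a monotonic algebra, which asserts that every atomic predicate $\psi$ has the form $[a,b)$ with semantics $\sema{[a,b)} = \{d \in \dom{D} : a \leq d < b\}$. The statement has three parts: (i) the conjunction of two atomic predicates is atomic; (ii) inductively, any negation-free basic formula is atomic; and (iii) the negation of an atomic predicate is a disjunction of at most two atomic predicates. I would treat each part in turn, relying only on the total order $<$ on $\dom{D}$ guaranteed by monotonicity.

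For part (i), I would take two atomic predicates $\psi_1 = [a_1, b_1)$ and $\psi_2 = [a_2, b_2)$. By the semantics function requirement (iii) of an effective Boolean algebra, $\sema{\psi_1 \wedge \psi_2} = \sema{\psi_1} \cap \sema{\psi_2}$. The intersection of two ``half-open intervals'' under a total order is again such an interval: concretely, $\sema{\psi_1} \cap \sema{\psi_2} = \{d : \max(a_1,a_2) \leq d < \min(b_1,b_2)\}$, which equals $\sema{[\max(a_1,a_2),\, \min(b_1,b_2))}$. Hence the conjunction is (semantically equivalent to) the atomic predicate $[\max(a_1,a_2), \min(b_1,b_2))$; if $\min(b_1,b_2) \leq \max(a_1,a_2)$ this is empty and equivalent to $\bot$, which is still atomic. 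Part (ii) then follows by a routine induction on the number of conjuncts in the basic formula: a negation-free basic formula is a conjunction of atomic predicates, and applying part (i) repeatedly collapses it to a single atomic predicate.

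For part (iii), I would take an atomic predicate $\psi = [a,b)$ and compute $\sema{\neg\psi} = \dom{D} \setminus \sema{\psi}$. Using the total order together with the extremal elements $\dmin$ and $\dmax$ from clause (2) of the monotonicity definition, the complement of $\{d : a \leq d < b\}$ splits into the elements below $a$ and the elements at or above $b$, i.e. $\sema{\neg\psi} = \{d : \dmin \leq d < a\} \cup \{d : b \leq d \leq \dmax\}$. The first part is $\sema{[\dmin, a)}$ and the second is $\sema{[b, \dmax')}$ where $\dmax'$ is chosen so that the right endpoint covers $\dmax$ (the half-open convention requires care here). Thus the negation is a disjunction of at most two atomic predicates, with ``at most'' accounting for the degenerate cases where $a = \dmin$ or $b$ exceeds the top element, making one of the two intervals empty.

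The only delicate point is the half-open boundary bookkeeping in part (iii): since atomic predicates are of the form $[a,b)$ with the right endpoint excluded, expressing the upper piece $\{d : b \leq d \leq \dmax\}$ as a single atomic predicate requires a right endpoint strictly above $\dmax$. This is handled by the assumption that $\dom{D}$ has a maximal element $\dmax$ (or, as in the interval algebra, by the presence of the special symbol $\infty$ serving as an exclusive upper bound), so I would state this boundary convention explicitly rather than leave it implicit. Everything else is immediate from the order-theoretic structure guaranteed by monotonicity, so I do not expect any genuine obstacle beyond this endpoint convention.
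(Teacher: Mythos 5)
Your proof is correct and matches the paper's approach exactly: the paper gives no explicit proof for this lemma, stating only that it ``follows directly from the definition of monotonic algebras and basic predicates,'' and your interval-intersection argument for conjunction plus the complement-splitting argument for negation is precisely that direct argument. Your explicit attention to the half-open endpoint convention at $\dmax$ is a boundary detail the paper itself glosses over (e.g., also in the proof of Lemma~\ref{lemma:complete_monotonic}), so flagging it is a welcome clarification rather than a deviation.
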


\begin{lem}\label{lemma:monotonic_to_neat}
Let $\aut{M}$ be a normalized SFA over a monotonic algebra $\alge{A}_{\mathit{mon}}$.
Then, transforming $\aut{M}$ into a neat SFA $\aut{M}'$ is linear in the size of $\aut{M}$.
\end{lem}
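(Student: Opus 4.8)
The plan is to give a direct construction that rewrites every transition predicate of $\aut{M}$ into a basic predicate (or a bounded number of basic predicates) without blowing up the size, relying entirely on \autoref{obsrv:basic_monotonic}. Since $\aut{M}$ is normalized, each pair of states has at most one transition, labeled by an arbitrary predicate $\psi$ of the monotonic algebra $\alge{A}_{\mathit{mon}}$. The key observation is that over a monotonic algebra an atomic predicate is an interval $[a,b)$, and \autoref{obsrv:basic_monotonic} tells us that conjunctions of negation-free atomic predicates collapse back to a single atomic predicate, while the negation of an atomic predicate splits into at most two atomic predicates. The strategy is therefore to push negations inward (to the atomic level) and then simplify, so that each predicate becomes a disjunction of atomic predicates, i.e. of intervals.

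First I would argue, by structural induction on the parse tree of a predicate $\psi$ built from atomic interval predicates via $\vee$, $\wedge$, $\neg$, that $\psi$ is semantically equivalent to a disjunction $\phi_1 \vee \cdots \vee \phi_k$ of atomic predicates (intervals), where $k$ is bounded linearly in the size of $\psi$. The base case is immediate. For negation, \autoref{obsrv:basic_monotonic} replaces $\neg[a,b)$ by at most two intervals; more generally, pushing a negation through a disjunction of $k$ intervals via De Morgan and distributing yields, after merging overlapping/adjacent intervals on the totally ordered domain, again a disjunction of intervals whose count does not grow beyond a linear bound (the complement of a union of $k$ disjoint intervals is a union of at most $k+1$ intervals). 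For conjunction, the negation-free part of \autoref{obsrv:basic_monotonic} lets each pairwise product of intervals collapse to a single interval, so distributing a conjunction over two interval-disjunctions of sizes $k_1,k_2$ and then merging again stays linear because on a total order the intersection of two unions of disjoint intervals has at most $k_1+k_2$ pieces. Once $\psi$ is in this normalized disjunction-of-intervals form, each interval is by definition a basic (indeed atomic) predicate, so I split the single normalized transition into one neat transition per interval, exactly as in the general neat transformation of \autoref{sec:transforming-neat}, except that here the number of disjuncts is linear rather than exponential.

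Putting the pieces together, the transformation leaves the number of states $n$ unchanged, replaces each of the at most $m = n$ outgoing transitions of a state by a linear (in $l$) number of neat transitions each of size $O(1)$ (a single interval), and hence produces a neat SFA $\aut{M}'$ whose total size is linear in the size $\langle n, m, l\rangle$ of $\aut{M}$. I would note explicitly that the merging of adjacent and overlapping intervals is precisely what prevents the exponential blow-up seen in the general DNF-based construction: the total order on $\dom{D}$ guarantees that a Boolean combination of intervals is again a union of intervals whose cardinality is controlled by the input size, rather than by the number of satisfying valuations.

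The main obstacle I anticipate is making the linearity bound in the inductive step fully rigorous, in particular controlling the interval count through alternating applications of $\wedge$, $\vee$, and $\neg$ so that the bound is genuinely linear in $\size{\class{P}}(\psi)$ (the parse-tree size) rather than merely polynomial. The subtlety is that each Boolean operation on interval-unions can at worst add a constant number of boundary points, so the number of interval endpoints is bounded by the number of atomic leaves in the parse tree; establishing this endpoint-counting invariant cleanly, together with the observation that the cost of the interval-merging step is dominated by sorting these endpoints, is the crux of the argument. Everything else follows routinely from \autoref{obsrv:basic_monotonic} and the definition of a neat SFA.
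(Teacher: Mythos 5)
Your proposal is correct and follows essentially the same route as the paper: the paper also reduces the lemma to showing that every predicate over a monotonic algebra has an equivalent DNF (union-of-intervals) form of linear size, using exactly the two facts you identify --- that negating intervals adds at most one interval, and that the conjunction of two disjoint-interval unions of sizes $k_1,k_2$ has at most $k_1+k_2$ non-empty pieces --- and then turns each disjunct into a neat transition. The only organizational difference is that the paper first converts to NNF (so negation need only be handled at the atomic level, bounding the blow-up by a factor of $2$) rather than complementing arbitrary interval-unions inline as you do; both variants yield the same linear bound.
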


Since a DNF formula with $m$ disjunctions is a natural representation of $m$ basic transitions, 
Lemma~\ref{lemma:monotonic_to_neat} follows from the following property of monotonic algebras.

\begin{lem}\label{lemma:dnf_monotonic}
Let $\psi$ be a general formula over a monotonic algebra $\alge{A}_{\mathit{mon}}$. Then,
there exists an equivalent DNF formula $\psi_{\mathit{d}}$ of size linear in $|\psi|$.
\end{lem}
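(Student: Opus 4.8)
The plan is to prove \autoref{lemma:dnf_monotonic} by structural induction on the formula $\psi$, tracking the size of the resulting DNF as we go. The key structural fact I would lean on is \autoref{obsrv:basic_monotonic}: over a monotonic algebra, a conjunction of atomic predicates collapses to a single atomic predicate (an interval $[a,b)$), and the negation of an atomic predicate is a disjunction of at most two atomic predicates. This is what prevents the usual exponential blow-up of DNF conversion, since in the general (non-monotonic) case distributing conjunction over disjunction is what causes the explosion.

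First I would set up the induction so that the inductive invariant is not merely ``there is an equivalent DNF of linear size'' but something stronger and more usable: I would maintain that the DNF for a subformula of size $k$ has at most $c\cdot k$ disjuncts for some fixed constant $c$, where each disjunct is a single atomic predicate $[a,b)$ (using \autoref{obsrv:basic_monotonic} to guarantee each conjunctive clause is atomic). Concretely, the base case handles atomic predicates, which are already single-disjunct DNFs. For the disjunction case $\psi = \psi_1 \vee \psi_2$, I simply concatenate the two DNFs, and the number of disjuncts adds, which is clearly linear. For negation $\psi = \neg\psi_1$, I would take the DNF $\bigvee_i [a_i,b_i)$ for $\psi_1$ and observe that its complement over a totally ordered domain is again a union of intervals; crucially, the complement of a union of $t$ intervals on a line is a union of at most $t+1$ intervals, so negation only adds a constant number of disjuncts rather than multiplying.

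The main obstacle is the conjunction case $\psi = \psi_1 \wedge \psi_2$, because naive distribution of $\left(\bigvee_i \alpha_i\right) \wedge \left(\bigvee_j \beta_j\right)$ produces a quadratic (and iterated, an exponential) number of cross terms $\alpha_i \wedge \beta_j$. The way around this is to exploit monotonicity geometrically: each $\psi_t$ in DNF denotes a finite union of disjoint intervals on the totally ordered domain $\dom{D}$, i.e., simply a subset of the line that is a finite union of intervals. The intersection of two such sets — a union of $s$ intervals with a union of $t$ intervals — is again a union of at most $s+t$ intervals (one can see this by a sweep/merge argument over the at most $s+t$ interval endpoints, since between consecutive endpoints each set is constant). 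Thus conjunction, like disjunction and negation, grows the disjunct count only additively. I would first normalize each sub-DNF into a canonical disjoint, sorted-by-endpoint union of maximal intervals (which does not increase the number of disjuncts, by merging), and then argue the merge bound on this canonical form.

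Combining the three cases, every Boolean connective increases the disjunct count by at most an additive amount proportional to the sizes of its operands, and since the size $|\psi|$ of the parse tree bounds the total number of connectives and leaves, an easy induction gives a DNF $\psi_d$ whose number of atomic disjuncts is $O(|\psi|)$, hence $|\psi_d|$ is linear in $|\psi|$. I would be careful to state the size bound in terms of the adopted predicate-size measure (parse-tree size with atomic intervals as unit-size leaves, as in \autoref{sec:representation-and-size}) so that ``linear'' refers to the same measure used throughout; the additive-per-connective bound on disjuncts translates directly into a linear bound on $|\psi_d|$. Finally, \autoref{lemma:monotonic_to_neat} follows immediately, since a DNF with $m$ atomic disjuncts is read off as $m$ basic (indeed atomic) transitions.
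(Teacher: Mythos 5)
Your proposal is correct and follows essentially the same route as the paper's proof: the crux in both is that the conjunction of two unions of disjoint intervals over a totally ordered domain is again a union of only additively many intervals, which is exactly what tames the usual distributive blow-up in the $\psi_1\wedge\psi_2$ case. The only organizational difference is that the paper first converts to negation normal form (using Lemma~\ref{obsrv:basic_monotonic} to bound that step by a factor of $2$) and then inducts over $\vee$ and $\wedge$ only, whereas you keep $\neg$ as an inductive case and complement a disjoint union of $t$ intervals into at most $t+1$ intervals directly; both variants are sound.
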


 \begin{proof}
First, we transform $\psi$ into a Negation Normal Form formula $\psi_{\mathit{NNF}}$, pushing negations inside the formula. 
When transforming to NNF, the number of atomic predicates (possibly under negation) remains the same, and so is the number of conjunctions and disjunctions. 
Since, by Lemma~\ref{obsrv:basic_monotonic}, a negation of an atomic predicate  over a monotonic algebra, namely a negation of an interval, results in at most two intervals, we get that $|\psi_{\mathit{NNF}}| \leq 2\cdot |\psi|$. Note that $\psi_{\mathit{NNF}}$ does not contain any negations, as they were applied to the intervals.
We now transform $\psi_{\mathit{NNF}}$ into a DNF formula $\psi_d$ recursively, operating on sub-formulas of $\psi_{\mathit{NNF}}$, distributing conjunctions over disjunctions.

We inductively prove that $\sema{\psi_d} = \sema{\psi_{\mathit{NNF}}}$ and $|\psi_d| \leq |\psi_{\mathit{NNF}}|$.
For the base case, if $\psi_{\mathit{NNF}}$ is a single interval $[a,b)$, then $[a,b)$ is in DNF and we are done.

For the induction step, consider the following two cases.

\begin{enumerate}
    \item Assume $\psi_{\mathit{NNF}} = \psi_1 \vee \psi_2$. By the induction hypothesis, there exists DNF formulas $\psi_{1d}$ and $\psi_{2d}$ such that $
    \sema{\psi_{id}} = \sema{\psi_i}$ and $|\psi_{id}| \leq |\psi_i|$ for $i=1,2$.
    Then, $\psi_d = \psi_{1d}\vee \psi_{2d}$ is equivalent to $\psi_{\mathit{NNF}}$ and at most of the same size. 
    \item Assume $\psi_{\mathit{NNF}} = \psi_1 \wedge \psi_2$. Again, by the induction hypothesis, instead of $\psi_1\wedge \psi_2$ we can consider $\psi_{1d}\wedge\psi_{2d}$ where $\psi_{1d}$ and $\psi_{2d}$ are in DNF. That is $\psi_{1d} = \bigvee_{i=1}^k [a_i, b_i)$ and $\psi_{2d} = \bigvee_{j=1}^l[c_j, d_j)$. 
    Then, we have the following:
    
    $$\psi_{1d} \wedge\psi_{2d} = \left( \bigvee_{i=1}^k [a_i, b_i) \right) \wedge \left( \bigvee_{j=1}^l[c_j, d_j) \right) = \bigvee_{i=1}^{k}\bigvee_{j=1}^{l} \Big(  [a_i, b_i)\wedge[c_j, d_j) \Big)$$
    
    \noindent From properties of intervals, each conjunction $[a_i, b_i)\wedge[c_j, d_j)$ is of the form
 $[\max\{a_i, c_j\},\allowbreak \min\{b_i, d_j\})$. The intervals in $\{ [a_i, b_i) : 1\leq i\leq k \}$ do not intersect (otherwise it would have resulted in a longer single interval), and the same for $\{[c_j, d_j) : 1\leq j\leq l\}$. Thus, every element $a_i$ or $c_j$ can define at most one interval of the form $[\max\{a_i, c_j\}, \min\{b_i, d_j\})$. That is, the DNF formula $\psi_d = \bigvee_{i=1}^{k}\bigvee_{j=1}^{l} \Big([a_i, b_i)\wedge[c_j, d_j)\Big)$ contains at most $k+l$ intervals, as the others are empty intervals. Since the size of the original $\psi_{NNF}$ is $k+l$, we have that $|\psi_d|\leq|\psi_{NNF}|$.
\end{enumerate}
To conclude, since $\psi_{NNF}$ is linear in the size of $\psi$, and the size of $\psi_d$ is at most the size of $\psi_{NNF}$, we have that the translation of $\psi$ into the DNF formula $\psi_d$ is linear.  
 \end{proof}

\subsection{Normalized Automata}
Neat automata stand in contrast to normalized ones. In a normalized SFA, there is at most one transition between every pair of states, which  allows for a succinct formulation of the condition to transit from one state to another.
On the other hand, this makes the predicates on the transitions {structurally} more complicated. Given a general SFA $\aut{M}$ with parameters $\langle n, m,l \rangle$, we can easily construct a normalized SFA $\aut{M}'$ as follows. For every pair of states $q$ and $q'$, construct a single edge labeled with the predicate $\bigvee _{\langle q, \varphi, q' \rangle \in\delta} \varphi$. Then, $\aut{M}'$ has size  $\langle n , \min (n^2,m),  \size{\class{P}}_{\vee^m}(l)  \rangle$, where we use $\size{\class{P}}_{\vee^m}(l)$ to denote the size of $m$ disjunctions of predicates of size at most $l$. 

Note that there is no unique minimal normalized automaton either, since in general a Boolean formula can have two semantically equivalent, yet syntactically different expressions in the underlying representation system, e.g., two distinct BDDs can represent the same formula. 
 However, in \autoref{sec:special_boolean_operations} we show that over monotonic algebras there is a  canonical minimal normalized SFA.

The complexity of $\sat{\class{P}}(l)$ for general formulas (corresponding to normalized SFAs) is usually exponentially higher than for basic predicates (and thus for neat SFAs). 
In addition, as we show above, generating a normalized automaton is an easy operation. This motivates working with neat automata, and generating normalized automata as a last step, if desired (e.g., for presenting a graphical depiction of the automaton). 

\subsection{Feasible Automata}\label{sec:feasible_automata}
The motivation for feasible automata is clear; if the automaton contains unsatisfiable transitions, then its size is larger than necessary, and the redundancy of transitions makes it less interpretable. Thus, infeasible \sfa s add complexity both algorithmically and for the user, as they are more difficult to understand. In order to generate a feasible SFA from a given SFA $\aut{M}$, we need to traverse the transitions of $\aut{M}$ and  test the satisfiability of each transition. 
The parameters $\langle n,m,l   \rangle$ of the SFA remain the same since 
there is no change in the set of states, and there might be no change in transitions as well (if they are all satisfiable). 

In the following, we usually assume that the automata are feasible, and when applying algorithms, we require the output to be feasible as well.

\section{Complexity of standard automata procedures on SFAs}\label{sec:procedures}
In this section we analyze the complexity of automata procedures on SFAs, in terms
of their effect on the parameters $\langle n,m, l\rangle$. We start in \autoref{sec:boolean_operations} with
examining general SFAs, and then in \autoref{sec:special_boolean_operations} discuss the effects on special SFAs.

\subsection{Complexity of Automata Procedures for General SFAs}\label{sec:boolean_operations}
We turn to discuss Boolean operations, determinization and minimization, and decision procedures (such as emptiness and equivalence) for the different types of SFAs. For intersection and union, the product construction of SFAs was studied in~\cite{DBLP:conf/icst/VeanesHT10,DBLP:conf/vmcai/HooimeijerV11}. There, the authors assume a normalized SFAs as an input, and do not delve into 
the effect of the construction on the number of transitions and the complexity of the resulting predicates. Determinization of SFAs was studied in~\cite{DBLP:conf/icst/VeanesHT10}, and~\cite{DBLP:conf/popl/DAntoniV14} study minimization of SFAs,  
assuming the given SFA is~normalized.

Table~\ref{table:operations} shows the sizes of the SFAs resulting from the mentioned operations, in terms of $\langle n, m,l \rangle$. The analysis applies to all types of SFAs, not just normalized ones.
The time complexity for each operation is given in terms of the parameters $\langle n,m,l  \rangle$ and the complexity of feasibility tests for the resulting SFA, as discussed in \autoref{sec:feasible_automata}. 
Table~\ref{table:decision} summarizes the time complexity of decision procedures for SFAs: emptiness, inclusion, and membership. Again, the analysis applies to all types of SFAs. 
We note that in many applications of learning in verification, the challenging part is implementing the teacher (e.g., in~\cite{DBLP:journals/fmsd/PasareanuGBCB08,ChocklerKKS20,DBLP:conf/tacas/FrenkelGPS20,DBLP:journals/sttt/FrenkelGPS22}). In
such cases the complexity of  membership and equivalence queries as well as standard automata operations plays a major role.

In both tables we consider two SFAs  $\aut{M}_1$ and $\aut{M}_2$ with parameters $\langle n_i, m_i, l_i \rangle$ for $i=1,2$, over algebra $\alge{A}$ with predicates $\class{P}$. 
We use $\size{\class{P}}_{\wedge^m}(l)$ for an upper bound on the size of $m$ conjunctions of predicates of size at most $l$. 
All SFAs are assumed to be deterministic, except of course for the input for determinization.

\begin{table}[t]
\centering
\begin{tabular}{|c|c|}
\hline
\textbf{Operation} & $\mathbf{\langle n, m, l \rangle } $\\ \hline 
  product construction $\aut{M}_1$, $\aut{M}_2$  &  $\langle n_1 \times n_2,\ m_1 \times m_2,\ \size{\class{P}}_{\wedge}(l_1, l_2) \rangle$

      \\[1mm]
     complementation of deterministic $\aut{M}_1$\tablefootnote{For complementation, no feasibility check is needed, since we assume a feasible input.
     }    &  $\langle n_1+1 ,\ m_1+1,\ \size{\class{P}}_{\neg}(\size{\class{P}}_{\vee^{m_1}}(l_1)) \rangle$
     \\[1mm]
     determinization of $\aut{M}_1$    &  $\langle 2^{n_1},\ 2^{m_1},\ \size{\class{P}}_{\wedge^{n_1\times m_1}}(l_1)
 \rangle$ \tablefootnote{To determinize transitions, conjunction may be applied $n_1\times m_1$ times, according to the number of states that correspond to a new deterministic state.
 }
 \\[1mm]
  minimization of $\aut{M}_1$    &  $\langle n_1, m_1,\ \size{\class{P}}_{\wedge^{m_1}}(l_1)
 \rangle$
     
     \\ \hline
\end{tabular}
  
\caption{Analysis of standard automata procedures on SFAs. 
}
\label{table:operations}
\end{table}

\begin{table}[t]
\centering
\begin{tabular}{|c|c|}  \hline 
\textbf{Decision Procedures} & \textbf{Time Complexity}\\ 
\hline  
  emptiness &  linear in $n, m$  
  \\[1mm]
  emptiness + feasibility &   $n\times m \times \sat{\class{P}}(l)$
  \\[1mm]

  membership of $\gamma_1 \cdots \gamma_t\in\mathbb{D}^*$ & 
  $\sum_{i=1}^t  \sat{\class{P}}(         \size{\class{P}}_{\wedge}(l, |\psi_{\gamma_i}|)) $ \tablefootnote{Where $\psi_{\gamma_i}$ is a predicate describing $\gamma_i$.}
  \\[1mm]
  inclusion $\aut{M}_1\subseteq\aut{M}_2$ & \makecell{
  $(n_1\times n_2) \times (m_1\times m_2 )\times \sat{\class{P}} (\size{\class{P}}_{\wedge}(l_1, l_2))$ 
  }
  \\[1mm]
 \hline
\end{tabular}
 
\caption{Analysis of time complexity of decision procedures for SFAs.
}
\label{table:decision}

\end{table}

We now briefly describe the algorithms we analyze in both tables.

\paragraph{Product Construction~\cite{DBLP:conf/icst/VeanesHT10,DBLP:conf/vmcai/HooimeijerV11}}  
The product construction for SFAs is similar to the product of DFAs --- the set of states is the product of the states of $\aut{M}_1$ and $\aut{M}_2$; and a transition is a synchronization of transitions of $\aut{M}_1$ and $\aut{M}_2$. That is, a transition from $\langle q_1, q_2 \rangle$ to $\langle p_1, p_2 \rangle$ 
can be made while reading a concrete letter $\gamma$, 
iff $\langle q_1, \psi_1, p_1 \rangle\in \Delta_1$ and $\langle q_2, \psi_2, p_2 \rangle\in \Delta_2$ and $\gamma$ satisfies both $\psi_1$ and $\psi_2$. Therefore, the predicates labeling transitions in the product construction are conjunctions of predicates from the two SFAs $\aut{M}_1$ and~$\aut{M}_2$.

\paragraph{Complementation} 
In order to complement a deterministic SFA $\aut{M}_1$, we first need to make $\aut{M}_1$ complete. In order to do so, we add one state which is a non-accepting sink, and from each state we add at most one transition which is the negation of all other transitions from that state.   
If $\aut{M}_1$ is complete, then complementation simply switches accepting and non-accepting states, resulting in the same parameters $\langle n_1, m_1, l_1 \rangle$.

\paragraph{Determinization~\cite{DBLP:conf/icst/VeanesHT10}} 
In order to make an SFA deterministic, the algorithm of~\cite{DBLP:conf/icst/VeanesHT10} uses the subset construction for DFAs, resulting in an exponential blowup in the number of states. However, in the case of SFAs this is not enough, and the predicates require special care. Let
 $P = \{ q_1, \cdots, q_t \}$ 
 be a state in the deterministic SFA, where $q_1, \ldots, q_t$ are states of the original SFA $\aut{M}_1$, and let $\psi_1, \ldots, \psi_t$ be some predicates labelling outgoing transitions from $q_1, \ldots q_t$, correspondingly. 
 Then, in order to determinize transitions, the algorithm of \cite{DBLP:conf/icst/VeanesHT10} 
computes the conjunction $\bigwedge_{i=1}^t \psi_i$, which labels a single transition from the state $P$.

\paragraph{Minimization~\cite{DBLP:conf/popl/DAntoniV14}} 
Given a deterministic SFA $\aut{M}_1$, the output of minimization is an equivalent deterministic SFA with a minimal number of states. When constructing such an SFA, the number of states and transitions cannot grow. However, as in determinization, if two states of $\aut{M}_1$ are replaced with one state, then outgoing transitions might overlap, resulting in a non-deterministic SFA. 
D'Antoni and Veanes~\cite{DBLP:conf/popl/DAntoniV14} suggest several algorithms to cope with this difficulty. One of their approaches is to compute \emph{minterms}, which are the smallest conjunctions of outgoing transitions. Minterms then do not intersect, and thus the output is deterministic. 
Their other approaches avoid computing minterms, but are able to achieve the same goal.

\paragraph{Emptiness}  
If we assume a feasible SFA $\aut{M}$ as an input, then in order to check for emptiness we need to find an accepting state which is reachable from the initial state (as in DFAs). If we do not assume a feasible input, we need to test the satisfiability of each transition, thus the complexity depends on the complexity measure $\sat{\class{P}}(l)$.

\paragraph{Membership}  
Similarly to emptiness, in order to check if a concrete word $\gamma_1 \cdots \gamma_n$ is in $\aut{L}(\aut{M})$, we need
not only check if it reaches an accepting state but also
locally consider the satisfiability of each transition. In the case of membership, we need to check whether the letter $\gamma_i$ satisfies the predicate on the corresponding transition.
  
\paragraph{Inclusion}  
Deciding inclusion amounts to checking emptiness and feasibility of $\aut{M}_1  \cap \overline{\aut{M}_2}$. We assume here that both $\aut{M}_1$ and $\aut{M}_2$ are deterministic and complete.

\subsection{Complexity of Automata Procedures for Special SFAs}\label{sec:special_boolean_operations}
We now discuss the advantages of neat SFAs and of monotonic algebras, in the context of the algorithms presented in the tables, and show that, in general, they are more efficient to handle compared to other SFA types.

\subsubsection{Neat SFAs}
 As can be observed from Table~\ref{table:decision}, almost all decision procedures regarding SFAs depend on $\sat{\class{P}}(l)$.
For neat SFAs it is more precise to say that they depend on $\sat{\class{P}_0}(l)$, 
namely on the satisfiability of atomic predicates rather than arbitrary predicates.
Since $\sat{\class{P}_0}(l)$ is usually less costly than $\sat{\class{P}}(l)$, most decision procedures are more efficient 
on neat automata.
Here, we claim that applying automata algorithms on neat SFAs preserves their neatness, thus suggesting that neat SFAs may be preferable in many applications.

\begin{lem}
Let $\aut{M}_1$ and $\aut{M}_2$ be neat SFAs. Then the algorithms for their product construction, complementation, determinization and minimization
 discussed in \autoref{sec:boolean_operations} result in a neat SFA.
\end{lem}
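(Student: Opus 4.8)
The plan is to verify neatness --- i.e.\ that every transition predicate is \emph{basic} (a conjunction of atomic predicates) --- separately for each of the four operations, relying on a single unifying observation for the non-trivial cases: whenever an operation produces a transition whose predicate is a Boolean combination of basic predicates, that predicate can be rewritten in DNF as $\bigvee_j \beta_j$ with each $\beta_j$ basic, and in a neat SFA a single transition labeled by $\bigvee_j \beta_j$ may be replaced by one transition per disjunct $\beta_j$ without changing the language. Since neatness constrains only the \emph{shape} of each predicate and not the number of transitions, this splitting preserves neatness. Thus it suffices to show that each operation yields predicates that are Boolean combinations of the (basic) input predicates whose DNF disjuncts are themselves basic.

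First I would dispatch the two easy operations. In the product construction the predicate labeling a transition from $\langle q_1,q_2\rangle$ to $\langle p_1,p_2\rangle$ is $\psi_1\wedge\psi_2$, where $\psi_1,\psi_2$ are transition predicates of $\aut{M}_1,\aut{M}_2$; a conjunction of two conjunctions of atomic predicates is again a conjunction of atomic predicates, hence basic, and no splitting is needed. In determinization, the predicate labeling a transition out of a subset-state is the conjunction $\bigwedge_{i=1}^t \psi_i$ of outgoing predicates of the merged original states; a conjunction of basic predicates is basic, so neatness is again immediate.

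The two remaining operations are where negation enters, and this is the main obstacle. For complementation we must first complete the SFA, adding from each state $q$ a transition to the sink labeled $\neg\bigvee_i \psi_i = \bigwedge_i \neg\psi_i$; because each $\psi_i$ is a conjunction of atomic predicates, $\neg\psi_i$ is a disjunction of negated atomic predicates, which is in general not basic. Here I would invoke the structure of the atomic predicates: it suffices that the negation of an atomic predicate be expressible as a disjunction of atomic predicates. For the propositional algebra this is immediate, since $\neg p_i$ is itself atomic; for a monotonic algebra it is exactly Lemma~\ref{obsrv:basic_monotonic}, which gives $\neg[a,b)$ as a disjunction of at most two atomic predicates. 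Under this assumption, distributing $\bigwedge_i \neg\psi_i$ into DNF produces disjuncts that are conjunctions of atomic predicates, i.e.\ basic, and the unifying splitting step yields a neat complete SFA; switching accepting and non-accepting states then leaves the predicates untouched. Minimization is handled the same way: merging equivalent states can make outgoing transitions overlap, and the minterm-based step of D'Antoni and Veanes computes conjunctions of the outgoing predicates and their negations, so each minterm is again a Boolean combination of basic predicates and the DNF-plus-splitting argument applies verbatim.

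I expect the genuine difficulty to be precisely the negation appearing in completion and in minterm computation, since the negation of a basic predicate is not basic in general; the whole argument hinges on re-expressing such predicates as disjunctions of basic predicates and then exploiting the fact that neatness places no bound on the number of transitions. It is worth flagging the cost of this move: the splitting can blow up the out-degree $m$ exponentially for a general algebra, whereas over a monotonic algebra Lemma~\ref{lemma:dnf_monotonic} keeps the DNF --- and hence the number of resulting transitions --- linear, matching the efficiency claims made for monotonic algebras elsewhere in this section.
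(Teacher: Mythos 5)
Your proposal is correct in substance but takes a noticeably more careful route than the paper. The paper's entire proof is a one-line observation: the product, determinization and minimization constructions ``use only conjunctions'' to build output predicates, so conjunctions of basic predicates stay basic --- exactly your argument for the two easy cases. Strikingly, the paper's proof does not mention complementation at all, even though it appears in the statement, and it describes minterms only as ``conjunctions of outgoing transitions.'' You are right that negation genuinely enters in both places (the completion step labels the sink transition with $\neg\bigvee_i\psi_i=\bigwedge_i\neg\psi_i$, and disjoint minterms require negated conjuncts), and your DNF-plus-transition-splitting repair is sound and is precisely the transformation-to-neat of \autoref{sec:transforming-neat}. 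The cost of your extra rigor is an extra hypothesis: you need the negation of an atomic predicate to be a disjunction of atomic predicates, which holds for the propositional algebra (negated propositions are atomic by definition) and for monotonic algebras (Lemma~\ref{obsrv:basic_monotonic}) but is not guaranteed for an arbitrary effective Boolean algebra, so your argument for complementation and for the minterm variant of minimization is conditional on that closure property. This is not really a defect of your proof so much as a gap in the lemma as stated --- the paper's proof simply avoids the issue by silently restricting attention to the conjunction-only constructions. Your closing remark about the exponential growth of the out-degree is accurate but immaterial here, since, as you note, neatness constrains only the shape of predicates and not $m$.
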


\begin{proof}
Observing the procedures for product construction~\cite{DBLP:conf/icst/VeanesHT10,DBLP:conf/vmcai/HooimeijerV11}, determinization~\cite{DBLP:conf/icst/VeanesHT10} and minimization~\cite{DBLP:conf/popl/DAntoniV14} constructions, one can see that they use only conjunctions in order to construct the predicates on the output SFAs. Thus, if the predicates on the input SFAs are basic, then so are the output predicates. 
\end{proof}

\subsubsection{Monotonic Algebras}
We now consider the class $\class{M}_{\alge{A}_{mon}}$ of SFAs over a monotonic algebra $\alge{A}_{mon}$ with predicates $\class{P}$.
We first discuss $\size{\class{P}}_{\wedge}(l_1, l_2)$ and  $\sat{\class{P}}(l)$, as they are essential measures in automata operations.
Then we show that for $\aut{M}_1$ and $\aut{M}_2$ in the class $\class{M}_{\alge{A}_{mon}}$, the product construction is linear in the number of transitions, adding to the efficiency of SFAs over monotonic algebras.

\begin{lem}\label{lem:conjunction-in-monotonic}
    Let $\psi_1$ and $\psi_2$ be formulas over a monotonic algebra $\alge{A}_{mon}$. Then $\size{\class{P}}_{\wedge}(|\psi_1|,$ $ |\psi_2|)$ is linear in $|\psi_1|+|\psi_2|$ and  $\sat{\class{P}}(|\psi_1|)$ is linear in $|\psi_1|$. 
\end{lem}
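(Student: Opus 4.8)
The plan is to reduce both claims to the structural result about DNF formulas over monotonic algebras that was just established in Lemma~\ref{lemma:dnf_monotonic}, together with the interval arithmetic developed in its proof. The key observation is that over a monotonic algebra every predicate is semantically a finite union of disjoint atomic intervals, and Lemma~\ref{lemma:dnf_monotonic} guarantees we can compute such a representation of size linear in $|\psi|$.

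For the conjunction bound, I would first invoke Lemma~\ref{lemma:dnf_monotonic} to obtain equivalent DNF formulas $\psi_{1d} = \bigvee_{i=1}^k [a_i,b_i)$ and $\psi_{2d} = \bigvee_{j=1}^l [c_j,d_j)$ with $k \leq |\psi_1|$ and $l \leq |\psi_2|$, where the constituent intervals in each disjunction are pairwise disjoint. Then I would reuse exactly the distribution argument from case (2) of the proof of Lemma~\ref{lemma:dnf_monotonic}: distributing the conjunction over the two disjunctions yields $\bigvee_{i,j}\big([a_i,b_i)\wedge[c_j,d_j)\big)$, each term simplifying by Lemma~\ref{obsrv:basic_monotonic} to the single interval $[\max\{a_i,c_j\},\min\{b_i,d_j\})$. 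The disjointness of the $[a_i,b_i)$ and of the $[c_j,d_j)$ forces all but at most $k+l$ of these $kl$ intervals to be empty, so the resulting DNF for $\psi_1 \wedge \psi_2$ has size at most $k+l \leq |\psi_1| + |\psi_2|$. This establishes that $\size{\class{P}}_{\wedge}(|\psi_1|,|\psi_2|)$ is linear in $|\psi_1|+|\psi_2|$, and it also shows that the conjunction is actually computable in time polynomial (indeed near-linear, after sorting the endpoints) in the input sizes.

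For the satisfiability bound, I would again apply Lemma~\ref{lemma:dnf_monotonic} to rewrite $\psi_1$ as an equivalent DNF $\bigvee_{i=1}^k [a_i,b_i)$ of size linear in $|\psi_1|$. A disjunction is satisfiable iff some disjunct is, and by the convention in the definition of monotonic algebras an atomic interval $[a,b)$ is satisfiable precisely when $a < b$ (when $b \leq a$ the predicate equals $\bot$). Hence deciding satisfiability of $\psi_1$ reduces to scanning the at most $k \leq |\psi_1|$ intervals of its DNF and checking the single inequality $a_i < b_i$ for each, which is linear in $|\psi_1|$.

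The main obstacle to watch is the tacit assumption that the linear-size DNF itself is produced by a \emph{linear-time} procedure, not merely that it \emph{exists} with linear size: Lemma~\ref{lemma:dnf_monotonic} as stated is phrased about the size of the output formula, whereas the complexity claim here concerns running time, so I must make sure the constructive recursion in that proof (in particular the interval-intersection step, which may require keeping the endpoints sorted to identify the nonempty products cheaply) can be carried out within the stated linear bound, or else settle for a polynomial bound and note the gap. Everything else is routine interval bookkeeping once the DNF normal form is in hand.
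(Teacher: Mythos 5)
Your proposal is correct and follows essentially the same route as the paper: invoke Lemma~\ref{lemma:dnf_monotonic} to pass to linear-size DNFs, observe that the conjunction of two disjoint-interval DNFs of sizes $k$ and $l$ collapses to at most $k+l$ nonempty intervals, and reduce satisfiability to the per-interval check $a<b$. Your closing caveat about output size versus construction time is a fair point of rigor, but the paper's own proof glosses over it in the same way, so it does not constitute a different argument.
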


\begin{proof}
Transforming to DNF is linear, as follows from \autoref{lemma:dnf_monotonic}. There, we show that the conjunction of two DNF formulas of sizes $k$ and $l$ has size $k+l$, which implies that the conjunction of general formulas has linear size.
In addition, $\sat{\class{P}}(l)$ is trivial for a single interval,
and following Lemma~\ref{lemma:dnf_monotonic}, is linear for general formulas.
For an interval $[a,b)$, satisfiability checking  amounts to the question ``is $a<b$?''. 
\end{proof}
 
\begin{lem}\label{lemma:product monotonic}
    Let $\aut{M}_1$ and $\aut{M}_2$ be deterministic SFAs over a monotonic algebra $\alge{A}_{mon}$. Then the out-degree of their product SFA $\aut{M}$ is at most $m = 2\cdot (m_1 + m_2)$.
\end{lem}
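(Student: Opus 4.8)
The plan is to work one product state at a time and to cash in the total order on $\dom{D}$ that monotonicity provides. Fix a reachable product state $\langle q_1,q_2\rangle$ and recall that every outgoing transition of the product is labelled by a conjunction $\psi_1\wedge\psi_2$, where $\psi_1$ labels an outgoing transition of $q_1$ in $\aut{M}_1$ and $\psi_2$ labels an outgoing transition of $q_2$ in $\aut{M}_2$. Over a monotonic algebra each transition predicate is an atomic predicate of the form $[a,b)$ (or, by Lemma~\ref{lemma:monotonic_to_neat}, may be taken to be one), and by Lemma~\ref{obsrv:basic_monotonic} the conjunction of two such intervals is again a single interval; concretely $[a_1,b_1)\wedge[a_2,b_2)=[\max\{a_1,a_2\},\min\{b_1,b_2\})$. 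Hence each \emph{feasible} outgoing transition of $\langle q_1,q_2\rangle$ is a single interval, and infeasible conjunctions (those with $\min\{b_1,b_2\}\le\max\{a_1,a_2\}$) are discarded.

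First I would record what determinism buys us. Since $\aut{M}_1$ is deterministic, the at most $m_1$ intervals labelling the outgoing transitions of $q_1$ have pairwise disjoint semantics, so they are genuinely disjoint intervals on the ordered line $\dom{D}$ and contribute at most $m_1$ distinct left endpoints (and $\le 2m_1$ endpoints in total); symmetrically the outgoing intervals of $q_2$ are pairwise disjoint and contribute at most $m_2$ left endpoints. The key structural observation is that the feasible product intervals $[\max\{a_i,c_j\},\min\{b_i,d_j\})$ are themselves pairwise disjoint: if two of them came from distinct index pairs, then either the two $q_1$-intervals or the two $q_2$-intervals involved are disjoint, forcing the two intersections apart.

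The core is then a short counting step. I would map each nonempty product interval to its left endpoint $\max\{a_i,c_j\}$, which is either a left endpoint of a $q_1$-interval or a left endpoint of a $q_2$-interval, and therefore belongs to the set $B$ of all such endpoints with $|B|\le 2(m_1+m_2)$. This map is injective, because two disjoint nonempty intervals cannot share a left endpoint. Consequently the number of feasible outgoing transitions of $\langle q_1,q_2\rangle$ is at most $|B|\le 2(m_1+m_2)$, and since $\langle q_1,q_2\rangle$ was an arbitrary product state this bounds the out-degree $m$ of $\aut{M}$. (In fact injecting only into left endpoints already yields the sharper $m_1+m_2$; I state the looser $2(m_1+m_2)$ to leave room for the bookkeeping below.)

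The main obstacle I anticipate is the reduction to single intervals together with the endpoint bookkeeping. I must justify that treating each transition as one interval is lossless, which is exactly where Lemma~\ref{obsrv:basic_monotonic} and Lemma~\ref{lem:conjunction-in-monotonic} enter, and where I rely on feasibility to prune the empty conjunctions. I also have to check that coincident endpoints and the extremal values $\dmin,\dmax$ do not inflate the count; since coinciding endpoints only merge elementary segments, they can only decrease the number of distinct product intervals, so the bound $2(m_1+m_2)$ remains safe.
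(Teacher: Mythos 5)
Your proof is correct and follows essentially the same route as the paper's: reduce each transition to a single interval, observe that the conjunction of two intervals is the interval whose left endpoint is the maximum of the two left endpoints, and injectively map each feasible product transition to such a left endpoint. One bookkeeping remark: the factor of $2$ in the stated bound actually arises because converting a general SFA over a monotonic algebra to neat form can double the out-degree (from $m_i$ to $2m_i$, per Lemma~\ref{lemma:monotonic_to_neat} and Lemma~\ref{lemma:dnf_monotonic}), not from counting right endpoints as well; consequently your parenthetical sharper bound $m_1+m_2$ is only valid when $\aut{M}_1$ and $\aut{M}_2$ are already neat.
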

 
 \begin{proof}
From Lemma~\ref{lemma:monotonic_to_neat} and Lemma~\ref{lemma:dnf_monotonic}, 
we can construct neat SFAs $\aut{M}'_1$ and $\aut{M}'_2$ of sizes $\langle n_i, 2m_i,$ $ l_i\rangle$ for $i\in\{1,2\}$, that have the same languages as $\aut{M}_1$ and $\aut{M}_2$, respectively.
Similarly to the proof of Lemma~\ref{lemma:dnf_monotonic}, 
each transition $\langle \langle q_1, q_2 \rangle,  [a,b)\wedge [c,d) , \langle p_1, p_2 \rangle \rangle$ in the product SFA results in a predicate $[\max\{a,c\},\allowbreak \min\{b,d\})$. Then, for $q_1\in Q_1$, every minimal element in the set of $q_1$'s outgoing transitions can define at most one transition in $\aut{M}$, and the same for a state $q_2\in Q_2$, and so the number of transitions from $\langle q_1, q_2 \rangle$ is at most $m_1 + m_2$, as required. 
\end{proof}

\begin{lem}\label{lemma:complete_monotonic}
Let $\aut{M}$ be a neat SFA over a monotonic algebra. Then, transforming $\aut{M}$ into a complete SFA $\aut{M'}$ is polynomial in the size of $\aut{M}$.
\end{lem}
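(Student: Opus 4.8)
The plan is to exploit the fact that, over a monotonic algebra, every basic predicate collapses to a single interval, so that making $\aut{M}$ complete reduces to the elementary task of complementing a union of intervals on the totally ordered domain $\dom{D}$.

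First I would normalize the transition predicates. By Lemma~\ref{obsrv:basic_monotonic}, the conjunction of atomic predicates of a monotonic algebra is again an atomic predicate; hence each basic predicate labelling a transition of the neat SFA $\aut{M}$ is semantically equal to a single interval $[a,b)$, where $a$ is the maximum of the lower endpoints and $b$ the minimum of the upper endpoints of its conjuncts. This equivalent interval is computable in time linear in the size of the predicate, so in one linear pass over $\Delta$ we may assume every outgoing transition is labelled by exactly one interval.

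Next, for each state $q$ I would compute the set of letters that $q$ fails to cover. Let $[a_1,b_1),\dots,[a_k,b_k)$ with $k\le m$ be the intervals on the outgoing transitions of $q$. Using the total order on $\dom{D}$ together with the extremal elements $\dmin,\dmax$ guaranteed by monotonicity, I would sort the $2k$ endpoints and sweep through them; the complement $\dom{D}\setminus\bigcup_{j} [a_j,b_j)$ is then a disjoint union of at most $k+1\le m+1$ intervals, each again an atomic predicate. This step costs $O(m\log m)$ per state, which together with the linear predicate-normalization above gives $O\bigl(n\,(m\log m + ml)\bigr)$ in total, and is thus polynomial in $\langle n,m,l\rangle$.

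Finally I would add a single fresh non-accepting sink state $q_{\mathrm{sink}}$, equipped with a self-loop labelled $\top$ so that $q_{\mathrm{sink}}$ is itself complete, and for each state $q$ add one transition to $q_{\mathrm{sink}}$ for every interval in the complement computed above. The resulting SFA $\aut{M'}$ then has $n+1$ states, out-degree at most $2m+1$, and unchanged largest predicate size, and it is complete by construction since the original and the complement intervals together cover all of $\dom{D}$ at every state. It remains neat because every newly added predicate is a single interval, hence atomic and therefore basic. The main obstacle, and the only point demanding care, is the boundary bookkeeping of the complementation: because the intervals are half-open and $\dom{D}$ carries the distinguished extremal elements $\dmin$ and $\dmax$, one must argue precisely that the complement of $k$ intervals is expressible as at most $k+1$ intervals of the form $[a,b)$ (with $\top$ handling the degenerate case of an empty cover), so that no element — in particular $\dmax$ — is left uncovered.
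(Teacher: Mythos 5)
Your proposal is correct and follows essentially the same route as the paper: add a fresh non-accepting sink and, for each state, cover the at most $m+1$ gaps between (and around) the outgoing intervals with transitions to the sink, yielding a polynomial blow-up of at most $m+1$ extra transitions per state. The only difference is that you make explicit two steps the paper leaves implicit — collapsing each basic predicate to a single interval via Lemma~\ref{obsrv:basic_monotonic} and sorting the endpoints before sweeping — which is a reasonable tightening rather than a departure.
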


\begin{proof}
In order to complete $\aut{M}$, we add a non-accepting sink $r$ in case it does not already exist, and at most $m+1$ transitions from each state $q$ to $r$, when $m$ is the out-degree of the SFA:
Let $[ a, b )$ and $[ c, d )$ be two predicates labeling outgoing transitions of $q$, where
$c$ is the minimal left end-point of a predicate such that
$b<c$. Then, in order to complete~$\aut{M}$, we need to add a transition to the sink, labeled by the predicate $[b, c)$. In addition, for the predicate $[a, b)$ where there is no $c>b$ that defines another predicate, if $b\neq \dmax$ we add $[b, \dmax)$, and similarly we add $[\dmin, a)$. Then, for each state we add at most $m+1$ new transitions, 
resulting in at most $|Q|\times (m+1)$ new transitions. 
\end{proof}

\begin{defi}\label{def:canon}
For predicates over a monotonic algebra, we define a \emph{canonical representation} of a predicate $\psi$ as the simplified DNF formula which is the disjunction of all maximal disjoint intervals satisfying $\psi$. 
\end{defi}

Note that every predicate $\psi$ over a monotonic algebra 
defines a unique partition of the domain into maximal disjoint intervals. This unique partition corresponds to a simplified DNF formula, which is exactly the canonical representation of $\psi$. 

\begin{exa}
The canonical representation of $\psi = [0, 100) \wedge( [50, 150) \vee [20, 40) )$ is $[20, 40) \vee [50, 100)$.
\end{exa}

\begin{lem}\label{lemma:canonical}
Let $\aut{M}$ be an SFA over a monotonic algebra. Then: 
\begin{enumerate}
    \item There is a unique minimal-state neat SFA $\aut{M'}$ such that $\aut{L}(\aut{M})= \aut{L}(\aut{M'})$. \label{item:unique} 
    \item There is a canonical minimal-state normalized SFA $\aut{M''}$ such that $\aut{L}(\aut{M})= \aut{L}(\aut{M''})$. \label{item:canonic} 
\end{enumerate}
\end{lem}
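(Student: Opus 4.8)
The plan is to build the canonical automaton from the Myhill--Nerode right congruence of the recognized language and then to canonicalize its transition predicates via Definition~\ref{def:canon}. Write $L=\lang{\aut{M}}\subseteq\dom{D}^*$ and let $\sim_L$ be the right congruence defined by $u\sim_L v$ iff $uw\in L\Leftrightarrow vw\in L$ for every $w\in\dom{D}^*$. Since $L$ is recognized by $\aut{M}$, viewed as a finite-state machine over the concrete alphabet $\dom{D}$ (determinizing first if $\aut{M}$ is non-deterministic), $\sim_L$ has finitely many classes. First I would take the quotient automaton: its states are the $\sim_L$-classes, its initial state is $[\varepsilon]$, a class $[u]$ is accepting iff $u\in L$, and the concrete transition function is $\delta([u],a)=[ua]$. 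This is exactly the minimal complete DFA over $\dom{D}$ recognizing $L$, so its number of states $n$ is the unique minimum over all SFAs for $L$, and the quotient is unique up to isomorphism.

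Next I would argue that this concrete transition function is realizable over the monotonic algebra, i.e.\ that it is an SFA. For each ordered pair of classes $(p,q)$ set $S_{p,q}=\{a\in\dom{D}:\delta(p,a)=q\}$; for fixed $p$ these sets partition $\dom{D}$. Because the quotient is obtained from $\aut{M}$ by determinization and state-merging, each $S_{p,q}$ arises from the sets $\sema{\psi}$ for the finitely many predicates $\psi$ occurring in $\aut{M}$ by finitely many applications of the Boolean connectives (this is precisely what the determinization and minimization constructions of \autoref{sec:boolean_operations} do to predicates); as $\class{P}$ is closed under these connectives, $S_{p,q}=\sema{\varphi_{p,q}}$ for some $\varphi_{p,q}\in\class{P}$, and over a monotonic algebra $\sema{\varphi_{p,q}}$ is a finite union of intervals. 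For each nonempty $S_{p,q}$ I would then replace $\varphi_{p,q}$ by its canonical representation from Definition~\ref{def:canon}, namely the disjunction of the maximal disjoint intervals whose union is $S_{p,q}$, which by the remark following that definition is unique.

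Having fixed the canonical predicates, both automata are read off directly. For the normalized automaton $\aut{M''}$ I keep a single transition $\langle p,\varphi_{p,q},q\rangle$ per pair with $S_{p,q}\neq\emptyset$; for the neat automaton $\aut{M'}$ I split each such transition into one transition per maximal interval of $S_{p,q}$, each of which is a single interval and hence a basic (indeed atomic) predicate by Lemma~\ref{obsrv:basic_monotonic}, so $\aut{M'}$ is neat. Both recognize $L$ since they agree with the quotient on every concrete letter, and both attain the minimal state count $n$. If the chosen convention forbids a non-accepting sink, I would additionally delete the dead class together with the transitions into it and complete on demand using Lemma~\ref{lemma:complete_monotonic}; this does not affect the argument.

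Uniqueness is then immediate and is where monotonicity is essential. The state set is forced up to isomorphism by Myhill--Nerode, each $S_{p,q}$ is forced by $L$ once the states are fixed, and the maximal-interval decomposition of $S_{p,q}$ is unique by the remark after Definition~\ref{def:canon}; this pins down $\aut{M''}$ entirely. It pins down $\aut{M'}$ as well, once we take as the \emph{minimal} neat SFA the one in which no two transitions between the same pair of states can be merged into a single interval: over a monotonic algebra two adjacent intervals $[a,b)$ and $[b,c)$ always merge into the interval $[a,c)$, so this reducedness condition is exactly what forces the maximal-interval partition and rules out the spurious splittings (e.g.\ replacing $[0,10)$ by $[0,5)$ and $[5,10)$) that leave the state count minimal yet break uniqueness. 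The main obstacle, and the only genuinely algebra-specific step, is this last point: showing that the canonical maximal-interval decomposition is the unique reduced way to present the forced transition sets as basic predicates. This fails over general algebras --- as the propositional-algebra example in \autoref{sec:transforming-neat} shows, a predicate may have several inequivalent minimal DNF decompositions --- but holds over monotonic algebras precisely because unions and negations of intervals collapse back to finite interval families admitting a unique maximal decomposition.
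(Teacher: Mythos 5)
Your proof takes essentially the same route as the paper's: Myhill--Nerode classes give the unique minimal state set, and the unique decomposition of each forced transition set into maximal disjoint intervals (Definition~\ref{def:canon}) gives the canonical transitions --- split into one transition per interval for the neat form, and disjoined in order of left endpoints for the normalized form. You additionally make explicit a point the paper leaves implicit, namely that uniqueness of the neat SFA needs transition-minimality (your reducedness condition) on top of state-minimality, since otherwise an interval could be split spuriously without changing the state count; this is a worthwhile clarification rather than a different approach.
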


\begin{proof}
For a language $\aut{L} = \aut{L}(\aut{M})$ for some SFA $\aut{M}$,
the minimal number of states in an SFA corresponds,  similarly to DFAs, to the number of equivalence classes in the
equivalence relation $N$ defined by $(u,v)\in N \Longleftrightarrow \forall z\in \dom{D}^*: (uz\in \aut{L} \Leftrightarrow vz\in\aut{L}) $~\cite{Myhill57,Nerode58}.
Indeed if $(u,v)\in N$ then there is no reason that reading them (from the initial state) should end up in different states, and if $(u,v)\notin N$ then
reading them (from the initial state) must lead to different states.

As for transitions, we have the following. 

\begin{enumerate}
    \item Let $\psi$ be a general predicate labeling a transition in $\aut{M}$. Then $\psi$ defines a unique partition of the domain into maximal
    disjoint intervals, which are exactly the transitions in a neat SFA. Then, the minimal state neat SFA is unique, where its transitions correspond exactly to these maximal disjoint intervals. 
    \item For normalized transitions, we can use Lemma~\ref{lemma:dnf_monotonic} to transform a general predicate labeling a transition to a DNF predicate in linear time. A DNF predicate over a monotonic algebra is in-fact a disjunction of disjoint intervals, where the construction of Lemma~\ref{lemma:dnf_monotonic} obtains the maximal disjoint intervals. 
    Then, to obtain a canonical representation, we order these intervals by order of their minimal elements.  \qedhere
\end{enumerate}
\end{proof}

\section{Learning SFAs}\label{sec:learning}

We turn to discuss the learnability of symbolic automata.
In grammatical inference, loosely speaking, we are interested in learning a class of languages $\class{L}$ over
an alphabet $\Sigma$, from examples which are words over $\Sigma$. Examples for classes of languages can be the
set of regular languages, the set of context-free languages, etc. A learning algorithm, aka a \emph{learner}, is expected
to output some concise representation of the language from a class of representations~$\class{R}$ for the class $\class{L}$.
For instance, in learning the class $\class{L}_{reg}$ of regular languages one might consider the class $\class{R}_\textsc{dfa}$ of DFAs,
or the class $\class{R}_\textsc{lin}$ of right linear grammars, since both are capable of expressing all regular languages.\footnote{The 
	class of regular languages was shown learnable via various representations including DFAs~\cite{Angluin87}, NFAs~\cite{BolligHKL09}, and AFAs (alternating finite automata)~\cite{AngluinEF15}.}  
We often say that a class of representations $\class{R}$ is learnable (or not) when we mean that class of languages $\class{L}$ is learnable (or not) via the class of representations~$\class{R}$.
The complexity of learning an unknown language $L\in\class{L}$ via $\class{R}$ is typically measured with respect to the size of the smallest representation $R_L\in\class{R}$ for~$L$.
For instance, when learning $\class{L}_{reg}$ via $\class{R}_{\textsc{dfa}}$ a learner is expected to output a DFA for an unknown language in time that is polynomial in the number 
of states of the minimal DFA for~$L$. 

In our setting we are interested in learning regular languages using as a representation a class of SFAs over a certain algebra. To measure complexity we must agree on how to measure the size of an SFA. Thus, as discussed in \autoref{sec:size} we represent the size of an SFA using the parameters $\langle n, m, l \rangle$  of the number of states of the SFA, the number of transitions, and the size of the largest  predicate. 
Another important factor  regarding size and canonical forms of SFAs, is the underlying algebra, specifically,
whether it is monotonic or~not.

\paragraph{Learning Paradigms}
The exact definition regarding learnability of a class depends on the \emph{learning paradigm}. 
In this
work we consider two widely studied paradigms: \emph{identification in the limit using polynomial time and data} and \emph{learning with membership and equivalence queries}. 
Their definitions are provided in sections~\ref{sec:learn_char_set} and~\ref{sec:query_learning}, respectively. 
Note that in general, a positive or negative result in one paradigm, does not imply the same result in another paradigm. We discuss this further in \autoref{sec:discuss}.

\paragraph{Basic SFAs} 
To provide results regarding the learnability of SFAs, we study classes of SFAs that contain all basic SFAs, defined as follows. 

\begin{defi}\label{def:non-trivial}
An SFA $\aut{M}$ over a Boolean Algebra $\alge{A}$ with a set of predicates $\class{P}$ is termed \emph{basic} if it is of the form 
$\aut{M}_\varphi=(\alge{A},\{q_\iota,q_{ac}, q_{rj}\},q_\iota ,\{q_{ac}\},\Delta)$ where $\varphi\in\class{P}$ and $\Delta=\{\langle q_\iota,\varphi,q_{ac}\rangle, \langle q_\iota,\neg \varphi,q_{rj}\rangle , \langle q_{rj},\top,q_{rj}\rangle, \langle q_{ac},\top,q_{rj}\rangle   \}$.
Note that $\aut{M}_\varphi$ 
accepts only  words of length one
consisting of a concrete letter satisfying $\varphi$, and it is minimal among all complete deterministic SFAs accepting this language (minimal in both number of states and number of transitions). 
\end{defi}

In the sequel, our results are regarding classes of SFAs that contain all basic SFAs $\aut{M}_\varphi$ for all $\varphi\in \class{P}$.

\section{Efficient Identifiability}\label{sec:learn_char_set}


While in the better-known setting of \emph{active learning} (namely, query learning with \mq s and \eq s) the learner can select any word and query about its membership in the unknown language, in \emph{passive learning} the 
learner is given a set of words, and for each word $w$ in the set, a label $b_w$ indicating whether $w$ is in the unknown language or not.
Formally, a \emph{sample} for a language~$L$ is a finite set~$\mathcal{S}$ consisting of labeled examples, that is, pairs of the form $\langle w, b_w  \rangle $ where $w$ is a word and $b_w\in \{ 0,1 \}$ is its label, satisfying that $b_w = 1$ if and only if $w\in L$. The words that are labeled~$1$ are termed \emph{positive} words, and those that are labeled~$0$ are termed \emph{negative} words. Note that if $L$ is recognized by an automaton~$\aut{M}$, we have that $\mathcal{S}\subseteq\hatlang{\aut{M}}$ (as defined in \autoref{sec:symbolic_automata}). If~$\mathcal{S}$ is a sample for~$L$ we often say that~$\mathcal{S}$ \emph{agrees with}~$L$. 
 Given two words $w, w'$, we say that $w$ and $w'$ are \emph{not equivalent} with respect to~$\mathcal{S}$, denoted
    $w \not\sim_\aut{S} w'$, iff there exists $z$ such that $\langle w z, b \rangle, \langle w' z, b' \rangle \in \aut{S}$ and $b\neq b'$. Otherwise we say that $w$ and $w'$ are equivalent  with respect to $\mathcal{S}$, and denote $w\sim_\aut{S} w'$.

Given a sample $\aut{S}$ for a language $L$ over a concrete domain $\dom{D}$, it is possible to construct a DFA that agrees with $\aut{S}$
in polynomial time. Indeed one can create  the \emph{prefix-tree automaton}, a simple automaton that accepts all and only the positively labeled words in the sample. Clearly the constructed automaton may not be the minimal automaton that agrees with~$\aut{S}$. There are several algorithms that infer such a minimal automaton, in particular the popular RPNI~\cite{RPNI},
that merges the states of the prefix-tree automaton and results in an automaton that may accept an infinite language.
 Obviously though, this procedure is not guaranteed to return an automaton for the unknown language, as the sample may not provide sufficient information. For instance if $L=aL_1 \cup bL_2$ and the sample contains only words starting with $a$, there is no way for the learner to infer $L_2$ and hence also~$L$ correctly. 
One may thus ask, given a language~$L$, what should a sample contain in order for a passive learning algorithm to infer $L$ correctly, and can such sample be of polynomial size with respect to a minimal representation (e.g., a DFA) for the language.

One approach to answer these questions is captured in the paradigm of \emph{identification in the limit using polynomial time and data}. This model was proposed by Gold~\cite{DBLP:journals/iandc/Gold78}, who also showed that it admits learning of regular languages represented by DFAs. We follow de la Higuera's more general definition~\cite{Higuera97}.\footnote{This paradigm may seem related to conformance testing. The relation between conformance testing for Mealy machines and automata learning of DFAs has been explored in~\cite{DBLP:conf/fase/BergGJLRS05}.} 
This definition requires that for any language~$L$ in a class of languages $\class{L}$ represented by $\class{R}$, there exists a sample $\aut{S}_L$ of size polynomial in the size of the smallest representation $R\in\class{R}$ of $L$ (e.g., the smallest DFA for $L$),  such that a valid learner can infer the unknown language $L$ from the information contained in $\aut{S}_L$. The set $\aut{S}_L$ is then termed a \emph{characteristic sample}.\footnote{De la Higuera's notion of characteristic sample is a core concept in grammatical inference, for various reasons. Firstly, it addresses shortcomings of several other attempts to formulate polynomial-time learning in the limit \cite{Angluin88,Pitt89}. Secondly, this notion has inspired the design of popular algorithms for learning formal languages such as, for example, the RPNI algorithm~\cite{RPNI}. Thirdly, it was shown to bear strong relations to a classical notion of machine teaching~\cite{GoldmanM96}; models of the latter kind are currently experiencing increased attention in the machine learning community~\cite{Zhu18}.} 
Here, a valid learner is an algorithm that learns the target language exactly and efficiently. In particular, a valid learner produces in polynomial time a representation that agrees with the provided sample. The learner also has to correctly learn the unknown language $L$ when given the characteristic sample $\aut{S}_L$ as input. Moreover, if the input sample $\aut{S}$ subsumes $\aut{S}_L$ yet is still consistent with $L$, the additional information in the sample should not ``confuse'' the learner; the latter still has to output  a correct representation for $L$.  (Intuitively, this requirement precludes situations in which the sample consists of some smart encoding of the representation that the learner simply deciphers. In particular, the learner will not be confused if an adversary ``contaminates'' the characteristic sample by adding labeled examples for the target language.)  We provide the formal definition after the following informal example.

\begin{exa}
For the class of DFAs, let us consider the regular language $L = a^*$ over the alphabet $\{a, b \}$. Further, consider a sample set $\mathcal{S} = \{ \langle  \epsilon, 1 \rangle,  \langle  a , 1 \rangle , \langle b, 0 \rangle, \langle  bb, 0 \rangle, \langle  ba, 0 \rangle \} $ for $L$. There is a valid learner for the class of all DFAs that uses the sample $\mathcal{S}$ as a characteristic sample for $L$. By definition, such a learner has to output a DFA for $L$ when fed with $\mathcal{S}$, but also has to output equivalent DFAs whenever given any superset of $\mathcal{S}$ as input, as long as this superset agrees with $L$. Naturally, the sample $\mathcal{S}$ is also consistent with the regular language $L'=\{\epsilon, a\}$. However, this does not pose any problem, since the same learner can use a characteristic sample for $L'$ that disagrees with $L$, for example, $\mathcal{S}'=\{\langle\epsilon,1\rangle,\langle a,1\rangle,\langle aa,0\rangle\}$. When defining a system of characteristic samples like that, the core requirement is that the size of a sample be bounded from above by a function that is polynomial in the size of the smallest DFA for the respective target language.
\end{exa}

\begin{defi}[identification in the limit using polynomial time and data~\cite{Higuera97}]\label{def:ident-limit}
    A~class of languages $\class{L}$ is said to be 
    \emph{identified in the limit using polynomial time and data} via representations in a class $\class{R}$ 
    if there exists a learning algorithm $\algor{A}$ such that the following two requirements are met. 
    \begin{enumerate}
        \item
        Given a finite sample $\mathcal{S}$ of labeled examples, $\algor{A}$ returns a hypothesis $\aut{R}\in\class{R}$ that agrees with $\mathcal{S}$ in polynomial time. 
        \item
        For every language $L\in\class{L}$, there exists a sample $\aut{S}_L$, termed \emph{a characteristic~sample}, of size polynomial in the minimal representation $\aut{R}\in\class{R}$ for $L$ such that the algorithm $\algor{A}$ returns a correct hypothesis
        when run on any sample $\mathcal{S}$ for $L$ that subsumes~$\mathcal{S}_L$.
    \end{enumerate}
\end{defi}
Note that the first condition ensures polynomial time and the second polynomial data. However, the latter is not a worst-case measure; the algorithm may fail to return a correct hypothesis on arbitrarily large finite samples (if they do not subsume a characteristic set). 

Note also that the definition does not require the existence of an efficient algorithm that constructs a characteristic sample for each language in the underlying class.
When such an algorithm is also available we say that the class is \emph{efficiently identifiable}. 
The following result shows that efficient identifiability does not trivially follow from identifiability; in fact it makes the much stronger statement that not even \emph{computability} of characteristic sets follows from their existence.

\begin{thm}\label{theorem:efficient}
There exists a class of languages that possesses polynomial-size characteristic sets, yet without the ability to construct such sets effectively. 
\end{thm}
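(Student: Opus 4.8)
The plan is to build an effectively indexed family of \emph{regular} languages $\{L_e\}_{e\in\mathbb{N}}$, presented to any would-be sample-constructor only through the index $e$ (equivalently, through a program that decides membership), and to encode the halting problem into the family in a way that keeps each individual language decidable. Fix a standard enumeration $M_1,M_2,\dots$ of Turing machines and work over the unary alphabet $\{a\}$. For each $e$ let $T_e$ denote the number of steps $M_e$ runs on the empty input before halting, and set $L_e=\{a^{T_e}\}$ if $M_e$ halts and $L_e=\emptyset$ otherwise. The crucial feature of this encoding is that membership ``$a^i\in L_e$?'' is \emph{decidable}: one simply simulates $M_e$ for $i$ steps and checks whether it halts exactly at step $i$. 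Hence every $L_e$ is a genuine regular language (a single word or the empty language), its minimal DFA has $\Theta(T_e)$ states, and the representations have decidable membership, so the framework of \autoref{def:ident-limit} applies.

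First I would establish that the class is identified in the limit using polynomial time and data. Since each $L_e$ is regular, the family is a subclass of the regular languages, which are known to be identifiable in this paradigm via DFAs, e.g.\ by RPNI~\cite{RPNI}; let $A$ be such a valid, polynomial-time learner. For $L_e$ a characteristic sample may be taken to consist of the single positive example $\langle a^{T_e},1\rangle$ together with a few negative examples (such as $\langle a^{T_e-1},0\rangle$ and $\langle a^{T_e+1},0\rangle$) pinning down the language, while the empty language is characterised by a finite all-negative sample; in each case the size is polynomial in the number of states of the minimal DFA, namely $\Theta(T_e)$. Thus polynomial-size characteristic samples \emph{exist} for every language in the class, which is exactly condition~(2) of \autoref{def:ident-limit}. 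Note that this existence is purely information-theoretic and requires no computation whatsoever.

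The heart of the argument is a reduction showing these samples cannot be produced effectively. Suppose, for contradiction, that some total computable function $C$ maps each index $e$ to a characteristic sample $C(e)$ for $L_e$. Then the composition $e\mapsto A(C(e))$ is computable, and because $C(e)$ is characteristic and consistent with $L_e$, the learner returns a DFA recognising $L_e$. Emptiness of this DFA is decidable, and $L_e=\emptyset$ holds precisely when $M_e$ does not halt; we could therefore decide the halting problem, a contradiction. Consequently no computable sample-constructor exists, so the class --- though identifiable in the limit using polynomial time and data --- is not efficiently identifiable, and in fact its characteristic sets are not even computable.

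The main obstacle, and the point the construction is engineered to overcome, is the apparent tension between the two demands: making characteristic samples simultaneously \emph{exist} at polynomial size yet be \emph{uncomputable}. These pull in opposite directions, since if the language were handed over as a DFA a characteristic sample could be read off directly. The resolution is two-fold: (i) encode halting through the running \emph{time} $T_e$ rather than through the halting predicate itself, which keeps membership decidable and hence keeps each $L_e$ a genuine regular language possessing polynomial-size characteristic samples; and (ii) insist that the sample-constructor receive the language only as an index/program $e$, from which the minimal DFA --- and indeed emptiness --- is not computable. A secondary point to verify with some care is the robustness of $A$ to supersets of the characteristic sample required by \autoref{def:ident-limit}, but this is inherited directly from the fact that $A$ already witnesses identifiability of the full class of regular languages.
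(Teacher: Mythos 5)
Your proof is correct and follows essentially the same strategy as the paper's: a class of regular languages indexed by programs, with the halting time of the program encoded into the language so that membership remains decidable and constant-cardinality characteristic samples exist, while computing such samples from the index would decide the halting problem. The only notable difference is how the contradiction is extracted --- you compose the hypothetical sample-constructor with the valid learner and decide emptiness of the resulting DFA, whereas the paper argues directly that the characteristic samples for the nested pair $L_{k,1}\supseteq L_{k,2}$ must reveal whether $k\in K$ --- but this is a minor variation on the same reduction.
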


 We first prove Theorem~\ref{theorem:efficient} and then provide a definition for efficient identification. 

\begin{proof}
We present a class of recursive languages that is identifiable in the limit from polynomial time and data, while there is no (polynomial-time or other) algorithm that constructs a characteristic sample for every language in the class, using a specific underlying representation of the languages in the class.

For the purpose of defining such a class, let $\varphi$ be a G{\"o}del numbering of all partial computable functions over the natural numbers, and let $\Phi$ be a corresponding Blum complexity measure. Here $\varphi_i$ refers to the $i$th partial computable function in the numbering~$\varphi$. Intuitively, $\Phi_i(j)$ is undefined if $\varphi_i(j)$ is undefined (i.e., the computation of $\varphi_i(j)$ does not terminate); otherwise $\Phi_i(j)$ is the number of computational steps required until the termination of the computation of $\varphi_i(j)$.
The set $K=\{k\mid\varphi_k(k)\mbox{ is defined}\}$ is called the halting set for $\varphi$; this set is recursively enumerable but membership in $K$ is not decidable.

We now define two languages for each natural number $k$:
\begin{eqnarray*}
L_{k,1}&=&a^*\cup\{b^k\}\\
L_{k,2}&=&\begin{cases}\{a^i\mid i\le\Phi_k(k)\}\cup \{b^k\}\,,&\mbox{if }k\in K\,,\\
a^*\cup\{b^k\}\,,&\mbox{otherwise}\,.
\end{cases}
\end{eqnarray*}

Note that $L_{k,1}=L_{k,2}$ if and only if $k\notin K$. Now let $\mathbb{L}$ consist of all languages $L_{k,q}$ for $k\in\mathbb{N}$ and $q\in\{1,2\}$.

There is an effective algorithm that decides membership in $L_{k,q}$, given $k\in\mathbb{N}$ and $q\in\{1,2\}$. To see that, note that, given $k$, $q$, and a word $w$, membership is trivial to decide when $q=1$ or when $w$ is not of the form $a^i$. If $w=a^i$, then $w$ belongs to $L_{k,2}$ if and only if the computation of $\varphi_k(k)$ does not terminate within fewer than $i$ steps, which can be checked effectively. 

Moreover, every language in $\mathbb{L}$ is regular and has a characteristic sample of size at most~2. In particular, $\{\langle b^k,1\rangle\}$ serves as a characteristic sample for $L_{k,1}$ (and thus also for $L_{k,2}$ in case $k\notin K$), while $\{\langle b^k,1\rangle,\langle a^{\Phi_k(k)+1},0\rangle\}$ is a characteristic sample for $L_{k,2}$ in case $k\in K$. Thus, using the above representation, the class $\mathbb{L}$ has polynomial-size (even constant-size) characteristic samples. However, there is no algorithm to construct such characteristic samples effectively, since otherwise such an algorithm could be used to decide membership in~$K$. (The latter can be verified by noting that $L_{k,2}\subseteq L_{k,1}$ for all $k$. Therefore, a system of characteristic samples would need to distinguish $L_{k,2}$ from $L_{k,1}$ (when $k\in K$) by either (i) a negative example of the form $\langle a^i,0\rangle$ for $L_{k,2}$, or (ii) a positive example of the form $\langle a^i,1\rangle$ for $L_{k,1}$, where $a^i\notin L_{k,2}$. Thus, the presence or absence of such example in the  characteristic samples for $L_{k,1}$, $L_{k,2}$ can be used to decide whether or not $k\in K$.)
\end{proof}

Since we are concerned with learning classes of automata, we now formulate the definition of \emph{efficient identification} directly over classes of automata.

\begin{defi}[efficient identification]\label{def:eff-ident}
    A class of automata $\class{M}$ over an alphabet $\Sigma$ 
     is said to be 
    \emph{efficiently identified} 
    if the following two requirements are met. 
    \begin{enumerate}
    \item
    There exists a polynomial time learning algorithm $\Infer:2^{(\Sigma^*\times\{0,1\})}\rightarrow \class{M}$ such that, for any sample $\mathcal{S}$, 
    we have $\mathcal{S}\subseteq\hatlang{\Infer(\mathcal{S})}$. 
    \item
    There exists a polynomial time  algorithm $\Char:\class{M}\rightarrow 2^{(\Sigma^*\times\{0,1\})}$ such that,
    for every $\aut{M}\in\class{M}$ and every sample $\mathcal{S}$ satisfying $\Char(\aut{M})\subseteq \aut{S} \subseteq \hatlang{\aut{M}}$, 
    the automaton $\Infer(\aut{S})$ recognizes the same language as $\aut{M}$. 
    \end{enumerate}
\end{defi}
When we apply this definition for a class of SFAs over a Boolean algebra $\alge{A}$ with domain~$\dom{D}$ and predicates $\class{P}$,
the characteristic sample is defined over the concrete set of letters $\dom{D}$ rather than the
set of predicates $\class{P}$ as this is the alphabet of the words accepted by an SFA. (Inferring an
SFA from a set of words labeled by predicates can be done using the methods for inferring DFAs, by considering
the alphabet to be the set of predicates.)

Throughout this section we study whether a class of SFAs $\class{M}$ 
is efficiently identifiable. That is, we are interested in the existence of algorithms
$\Infer_{\class{M}}$ and $\Char_{\class{M}}$ satisfying the requirements of Definition~\ref{def:eff-ident}.
In \autoref{subsec:necessary} we provide a necessary condition for a class of SFAs to be identified in the limit using polynomial time and data. In \autoref{subsec:sufficient} we provide a sufficient condition for a class of SFAs to be efficiently identifiable. On the positive side, we show  in \autoref{subsec:positive} that  the class of SFAs over the interval algebra is efficiently identifiable. On the negative side,  we show in \autoref{subsec:negative} that SFAs over the general propositional algebra cannot be identified in the limit using polynomial time and data. All classes of SFAs that we study are assumed to contain all basic SFAs (as per Definition~\ref{def:non-trivial}). 

\subsection{Efficient Identification of DFAs}
\label{sec:efficientidentificationDFAs}
Before investigating efficient identification of SFAs, it is worth noting that DFAs are efficiently identifiable. 
We state a result that provides more details about the nature of these algorithms, since we need it later, in \autoref{subsec:positive},
to provide our positive result. 
Intuitively, it says that 
there exists a valid learner such that 
if~$\aut{D}$ is a minimal DFA recognizing a certain language $L$ then 
the learner can infer $L$ from a characteristic sample consisting of
access words to each state of $\aut{D}$ and their extensions with distinguishing words (words showing each
pair of states cannot be merged) as well as one letter extensions of the access words that are required to retrieve the transition relation.
For completeness we give a proof of this theorem in Appendix~\ref{App:DFA}. 

\begin{thmC}[\cite{RPNI}]\label{prop:charDFA}
    The class of DFAs is efficiently identifiable via
    procedures $\CharDFA$ and $\InferDFA$. 
    Furthermore, these procedures satisfy that if 
    $\aut{D}$ is a minimal {and complete} DFA and
    $\CharDFA(\aut{D})=\mathcal{S}_\aut{D}$ then
     the following hold:
   \begin{enumerate}
          \item \label{item:lex}$\mathcal{S}_\aut{D}$ contains a prefix-closed set $A$ of access words. 
          Moreover,  $A$ can be chosen to contain only lex-access words,  i.e., only the lexicographically smallest access word for each state.
        \item For every $u_1,u_2\in A$ it holds that $u_1\not\sim_{\aut{S}_\aut{D}} u_2$. \label{item:accesswords}
        \item 
        For every $u,v\in A$ and $\sigma\in\Sigma$, if $\Delta(q_\iota, u\sigma)\neq \Delta(q_\iota, v)$ then $u\sigma\not\sim_{\aut{S}_\aut{D}}v$. \label{item:notsim_inS}
    \end{enumerate}
\end{thmC}

We briefly describe $\CharDFA$ and $\InferDFA$.

The algorithm $\CharDFA$ works as follows. It first creates a prefix-closed set of access words to states.
This can  be done by considering the graph of the automaton and running an algorithm for finding a spanning tree from the initial state. Choosing one of the letters on each edge, the access word for a state is obtained by concatenating the labels on the unique path of the obtained tree that reaches that state. If we wish to work with lex-access words, we can use a depth-first search algorithm that spans branches according to the order of letters in $\Sigma$, starting from the smallest. The labels  on the paths of the spanning tree constructed 
this way will form the set of lex-access words.

Let $S$ be the set of access words (or lex-access words).
Next the algorithm turns to find a distinguishing word $v_{i,j}$ for every pair of states $s_i,s_j\in S$ (where $s_i\neq s_j$).
It holds that any pair of states of the minimal DFA has a distinguishing word of size quadratic in the size of the DFA.
Let $E$ be the set of all such distinguishing words. 
 Then the algorithm returns the set $\mathcal{S}_\aut{D}=\{\langle w, \aut{D}(w) \rangle ~|~ w\in (S\cdot E) \cup (S \cdot \Sigma \cdot E)\}$ where $\aut{D}(w)$ is the label $\aut{D}$ gives~$w$ (i.e.,~$1$ if it is accepted, and $0$ otherwise). 
	It is easy to see that $\mathcal{S}_\aut{D}$ satisfies the properties of~Theorem~\ref{prop:charDFA}.

The algorithm $\InferDFA$, given a sample of words $\aut{S}$, infers 
from it in polynomial time a DFA that agrees with  $\aut{S}$. Moreover, if $\aut{S}$ subsumes
the characteristic set $\aut{S}_\aut{D}$ of a DFA $\aut{D}$ then  $\InferDFA$ returns a DFA that recognizes $\aut{D}$. 
Let $W$ be the set of words in the given sample $\aut{S}$ (without their labels). 
Let $R$ be the set of prefixes of $W$ and $C$ the set of suffixes of $W$.
Note that  $\epsilon\in R$ and $\epsilon \in C$.
Let $r_0,r_1,\ldots$ be some enumeration of $R$ and $c_0,c_1,\ldots$ some enumeration of $C$ where $r_0=c_0=\epsilon$.
The algorithm builds a matrix $M$ of size $|R|\times|C|$ whose entries take values in $\{0,1,?\}$, and sets the value of entry $(i,j)$ as follows. If $r_i c_j$ is not in $W$, it is set to $?$. 
Otherwise it is set to $1$ iff the word $r_i c_j$ is labeled $1$ in $\mathcal{S}$. 
We get that 
$r_i \sim_\mathcal{S} r_j$ iff
for every $k$ such  that both $M(i,k)$ and $M(j,k)$ are different than $?$ we have that $M(i,k)=M(j,k)$.
The algorithm 
sets $R_0=\{\epsilon\}$. Once $R_i$ is constructed, the algorithm tries to establish whether for $r\in R_i$ and $\sigma\in\Sigma$, $r\sigma$
is distinguished from all words in $R_i$. It does so by considering all other words $r'\in R_i$ and checking whether $r \sim_\mathcal{S} r'$. 
If $r\sigma$ is found to be distinct from all words in $R_i$, then $R_{i+1}$ is set to $R_i\cup\{r\sigma\}$. The algorithm proceeds until no new words are distinguished.\footnote{There is some resemblance between the  partial matrix and the apartness relation of~\cite{DBLP:conf/tacas/VaandragerGRW22}, used in query learning of DFAs.}
 
Let $k$ be the minimal iteration such that $R_k=R_{k'}$ for all $k'>k$.
 If not all words in~$R_k$ are in $W$ (that is $M(i,0)=?$ for some $r_i\in R_k$), the algorithm returns the prefix-tree automaton. 
 Otherwise, the states of the constructed DFA are set to be the words in $R_k$. The initial state is $\epsilon$ and a state $r_i$ is classified as accepting 
iff $M(i,0)=1$ (recall that the entry $M(i,0)$ stands for the value of $r_i\cdot \epsilon$ in $\mathcal{S}$).
To determine the transitions, for every $r\in R_k$ and $\sigma \in \Sigma$, recall that there exists at least one state $r'\in R$ that cannot be distinguished from $r\sigma$.  
The algorithm then adds a transition from $r$ on $\sigma$ to $r'$.

\section{Necessary Condition}\label{subsec:necessary}

We make use of the following definitions. 
A sequence $\partition{\Gamma_1,\ldots, \Gamma_m}$ consisting of finite sets of concrete letters $\Gamma_i\subseteq {\dom{D}}$ is termed a \emph{concrete partition} of $\dom{D}$ if the sets are pairwise disjoint (namely $\Gamma_i\cap\Gamma_j= \emptyset$ for every $i\neq j$).
Note that we \underline{do not} require that in addition $\bigcup_{1\leq i\leq k}\Gamma_i=\dom{D}$. 
We use 
$\cpart{\dom{D}}$
to define the set of all concrete partitions over~$\dom{D}$.
A~sequence of predicates $\partition{\psi_1,\ldots,\psi_k}$ 
over a Boolean algebra $\alge{A}$ on a domain $\dom{D}$ is termed a \emph{predicate partition} if $\sema{\psi_i}\cap\sema{\psi_j}=\emptyset$ for every $i\neq j$, and in addition $\bigcup_{1\leq i\leq k}\sema{\psi_i}=\dom{D}$. 
That is, 
here  we \underline{do} require the assignments to the predicates cover the domain.
We use 
$\ppart{\class{P}}$
to define the set of all predicate partitions over $\class{P}$.
For both concrete partition and predicate partition, we \underline{do not} require that the sets $\Gamma_i$ or $\sema{\psi_i}$ are non-empty.

\begin{defi}\label{def:gen-conc}
\ \ 
    \begin{itemize}
               \item \label{criterion:concertize}
            A function $f_c:\ppart{\class{P}}\rightarrow\cpart{\dom{D}}$ 
            is termed \emph{concretizing} 
            if $f_c(\partition{\psi_1,\ldots,\psi_m})~=$ $\partition{\Gamma_1,\ldots, \Gamma_k }$ 
            implies $k=m$ and  $\Gamma_i\subseteq\sema{\psi_i}$ for all $1\leq i\leq m$.
       
        \item \label{criterion:generalize}
            A function $f_g:\cpart{\dom{D}}\rightarrow\ppart{\class{P}}$ 
            is termed \emph{generalizing} 
            if $f_g(\partition{\Gamma_1,\ldots, \Gamma_m })=\partition{\psi_1,\ldots,\psi_k}$ 
            implies $k=m$ and  $\sema{\psi_i} \supseteq \Gamma_i$ for all $1\leq i\leq m$.

    \end{itemize}
\end{defi}
Note that $f_g$ and $f_c$ are defined over partitions of any size. In Theorem~\ref{theorem:necessary-cond} we use their \emph{dyadic} restriction, that is, a concretizing and a generalizing functions that are defined only over partitions of size two. 

   We say that $f_g$ (resp. $f_c$) is \emph{efficient} if it can be computed in polynomial time. Note that if $f_c$ is efficient then the sets $\Gamma_i$ in the constructed concrete partition are of polynomial~size.

We are now ready to provide a necessary condition for identifiability in the limit using polynomial time and data.

\begin{thm} \label{theorem:necessary-cond}
   Let $\class{M}_\alge{A}$ be a class of SFAs over a Boolean algebra $\alge{A}$, that contain all basic SFAs over $\alge{A}$. If $\class{M}_\alge{A}$ is identified in the limit using polynomial time and data, then 
    there exist efficient dyadic concretizing and generalizing functions $\concretize_\alge{A} : \ppart{\class{P}}\rightarrow \cpart{\dom{D}}$ and $\generalize_\alge{A}:\cpart{\dom{D}}\rightarrow\ppart{\class{P}}$ 
    satisfying that 
    \begin{center}
    if $\concretize_\alge{A} (\partition{\psi_1,\psi_2})=\partition{\Gamma_1,\Gamma_2}$ \\
    and $\generalize_\alge{A} (\partition{\Gamma'_1,\Gamma'_2})=\partition{\varphi_1,\varphi_2}$ \\
    where $\Gamma_i \subseteq \Gamma'_i$  for every $1\leq i\leq 2$ \\
    then $\sema{\varphi_i}=\sema{\psi_i}$ for every $1\leq i \leq 2$.
    \end{center}
\end{thm}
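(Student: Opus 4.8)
The plan is to exploit the two hypotheses at our disposal: that $\class{M}_\alge{A}$ contains all basic SFAs $\aut{M}_\varphi$ (Definition~\ref{def:non-trivial}), and that it is identified in the limit, which hands us an efficient learner $\algor{A}$ (condition~1 of Definition~\ref{def:ident-limit}) together with the existence of a polynomial-size characteristic sample for every language in the class (condition~2). For a dyadic predicate partition $\partition{\psi_1,\psi_2}$ we have $\sema{\psi_2}=\dom{D}\setminus\sema{\psi_1}$, and I would attach to it the basic SFA $\aut{M}_{\psi_1}$, whose language $\lang{\aut{M}_{\psi_1}}$ is exactly the set of length-one words whose single letter lies in $\sema{\psi_1}$. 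Writing $\aut{S}_{\psi_1}$ for a characteristic sample of this language, the idea is to build $\concretize_\alge{A}$ out of $\aut{S}_{\psi_1}$ and $\generalize_\alge{A}$ out of $\algor{A}$, and then to close the loop using the robustness clause of the characteristic-sample guarantee.

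Concretely, I would set $\concretize_\alge{A}(\partition{\psi_1,\psi_2})=\partition{\Gamma_1,\Gamma_2}$, letting $\Gamma_1$ collect the single letters of the positively labeled examples of $\aut{S}_{\psi_1}$ and $\Gamma_2$ the single letters of its negatively labeled length-one examples. Since the positive words of $\aut{M}_{\psi_1}$ are letters of $\sema{\psi_1}$ and its rejected length-one words are letters of $\sema{\psi_2}$, we get $\Gamma_i\subseteq\sema{\psi_i}$, so this map is concretizing, and $\Gamma_1\cap\Gamma_2=\emptyset$ because $\sema{\psi_1}$ and $\sema{\psi_2}$ are disjoint. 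For $\generalize_\alge{A}(\partition{\Gamma'_1,\Gamma'_2})$ I would first assemble a concrete sample $\mathcal{S}'$ that labels each letter of $\Gamma'_1$ by $1$, each letter of $\Gamma'_2$ by $0$, and appends the canonical structural negatives forcing the three-state shape of a basic SFA, namely $\langle\epsilon,0\rangle$ together with all words of bounded length over $\Gamma'_1\cup\Gamma'_2$, each labeled $0$; then run $\algor{A}$ on $\mathcal{S}'$ to obtain an SFA $\aut{R}$ and output $\varphi_1=\bigvee\{\phi\}$, the disjunction ranging over the transitions $\langle q_\iota,\phi,q\rangle$ of $\aut{R}$ with $q$ accepting, and $\varphi_2=\neg\varphi_1$. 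As $\aut{R}$ must agree with $\mathcal{S}'$, every letter of $\Gamma'_1$ is accepted as a length-one word, hence lies in $\sema{\varphi_1}$, so $\generalize_\alge{A}$ is generalizing; and since $\algor{A}$ runs in polynomial time on the polynomial-size $\mathcal{S}'$, the map is efficient.

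To verify the round-trip equality, I would take $\partition{\Gamma'_1,\Gamma'_2}$ to enlarge $\partition{\Gamma_1,\Gamma_2}$ while remaining consistent with $\partition{\psi_1,\psi_2}$, i.e.\ $\Gamma'_i\subseteq\sema{\psi_i}$. Then the sample $\mathcal{S}'$ built by $\generalize_\alge{A}$ subsumes $\aut{S}_{\psi_1}$: its positive and negative length-one examples are recovered from $\Gamma_i\subseteq\Gamma'_i$, and the structural negatives of $\aut{S}_{\psi_1}$ are regenerated from letters already lying in $\Gamma_1\cup\Gamma_2\subseteq\Gamma'_1\cup\Gamma'_2$. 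Furthermore $\mathcal{S}'\subseteq\hatlang{\aut{M}_{\psi_1}}$, because letters of $\Gamma'_1$ are accepted, letters of $\Gamma'_2$ are rejected, and every word of length other than one is rejected by $\aut{M}_{\psi_1}$. The characteristic-sample clause of Definition~\ref{def:ident-limit} then forces $\algor{A}(\mathcal{S}')$ to recognize $\lang{\aut{M}_{\psi_1}}$, so the length-one letters accepted by $\aut{R}$ are exactly $\sema{\psi_1}$; by construction of $\varphi_1$ this yields $\sema{\varphi_1}=\sema{\psi_1}$ and hence $\sema{\varphi_2}=\sema{\psi_2}$, as required.

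The step I expect to be the main obstacle is arguing that $\concretize_\alge{A}$ is genuinely efficient, since Definition~\ref{def:ident-limit} promises only the existence of $\aut{S}_{\psi_1}$, not an efficient procedure to produce it. My way around this is to lean on the rigid structure of basic SFAs: $\lang{\aut{M}_{\psi_1}}$ is pinned down entirely by $\sema{\psi_1}$ on length-one words, so a characteristic sample may be taken to consist of a constant number of concrete witnesses (one letter of $\sema{\psi_1}$ and, when nonempty, one of $\sema{\psi_2}$) together with the canonical structural negatives, all of length bounded by a fixed constant. Obtaining such witnesses is possible by effectiveness of the algebra, and the polynomial-data guarantee bounds their size. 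The two delicate points here are to confirm that for basic SFAs the characteristic sample can indeed be taken in this bounded, regeneratable form, so that $\generalize_\alge{A}$ can reconstruct it without access to $\psi_1$, and to dispose of the degenerate cases where $\psi_1$ or $\psi_2$ is unsatisfiable, in which the corresponding $\Gamma_i$ is simply empty.
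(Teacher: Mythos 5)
Your construction is, in outline, the same one the paper uses: $\concretize_\alge{A}(\partition{\psi_1,\psi_2})$ is read off a characteristic sample of the basic SFA $\aut{M}_{\psi_1}$; $\generalize_\alge{A}$ feeds a reassembled sample to the learner and reads the predicate off the transitions leaving the initial state; and the round trip is closed by the robustness clause of Definition~\ref{def:ident-limit}. The cosmetic differences (you harvest $\Gamma_2$ only from length-one negative examples where the paper takes the first letters of all negative words; you output $\bigvee\Psi_1$ where the paper outputs $(\bigvee_{\psi\in\Psi_1}\psi)\wedge(\bigwedge_{\psi\in\Psi_2}\neg\psi)$) do not change the argument.

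The step that fails is your repair of the efficiency of $\concretize_\alge{A}$. You claim that for a basic SFA the characteristic sample ``may be taken'' to consist of a constant number of witnesses --- one letter of $\sema{\psi_1}$, one of $\sema{\psi_2}$, plus regeneratable structural negatives. But a characteristic sample is defined relative to the fixed learner $\algor{A}$: it must be a sample on every consistent superset of which $\algor{A}$ outputs the target language. A two-witness sample $\{\langle\gamma,1\rangle,\langle\gamma',0\rangle\}$ together with your structural negatives is consistent with $\lang{\aut{M}_\psi}$ for \emph{every} $\psi$ satisfying $\gamma\in\sema{\psi}$ and $\gamma'\notin\sema{\psi}$, and $\algor{A}$ returns a single hypothesis on it, so it can be characteristic for at most one of these (generally infinitely many) languages. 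Already over the interval algebra no constant-size positive set separates $[0,1)$ from $[0,50)$ from $[0,100)\vee[200,300)$; correspondingly, the paper's concretizing function for monotonic algebras must supply one witness per maximal interval of the predicate, not a bounded number. The paper sidesteps the issue entirely by taking the (assumed) characteristic-sample procedure $\pfcharsfa$ itself as the implementation of $\concretize_\alge{A}$ rather than substituting a hand-built sample, so your attempted strengthening is not needed --- and as stated it is false. A secondary soft spot, partly shared with the paper's own write-up, is the subsumption claim $\mathcal{S}'\supseteq\aut{S}_{\psi_1}$: since $\mathcal{S}'$ is regenerated only over the alphabet $\Gamma'_1\cup\Gamma'_2$, it misses any longer negative word of $\aut{S}_{\psi_1}$ that uses a letter absent from the length-one examples; your choice of $\Gamma_2$, which discards letters occurring only in longer words, makes this strictly worse than the paper's choice of collecting the first letters of all negative words.
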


\begin{proof}
    Assume 
    that $\class{M}_\alge{A}$ is identified in the limit using polynomial time and data. 
    That is, there exist two algorithms
    $\pfcharsfa:\class{M}_\alge{A}\rightarrow 2^{\dom {D}^*\times\{0,1\}}$ and $\pfinfersfa:2^{\dom {D}^*\times\{0,1\}}\rightarrow\class{M}_\alge{A}$
    satisfying the requirements of Definition~\ref{def:ident-limit}.
    We show that efficient dyadic concretizing  and generalizing functions do exist.
    
    We start with the definition of $\concretize_\alge{A}$.
    Let $\partition{\varphi_1,\varphi_2}$ be the argument of $\concretize_\alge{A}$. Note that $\varphi_2=\neg\varphi_1$ by the definition of a predicate partition. The implementation of $\concretize_\alge{A}$ 
    invokes $\pfcharsfa$ on the
    SFA $\aut{M}_{\varphi_1}$ accepting all words of length one
    consisting of a concrete letter satisfying $\varphi_1$, as defined in~Definition~\ref{def:non-trivial}.
    Let $\aut{S}$ be the returned sample. Let $\Gamma_1$ 
    be the set of positively labeled words in the sample. Note that all such words are of size one, namely they are letters.
    Let $\Gamma_2$ be the set of letters that are  first letters in a  
    negative 
    word in the sample.
    Then $\concretize_\alge{A}$ returns $\langle \Gamma_1,\Gamma_2 \rangle$.

    We turn to the definition of $\generalize_\alge{A}$.
    Given $\partition{\Gamma_1,\Gamma_2}$ the implementation of $\generalize_\alge{A}$ invokes $\pfinfersfa$ on  sample
    $\aut{S} = \{ (\gamma, 1)~|~ \gamma \in \Gamma_1\}\cup \{(\gamma, 0)~|~\gamma\in \Gamma_2\}\cup\{(\gamma\gamma',0)~|~\gamma,\gamma'\in\Gamma_1\cup\Gamma_2\}$.
    That is,  all one-letter words in $\Gamma_1$ are positively labeled, all one-letter words in~$\Gamma_2$ are negatively labeled, and
    all words of length $2$ using some of the given concrete letters are negatively labeled.
    Let $\aut{M}$ be the returned SFA when given $\aut{S}'$, such that $\aut{S}'\supseteq \aut{S}$, as an input. Let $\Psi_1$ be the set
    of all predicates labeling some edge from the initial state to an accepting state, and let $\Psi_2$ be the set of all predicates labeling some edge from the initial state to a rejecting state. Let $\varphi=(\bigvee_{\psi\in\Psi_1} \psi) \wedge (\bigwedge_{\psi\in\Psi_2} \neg \psi)$.
    Then $\generalize_\alge{A}$ returns~$\langle \varphi,\neg\varphi\rangle$.

    It is not hard to verify that the constructed methods $\concretize_\alge{A}$ and $\generalize_\alge{A}$ satisfy the requirements of the theorem. 
    \end{proof}

The following example shows
the existence of functions $\concretize$ and $\generalize$ for the interval algebra.

\begin{exa}\label{ex:con_gen}
     Consider the class $\class{M}_{{\alge{A_\dom{N}}}}$ of SFAs over the algebra ${\alge{A_\dom{N}}}$ of Example~\ref{ex:SFA} 
     and consider the functions 
     $\concretize_{{{\alge{A_\dom{N}}}}}(\partition{~[d_1, d'_1),~[d_2, d'_2), \ldots,~[d_m, d'_m)}) = \partition{ \{ d_1 \},  \ldots, \{ d_m\}   }$
     and 
     $\generalize_{{{\alge{A_\dom{N}}}}}(\partition{ \Gamma_1 , \ldots, \Gamma_m}) = $ $\langle ~[\min{\Gamma_{1}}, \min{\hat{\Gamma_{1}}} ), ~[\min{\Gamma_{2}}, \min{\hat{\Gamma_{2}}} ),\ldots,  ~[\min{\Gamma_{m}}, \infty)\rangle$
     where  
     $\hat{\Gamma_i}=\bigcup_{j\neq i} \Gamma_j$ for every $1\leq i < m$. 
     Then, $\concretize_{{{\alge{A_\dom{N}}}}}$ and 
     $\generalize_{{{\alge{A_\dom{N}}}}}$ satisfy the variadic generalization of the conditions of Theorem~\ref{theorem:necessary-cond}.
\end{exa}

We would like to relate the necessary condition on the learnability of a class of SFAs over a Boolean algebra $\alge{A}$
to the learnability of the Boolean algebra $\alge{A}$ itself.
For this we need to first define efficient identifiability of a Boolean algebra $\alge{A}$. Since to learn an unknown predicate we need to supply two sets, one of negative examples and one of  positive examples, it makes sense to 
say that a Boolean algebra $\alge{A}$ with predicates $\class{P}$ over domain $\dom{D}$ is \emph{efficiently identifiable} if there exist
efficient dyadic concretizing and generalizing functions, $\concretize_\alge{A} : \ppart{\class{P}}\rightarrow \cpart{\dom{D}}$ and $\generalize_\alge{A}:\cpart{\dom{D}}\rightarrow\ppart{\class{P}}$ satisfying the criteria of Theorem~\ref{theorem:necessary-cond}. 
Using this terminology we can state the following corollary.

\begin{cor}
    Efficient identifiability of the Boolean algebra $\alge{A}$ is a necessary condition for identification 
    in the limit using polynomial time and data of  any class of SFAs over $\alge{A}$, that contains all basic SFAs over $\alge{A}$.
\end{cor}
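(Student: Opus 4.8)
The plan is to derive the corollary directly from Theorem~\ref{theorem:necessary-cond} together with the definition of efficient identifiability of a Boolean algebra just introduced. Since both statements are phrased in terms of the very same objects---efficient dyadic concretizing and generalizing functions $\concretize_\alge{A}$ and $\generalize_\alge{A}$ satisfying the roundtrip condition of Theorem~\ref{theorem:necessary-cond}---the corollary should be almost immediate once the definitions are unpacked.

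Concretely, I would argue as follows. Suppose $\class{M}_\alge{A}$ is any class of SFAs over $\alge{A}$ that contains all basic SFAs over $\alge{A}$, and suppose $\class{M}_\alge{A}$ is identified in the limit using polynomial time and data. Then Theorem~\ref{theorem:necessary-cond} applies verbatim (its hypotheses are exactly that $\class{M}_\alge{A}$ contains all basic SFAs and is identified in the limit using polynomial time and data), and it yields efficient dyadic concretizing and generalizing functions $\concretize_\alge{A}$ and $\generalize_\alge{A}$ satisfying the stated roundtrip property. But the existence of precisely such a pair of functions is, by the definition given immediately before the corollary, exactly what it means for the Boolean algebra $\alge{A}$ to be efficiently identifiable. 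Hence efficient identifiability of $\alge{A}$ holds whenever $\class{M}_\alge{A}$ is identified in the limit using polynomial time and data, which is the claimed necessary condition.

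The only genuine content, then, is to observe that the conclusion of Theorem~\ref{theorem:necessary-cond} and the hypothesis of the definition of an efficiently identifiable algebra are word-for-word the same---both ask for efficient dyadic $\concretize_\alge{A}$ and $\generalize_\alge{A}$ meeting the criteria displayed in Theorem~\ref{theorem:necessary-cond}. So the proof is a one-line invocation: apply the theorem, then read its conclusion through the definition. I do not anticipate any real obstacle here; the corollary is a restatement of the theorem in the terminology of algebra learnability. The only point requiring a word of care is that the definition of an efficiently identifiable algebra was deliberately crafted to match the theorem's conclusion, so I would make explicit that we are quantifying over an \emph{arbitrary} class $\class{M}_\alge{A}$ containing all basic SFAs, and that the functions extracted by the theorem depend only on $\alge{A}$ (through the basic SFAs), not on the particular class, which is what licenses the phrase ``any class of SFAs over $\alge{A}$'' in the statement.

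\begin{proof}
Let $\class{M}_\alge{A}$ be any class of SFAs over $\alge{A}$ that contains all basic SFAs over $\alge{A}$, and assume $\class{M}_\alge{A}$ is identified in the limit using polynomial time and data. The hypotheses of Theorem~\ref{theorem:necessary-cond} are then met, so there exist efficient dyadic concretizing and generalizing functions $\concretize_\alge{A} : \ppart{\class{P}}\rightarrow \cpart{\dom{D}}$ and $\generalize_\alge{A}:\cpart{\dom{D}}\rightarrow\ppart{\class{P}}$ satisfying the roundtrip criteria stated there. By definition, the existence of such a pair of functions is exactly what it means for the Boolean algebra $\alge{A}$ to be efficiently identifiable. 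Hence efficient identifiability of $\alge{A}$ is necessary for identification in the limit using polynomial time and data of any such class $\class{M}_\alge{A}$.
\end{proof}
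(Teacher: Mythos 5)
Your proof is correct and matches the paper's treatment exactly: the paper states this corollary as an immediate consequence of Theorem~\ref{theorem:necessary-cond} once the definition of an efficiently identifiable Boolean algebra is fixed to coincide with the theorem's conclusion, which is precisely the one-line unpacking you give. Your added remark that the extracted functions depend only on $\alge{A}$ (via the basic SFAs) and not on the particular class is a sensible clarification of why the quantification over ``any class'' is licensed.
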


\section{Sufficient Condition}\label{subsec:sufficient}
We turn to discuss a sufficient condition for the efficient identifaibility of a class of SFAs $\class{M}_\alge{A}$ over a Boolean algebra $\alge{A}$.
To prove that $\class{M}_\alge{A}$ is efficiently identifiable, we need to supply two algorithms $\CharSFA_{\class{M}_\alge{A}}$
and $\InferSFA_{\class{M}_\alge{A}}$
as required in Definition~\ref{def:eff-ident}. The idea is to reduce
the problem to efficient identifiablity of DFAs, namely to use the algorithms $\CharDFA$ and $\InferDFA$
provided in Theorem~\ref{prop:charDFA}. The implementation of $\CharSFA$, given an SFA $\aut{M}$, will transform it
into a DFA $\aut{D}_\aut{M}$ by applying $\concretize_\alge{A}$ on the partitions induced by the states of the SFA.
The resulting DFA $\aut{D}_\aut{M}$ will not be equivalent to the given SFA $\aut{M}$, but it may be used to create a sample of words $\aut{S}_\aut{M}$
that is a characteristic set for~$\aut{M}$, see \autoref{fig:diagram}.

To implement $\InferSFA$ we would like to use $\InferDFA$ to obtain, as a first step, a DFA from
the given sample, then at the second step, apply $\generalize_\alge{A}$ on the concrete-partitions induced
by the DFA states. A subtle issue that we need to cope with is that inference should succeed
also on samples subsuming the characteristic sample. The fact that this holds for inference of 
the DFA does not suffice, since we are guaranteed that the inference of the DFA will not be confused
if the sample contains more labeled words, as long as the new words are over the same alphabet. In our
case the alphabet of the sample can be a strict subset
of the concrete letters $\dom{D}$ (and if $\dom{D}$ is infinite, this surely will be the case).
Example~\ref{ex:decontanimate} in \autoref{subsec:positive} illustrates this problem for the class of SFAs over a monotonic algebra $\alge{A_m}$, for which the respective methods $\concretize_\alge{A_m}$ and $\generalize_\alge{A_m}$ exist. 
So, we need an additional step to remove words from the given sample if they
are not over the alphabet of the characteristic sample. We call a method implementing this~$\Decontaminate_{\class{M}_\alge{A}}$.

\begin{figure}
	
	\centering
	
	 \includegraphics[page=2, width=0.9\textwidth, clip, trim=117 180 100 500]{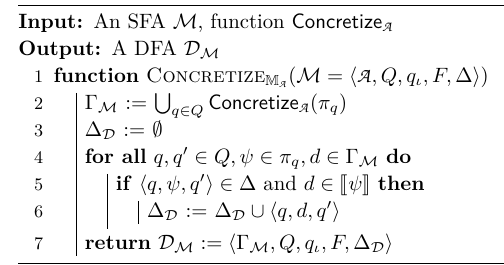}
	\caption{A schematic description of algorithms $\CharSFA$ and $\InferSFA$}\label{fig:diagram}
\end{figure}

Formally, we first define the extension of 
$\concretize_\alge{A}$ and
$\generalize_\alge{A}$ to automata instead of partitions,
which we term $\Concretize_{\class{M}_\alge{A}}$ and $\Generalize_{\class{M}_\alge{A}}$ (with $\class{M}$ in the subscript).

\begin{figure}[t]
  
    \begin{minipage}{.5\textwidth}
            \centering
             \includegraphics[page=1, width=0.9\textwidth, clip, trim=90 580 280 85]{algorithms.pdf}
    		  \captionof{algorithm}{$\Concretize_{\class{M}_\alge{A}}(\aut{M})$}\label{alg:conc}
    \end{minipage}%
    \hfill
    \begin{minipage}{.5\textwidth}
            \centering
    		  	 \includegraphics[page=1, width=0.95\textwidth, clip, trim=90 420 260 245]{algorithms.pdf}
    	\captionof{algorithm}{$\Generalize_{\class{M}_\alge{A}}(\aut{M})$}\label{alg:gen}
    \end{minipage}
\end{figure}

\begin{itemize}
    \item 
    The formal definition of $\Concretize_{\class{M}_\alge{A}}$ is given in Algorithm~\ref{alg:conc}.
    Let $\aut{M}=(\alge{A},Q,q_\iota ,F,\Delta)$ be an SFA. 
    Then $\Concretize_{\class{M}_\alge{A}}(\aut{M})$ is the DFA $\aut{D }_\aut{M}   = (\Sigma, Q,q_\iota ,F,\Delta_\aut{D})$ where $\Delta_\aut{D}$ is defined as follows. 
    For each state $q\in Q$
    let $\pi_q=\langle \psi_1,\ldots,\psi_m\rangle$ be the predicate partition consisting of all predicates labeling a transition exiting $q$ in $\aut{M}$. Intuitively, in $\aut{D}$, the outgoing transitions of each state $q$
    correspond to $\concretize_\alge{A}(\pi_q)$. That is, 
    let $\concretize_\alge{A}(\pi_q) = \langle \Gamma_1, \ldots, \Gamma_m \rangle$. Then,
    if $\langle q, \psi_i, q'  \rangle \in \Delta$, then $\langle q, \gamma, q'  \rangle \in \Delta_\aut{D}$ for every $\gamma\in \Gamma_i$.

    \item  
    The formal definition of $\Generalize_{\class{M}_\alge{A}}$ is given in Algorithm~\ref{alg:gen}.
    Let $\aut{D }   = (\Sigma, Q,q_\iota ,F,\Delta_\aut{D})
    $ be a DFA. We define $\Generalize_{\class{M}_\alge{A}}(\aut{D})$ with respect to an algebra $\alge{A}$ as follows. Let $\aut{M}=(\alge{A},Q,q_\iota ,F,\Delta_\aut{M})$ where 
    $\Delta_\aut{M}$ is defined as follows. For each state $q\in Q$ let $\langle \Gamma_1, \ldots, \Gamma_m  \rangle$ be the concrete partition consisting of letters labeling outgoing transitions from $q$. 
    Note that  $\partition{\Gamma_1, \ldots, \Gamma_m}$ is  a concrete partition, since $\aut{D}$ is a DFA.
    Let $\generalize_\alge{A}(\langle \Gamma_1, \ldots, \Gamma_m \rangle) = \langle \psi_1, \ldots, \psi_m \rangle$. Then, $\langle q, \psi_i, q'  \rangle \in \Delta_\aut{M}$ 
    if $\Gamma_i$ is the set of letters labeling transitions from $q$ to $q'
    $ in $\aut{D}$.  
\end{itemize}

We are now ready to define the conditions the \emph{decontaminating} function has to satisfy. We recall that the role of the decontaminating function is to identify words in the sample that are not over the alphabet $\Gamma_\aut{D}$ of the characteristic sample  (note that $\Gamma_\aut{D}$ is not known to the decontaminating function).

\begin{defi}\label{def:decont} 
        A function $f_d:2^{(\dom{D}^*\times\{0,1\})}\rightarrow 2^{(\dom{D}^*\times\{0,1\})}$  is called \emph{decontaminating}
    for a class of SFAs $\class{M}$ and a respective $\Concretize_\class{M}$ function
    if the following holds.  
    Let $\aut{M}\in\class{M}$ be an SFA, and let $\aut{D} = \Concretize_\class{M}(\aut{M})$. Let $\aut{S_\aut{D}} = \CharDFA(\aut{D})$.
    Then, for every $\aut{S}' \supseteq \aut{S}_\aut{D}$ such that $\aut{S}'$ agrees with $\aut{M}$, it holds that 
    $\aut{S}_\aut{D} \subseteq f_d (\aut{S}') \subseteq (\aut{S}' \cap \Gamma_\aut{D})$, where $\Gamma_\aut{D}$ is the alphabet~of~$\aut{S}_\aut{D}$.
\end{defi}

As before we say that $f_d$ is \emph{efficient} if it can be computed in polynomial time. 

\begin{exa}\label{ex:decon1}
Intuitively, $\InferDFA$ is only promised to be correct if it is applied on a sample set $\aut{S}'$ over the alphabet of the original DFA. For DFAs, this is always the case. However for SFAs, the concrete alphabet $\mathbb{D}$ usually contains more letters than appear in the characteristic set $\aut{S}$. 
If $\Gamma_D$ is the set of letters in $\aut{S}$ then  $\aut{S}'$ might contain letters in $\mathbb{D}\setminus\Gamma_D$, i.e., letters that are not from the alphabet of the characteristic set. 
Consider, for example, the characteristic set $\aut{S}$ over the interval algebra: $$\aut{S} = \{ \langle \epsilon, 0  \rangle, \langle 0, 1  \rangle, \langle 100, 0  \rangle, \langle 200, 0  \rangle,
     \langle 0\cdot 0, 1  \rangle, \langle 0\cdot 100, 1  \rangle,\langle 0\cdot 200, 0  \rangle
    \}$$ 
    and consider the set $\aut{S}' \supseteq \aut{S}$ that contains, in addition to the words in $\aut{S}$, also the word $150\cdot 100$. Since the letter $150$ is not part of any word in the original set $\aut{S}$, we cannot apply $\InferDFA$ as it is, but we first need to remove it from $\aut{S}'$. Note that words that are not in~$\aut{S}$ but are over the alphabet $\{ 0,100,200\}$ do not pose a problem, as $\InferDFA$ can handle supersets over the same alphabet as the set $\aut{S}$. 
See Examples~\ref{ex:infer_interval} and~\ref{ex:decontanimate} in \autoref{subsec:positive} for more details. 

\end{exa}

We now provide the sufficient condition for efficient identifiability. 

\begin{thm} \label{theorem:sufficient}
      Let $\class{M}_\alge{A}$ be a class of SFAs over a Boolean algebra $\alge{A}$. If there exist an efficient decontaminating function $\Decontaminate_{\class{M}_\alge{A}}$ and 
      efficient functions $\concretize_\alge{A}$ and $\generalize_\alge{A}$ satisfying that 
     \begin{center}
        if $\concretize_\alge{A}(\partition{\psi_1,\ldots,\psi_m})=\partition{\Gamma_1,\ldots,\Gamma_m}$ \\
        and $\generalize_\alge{A}(\partition{\Gamma'_1,\ldots,\Gamma'_m})=\partition{\varphi_1,\ldots,\varphi_m}$\\
        where  $\Gamma_i \subseteq \Gamma'_i$  for every $1\leq i\leq m$\\
        then  $\sema{\varphi_i}=\sema{\psi_i}$ for every $1\leq i \leq m$
    \end{center}
        then the class $\class{M}_\alge{A}$
     is efficiently identifiable.
\end{thm}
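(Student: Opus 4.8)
The plan is to build $\CharSFA_{\class{M}_\alge{A}}$ and $\InferSFA_{\class{M}_\alge{A}}$ by reducing to the DFA algorithms of Theorem~\ref{prop:charDFA}, using the schematic pipeline of \autoref{fig:diagram}. For the characteristic-sample direction, given an SFA $\aut{M}\in\class{M}_\alge{A}$, I would first apply $\Concretize_{\class{M}_\alge{A}}$ to obtain a DFA $\aut{D}_\aut{M}=\Concretize_{\class{M}_\alge{A}}(\aut{M})$ over the finite concrete alphabet $\Gamma_\aut{D}$ induced by the $\concretize_\alge{A}$ calls at each state, then run $\CharDFA(\aut{D}_\aut{M})$ to obtain a sample $\aut{S}_\aut{D}$, and finally relabel every word in $\aut{S}_\aut{D}$ according to $\aut{M}$ rather than $\aut{D}_\aut{M}$ (the two automata need not agree). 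The output $\CharSFA_{\class{M}_\alge{A}}(\aut{M})$ is this relabeled sample. For inference, given a sample $\aut{S}'$, I would first apply $\Decontaminate_{\class{M}_\alge{A}}$ to strip words using letters outside the alphabet of the characteristic sample, then run $\InferDFA$ on the decontaminated sample to recover a DFA, and finally apply $\Generalize_{\class{M}_\alge{A}}$ to lift the concrete partitions at each state back to predicates over $\alge{A}$, yielding the output SFA.

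I would then verify the two requirements of Definition~\ref{def:eff-ident}. The polynomial-time claims follow from efficiency of $\concretize_\alge{A}$, $\generalize_\alge{A}$, $\Decontaminate_{\class{M}_\alge{A}}$ (all assumed efficient), together with the polynomial-time guarantees for $\CharDFA$ and $\InferDFA$ from Theorem~\ref{prop:charDFA}; in particular, since $\concretize_\alge{A}$ is efficient the concrete partitions it produces are of polynomial size, so $\Gamma_\aut{D}$ and hence the whole sample stay polynomial. Agreement of $\InferSFA_{\class{M}_\alge{A}}(\aut{S})$ with $\aut{S}$ (requirement~1) should follow from the corresponding agreement property of $\InferDFA$ together with the fact that $\generalize_\alge{A}$ preserves the concrete partitions, i.e.\ $\sema{\psi_i}\supseteq\Gamma_i$, so no labeled word in the sample is misclassified by the generalized predicates.

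The core of the argument is requirement~2: correctness on any $\aut{S}$ with $\CharSFA_{\class{M}_\alge{A}}(\aut{M})\subseteq\aut{S}\subseteq\hatlang{\aut{M}}$. Here the key chain is: decontamination guarantees $\aut{S}_\aut{D}\subseteq\Decontaminate_{\class{M}_\alge{A}}(\aut{S})\subseteq(\aut{S}\cap\Gamma_\aut{D})$ by Definition~\ref{def:decont}, so the decontaminated input subsumes the DFA characteristic sample $\aut{S}_\aut{D}$ and is over the alphabet $\Gamma_\aut{D}$; therefore $\InferDFA$ returns a DFA $\aut{D}'$ recognizing the same language as $\aut{D}_\aut{M}$, and by the structural guarantees of Theorem~\ref{prop:charDFA} (items~\ref{item:lex}--\ref{item:notsim_inS}) its states correspond to those of $\aut{D}_\aut{M}$ with the same transition structure. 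The concrete partition $\partition{\Gamma'_1,\ldots,\Gamma'_m}$ that $\aut{D}'$ induces at each state $q$ then contains the partition $\partition{\Gamma_1,\ldots,\Gamma_m}$ from $\concretize_\alge{A}(\pi_q)$, i.e.\ $\Gamma_i\subseteq\Gamma'_i$. Invoking the compatibility hypothesis of the theorem, $\generalize_\alge{A}$ recovers predicates $\varphi_i$ with $\sema{\varphi_i}=\sema{\psi_i}$, so $\Generalize_{\class{M}_\alge{A}}$ reconstructs exactly the transitions of $\aut{M}$ and $\lang{\InferSFA_{\class{M}_\alge{A}}(\aut{S})}=\lang{\aut{M}}$.

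The main obstacle I expect is bookkeeping the correspondence between the states and transitions of $\aut{M}$, $\aut{D}_\aut{M}$, and the inferred $\aut{D}'$, so that the per-state partition inclusion $\Gamma_i\subseteq\Gamma'_i$ genuinely holds and the generalize/concretize round-trip applies at each state with matching indices. This requires that extra letters surviving decontamination (those in $\Gamma_\aut{D}$ but added by the adversary, consistent with $\aut{M}$) only \emph{enlarge} each $\Gamma_i$ within $\sema{\psi_i}$ and never cross partition blocks---which is where both the decontamination bound $f_d(\aut{S}')\subseteq\aut{S}'\cap\Gamma_\aut{D}$ and the consistency $\aut{S}'\subseteq\hatlang{\aut{M}}$ are essential, since consistency forces any surviving letter labeled toward $q'$ to satisfy the predicate $\psi_i$ labeling the $q\to q'$ transition of $\aut{M}$.
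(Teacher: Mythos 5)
Your pipeline is essentially the paper's: $\CharSFA=\CharDFA\circ\Concretize_{\class{M}_\alge{A}}$ and $\InferSFA=\Generalize_{\class{M}_\alge{A}}\circ\InferDFA\circ\Decontaminate_{\class{M}_\alge{A}}$, with requirement~2 of Definition~\ref{def:eff-ident} argued exactly as in the paper (decontamination recovers a sample over $\Gamma_\aut{D}$ subsuming $\aut{S}_\aut{D}$, $\InferDFA$ reconstructs $\aut{D}_\aut{M}$, the per-state concrete partitions subsume the outputs of $\concretize_\alge{A}$, and the compatibility hypothesis recovers the original predicates). Your extra relabeling step in $\CharSFA$ is a no-op: since $\Gamma_i\subseteq\sema{\psi_i}$, the DFA $\aut{D}_\aut{M}$ and the SFA $\aut{M}$ agree on every word over $\Gamma_\aut{D}$, so the sample produced by $\CharDFA(\aut{D}_\aut{M})$ is already consistent with $\aut{M}$.

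The genuine gap is in requirement~1. You claim that agreement of $\InferSFA(\aut{S})$ with $\aut{S}$ follows from the agreement property of $\InferDFA$ together with $\sema{\varphi_i}\supseteq\Gamma_i$. This fails on arbitrary samples: $\InferDFA$ is only guaranteed to agree with the \emph{decontaminated} sample, and any word of $\aut{S}$ containing a letter that $\Decontaminate_{\class{M}_\alge{A}}$ discards is never seen by $\InferDFA$ at all; after $\Generalize_{\class{M}_\alge{A}}$ lifts the concrete partitions to predicate partitions covering all of $\dom{D}$, the output SFA assigns that word some classification with no guarantee that it matches the word's label in $\aut{S}$. Worse, Definition~\ref{def:decont} only constrains the behaviour of the decontaminating function on samples that subsume a characteristic sample of some $\aut{M}\in\class{M}$ and agree with $\aut{M}$; on an arbitrary input sample it may discard essentially anything. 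The paper closes this hole by building an explicit consistency test into $\InferSFA$: it checks whether $\aut{S}\subseteq\hatlang{\aut{M}_\aut{S}}$ for the constructed $\aut{M}_\aut{S}$, and if not, falls back to the symbolic prefix-tree automaton of $\aut{S}$, which agrees with $\aut{S}$ by construction and is computable in polynomial time. Your proposal needs this test-and-fallback (or an equivalent mechanism) for requirement~1 to hold; the rest of the argument then goes through as you describe.
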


Given functions $\concretize_{\alge{A}}$, $\generalize_{\alge{A}}$ and
$\Decontaminate_{\class{M}_\alge{A}}$
 for a class~$\class{M}_\alge{A}$ of SFAs over a Boolean algebra $\alge{A}$, meeting the criteria of Theorem~\ref{theorem:sufficient}, 
we show that $\class{M}_\alge{A}$  can be efficiently identified by providing two algorithms $\CharSFA$ and $\InferSFA$,  described below. 
These algorithms make use of the respective algorithms  $\CharDFA$ and $\InferDFA$ guaranteed in Theorem~\ref{prop:charDFA}, as well as the methods 
provided by the theorem.

We briefly describe these two algorithms, and then turn to prove Theorem~\ref{theorem:sufficient}.
The algorithm $\CharSFA$ 
receives an SFA ${\aut{M}\in \class{M}}$,
and returns a characteristic sample for it. 
It does so by 
applying $\Concretize_{\class{M}_\alge{A}}(\aut{M})$ (Algorithm~\ref{alg:conc})  to
construct a DFA $\aut{D}_\aut{M}$ 
and generating the sample $\mathcal{S}_\aut{M}$ using the algorithm $\CharDFA$ applied on the DFA $\aut{D}_\aut{M}$. 

Algorithm $\InferSFA$, given a sample set~$\mathcal{S}$, if~$\aut{S}$ subsumes a characteristic set of an
SFA~$\aut{M}$, returns an equivalent SFA. Otherwise $\InferSFA$ returns an SFA that agrees with the sample~$\aut{S}$.
First, it 
applies $\Decontaminate_{\class{M}_\alge{A}}$ to
find a subset $\aut{S'}\subseteq\aut{S}$ over the alphabet of the subsumed characteristic sample, if such a subsumed sample exists.
Then it uses $\aut{S'}$ to construct a DFA by applying the inference algorithm $\InferDFA$ on $\aut{S'}$. 
From this DFA it constructs an SFA, $\aut{M}_\mathcal{S}$, by applying  $\Generalize_{\class{M}_\alge{A}}$  (Algorithm~\ref{alg:gen}). 
If the resulting automaton disagrees with the given sample it resorts to returning the prefix-tree automaton. 
In order to construct the symbolic prefix-tree automaton we 
first construct the prefix-tree DFA $\aut{A}$ for the set $\mathcal{S}$, and then apply $\Generalize_{\class{M}_\alge{A}}(\aut{A})$ to get an SFA that agrees with $\mathcal{S}$.

In brief, we define: 
\begin{itemize}
\item
$\CharSFA (\aut{M}) = \CharDFA (\Concretize_{\class{M}_\alge{A}} (\aut{M}) )$
\item
$\InferSFA(\aut{S}) \!=\! \!
    \begin{cases}
      \aut{M}_\aut{S} \!:= \Generalize_{\class{M}_\alge{A}} (\InferDFA(\Decontaminate_{\class{M}_\alge{A}}(\mathcal{S})))
 & \text{if }\mathcal{S} \!\subseteq\hatlang{\aut{M}_\mathcal{S}}\\
      \text{The symbolic prefix-tree automaton of $\aut{S}$} & \text{otherwise}
    \end{cases}$
\end{itemize}

In \autoref{subsec:positive} we provide
methods $\concretize_{\alge{A}}$, $\generalize_{\alge{A}}$ and
$\Decontaminate_{\class{M}_\alge{A}}$ for SFAs over monotonic algebras,
deriving their identification in the limit
result.
We now prove~Theorem~\ref{theorem:sufficient}.

\begin{proof}[Proof of Theorem~\ref{theorem:sufficient}] 
Given functions $\concretize_\alge{A}$, $\generalize_\alge{A}$, and $\Decontaminate_{\class{M}_\alge{A}}$,  
we show that the algorithms 
$\CharSFA$ and $\InferSFA$ satisfy the requirements of Definition~\ref{def:eff-ident}.

For the first condition, given that  $\CharDFA$, $\Decontaminate_{\class{M}_\alge{A}}$ and $\Generalize_\alge{A}$ run in polynomial time, and that the prefix-tree automaton can be constructed in polynomial time, 
it is clear that so does $\InferSFA$.
In addition, the test performed in the definition of $\InferSFA$
ensures the output agrees with the sample. 

For the second condition, note that the sample generated by $\CharSFA$ is polynomial in the size of $\aut{D}_\aut{M}$, from the correctness of $\CharDFA$. In addition, since $\Concretize_\alge{A}$ is efficient, $\aut{D}_\aut{M}$ is polynomial in the size of $\aut{M}$, and thus $\aut{S}_\aut{M}$ generated by $\CharSFA$ is polynomial in~$\aut{M}$ as well. 
It is left to show that given $\mathcal{S}_\aut{M}$ is the concrete sample produced by $\CharSFA$ when running on an SFA~$\aut{M}$, then when $\InferSFA$ runs on any sample~$\mathcal{S} \supseteq \mathcal{S}_\aut{M}$ it returns an SFA for~$\lang{\aut{M}}$.
Since $\Decontaminate_{\class{M}_\alge{A}}$ is a decontaminating function, and $\aut{S}\supseteq \aut{S}_\aut{M}$,
it holds that the set $\aut{S'}=\Decontaminate_{\class{M}_\alge{A}}(\aut{S})$ is such that $\aut{S'}\supseteq \aut{S}_\aut{M}$ and is only over the alphabet $\Gamma_\aut{M}$, which is  the alphabet of the DFA $\aut{D}_\aut{M}$ generated in Algorithm~\ref{alg:conc}.

From the correctness of $\InferDFA$,
given $\aut{S'}\supseteq\aut{S}_\aut{M}$, 
applying $\InferDFA$ on the output~$\aut{S}'$ of $\Decontaminate_{\class{M}_\alge{A}}$
results in a DFA~$\aut{D}$ 
that is equivalent to~$\aut{D}_\aut{M}$ constructed in Algorithm~\ref{alg:conc}. 
Since~$\aut{D}_\aut{M}$ is complete with respect to its alphabet $\Gamma_\aut{M}$, for state~$q$ of~$\aut{D}$, the concrete partition $\langle \Gamma_1,\ldots,\Gamma_n\rangle$  generated in Algorithm~\ref{alg:gen}, line~\rflinegenGammai,
covers $\Gamma_\aut{M}$ and subsumes the output of $\concretize_{\class{M}_\alge{A}}$ on~$\pi_q$ (Algorithm~\ref{alg:conc}, line~\rflinesigmaConc).
 Thus, since $\generalize_\alge{A}$
and $\concretize_\alge{A}$ satisfy the criteria of Theorem~\ref{theorem:sufficient}, it holds that the constructed predicates agree with the original predicates. 
In addition, since~$\aut{S}$, and therefore $\aut{S'}$, agrees with~$\aut{M}$, the test performed in the definition of $\InferSFA$ 
succeeds and the returned SFA is 
equivalent to $\aut{M}$.
\end{proof}

\section{Positive Result}\label{subsec:positive}
We present the following positive result regarding monotonic algebras.

\begin{thm}\label{prop:sample_monotonic}%
    Let $\class{M}_{\alge{A_m}}$ be the set of SFAs over a monotonic Boolean algebra $\alge{A_m}$. 
    Then $\class{M}_{\alge{A_m}}$ is efficiently identifiable.
\end{thm}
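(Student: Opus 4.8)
The plan is to instantiate the sufficient condition of Theorem~\ref{theorem:sufficient}: it suffices to exhibit efficient functions $\concretize_{\alge{A_m}}$ and $\generalize_{\alge{A_m}}$, together with an efficient decontaminating function $\Decontaminate_{\class{M}_{\alge{A_m}}}$, such that concretizing a predicate partition and then generalizing any consistent concrete partition that subsumes the result recovers semantically equivalent predicates. Since $\alge{A_m}$ is monotonic, every atomic predicate is an interval $[a,b)$, there is a total order on $\dom{D}$ with endpoints $\dmin,\dmax$, and by Lemma~\ref{lemma:canonical} every predicate has a canonical form as a disjunction of maximal disjoint intervals. I would mimic the construction already carried out for the interval algebra in Example~\ref{ex:con_gen}, lifting it from the dyadic to the variadic setting.

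Concretely, for $\concretize_{\alge{A_m}}(\partition{\psi_1,\ldots,\psi_m})$ I would first bring each $\psi_i$ to its canonical form (efficient by Lemma~\ref{lemma:dnf_monotonic}) and set $\Gamma_i$ to the set of left endpoints of the maximal intervals comprising $\psi_i$. These ``breakpoints'' lie in $\sema{\psi_i}$ and, collectively, record every point (including $\dmin$) at which the partition switches from one block to the next. For $\generalize_{\alge{A_m}}(\partition{\Gamma_1,\ldots,\Gamma_m})$ I would sort the union of the $\Gamma_i$ as $\gamma^{(1)}<\cdots<\gamma^{(t)}$ and let $\varphi_i$ be the disjunction of all intervals $[\gamma^{(j)},\gamma^{(j+1)})$ with $\gamma^{(j)}\in\Gamma_i$, extending the first interval down to $\dmin$ and the last up to $\dmax$ so that the $\varphi_i$ cover $\dom{D}$.

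Verifying the compatibility condition is then a direct interval argument of the kind already used in the conjunction case of Lemma~\ref{lemma:dnf_monotonic}. If $\concretize_{\alge{A_m}}(\partition{\psi_1,\ldots,\psi_m})=\partition{\Gamma_1,\ldots,\Gamma_m}$ and $\generalize_{\alge{A_m}}$ is fed a consistent $\partition{\Gamma'_1,\ldots,\Gamma'_m}$ with $\Gamma_i\subseteq\Gamma'_i$ (so every original breakpoint, and in particular $\dmin$, is still present and every extra letter lies inside the block it was drawn from, which is exactly the situation arising in the proof of Theorem~\ref{theorem:sufficient}), then the sorted sequence of the $\Gamma'_i$ merely refines the intervals between consecutive original breakpoints without crossing them. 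Each block $\sema{\psi_i}$ is thus reconstructed as a union of finer adjacent intervals whose union is unchanged, giving $\sema{\varphi_i}=\sema{\psi_i}$.

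The main obstacle is $\Decontaminate_{\class{M}_{\alge{A_m}}}$, which must discard from a subsuming sample exactly the words using concrete letters outside the alphabet $\Gamma_\aut{D}$ of the characteristic sample, \emph{without} knowing $\Gamma_\aut{D}$. The idea is to exploit monotonicity to recognize genuine breakpoints directly from the sample: processing prefixes one at a time, I would keep a one-letter extension $u\gamma$ precisely when $\gamma$ is the smallest letter extending $u$ in the sample, or when $u\gamma\not\sim_{\aut{S}'}u\gamma^-$ for the next-smaller extending letter $\gamma^-$, and then retain only the words all of whose extensions are kept. A contaminating letter is an interior point of some block, hence $\sim_{\aut{S}'}$-equivalent to the smaller letters of the same block, and is dropped; a genuine breakpoint is separated from its predecessor because in the concretized DFA --- which we may take to be minimal, using the canonical form of Lemma~\ref{lemma:canonical} so that adjacent intervals lead to distinct states --- the two land in inequivalent states, and the distinguishing suffix guaranteed by Theorem~\ref{prop:charDFA} already sits in $\aut{S}_\aut{D}\subseteq\aut{S}'$. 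Proving rigorously that this rule keeps all of $\aut{S}_\aut{D}$ yet never admits a letter outside $\Gamma_\aut{D}$, as required by Definition~\ref{def:decont}, is the technical heart of the argument; once it is in place, polynomiality of all three functions (sorting, plus the linear canonical-form computation of Lemma~\ref{lemma:dnf_monotonic}) is immediate, and Theorem~\ref{theorem:sufficient} yields efficient identifiability of $\class{M}_{\alge{A_m}}$.
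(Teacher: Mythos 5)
Your overall route is exactly the paper's: instantiate the sufficient condition of Theorem~\ref{theorem:sufficient} by exhibiting $\concretize_{\alge{A_m}}$, $\generalize_{\alge{A_m}}$ and a decontaminating function. Your concretizing and generalizing functions coincide in substance with those of Proposition~\ref{lemma:necessary_monotonic} (the paper first splits each predicate into its maximal intervals so that every block contributes a single left endpoint, whereas you keep blocks intact and record all their left endpoints; the resulting set of breakpoints, and the argument that a consistent refinement cannot cross them, are the same), and that part of your proposal is sound.

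The gap is in the decontaminating rule, which you correctly identify as the technical heart but whose stated form fails. You keep $u\gamma$ when $u\gamma\not\sim_{\aut{S}'}u\gamma^-$ for $\gamma^-$ the next-smaller letter \emph{extending $u$ in the sample}. But $\gamma^-$ may itself be a contaminating letter, and Theorem~\ref{prop:charDFA} only guarantees distinguishing suffixes for words in $S\cdot\Gamma_\aut{D}\cdot E$; a contaminated word $u\gamma^-$ may occur in $\aut{S}'$ with no informative suffixes, so $\not\sim_{\aut{S}'}$ can fail to separate $u\gamma$ from $u\gamma^-$ even though they reach different states. Concretely, take $q_\iota\xrightarrow{[0,100)}q_1$ and $q_\iota\xrightarrow{[100,\infty)}q_{rj}$ with $q_{rj}$ a rejecting sink, $q_1$ rejecting, and every letter from $q_1$ leading to an accepting state. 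The characteristic sample contains $\langle 0,0\rangle$, $\langle 100,0\rangle$, $\langle 0\cdot 0,1\rangle$, $\langle 100\cdot 0,0\rangle$; now add the consistent example $\langle 50,0\rangle$. Your rule compares $100$ with $\gamma^-=50$, whose only common suffix in the sample is $\epsilon$, on which both are labeled $0$; hence $100\sim_{\aut{S}'}50$ and the genuine breakpoint $100$ is discarded, violating the requirement $\aut{S}_\aut{D}\subseteq f_d(\aut{S}')$ of Definition~\ref{def:decont}. The paper's Algorithm~\ref{alg:dec} avoids this by comparing each candidate letter against the \emph{last letter already admitted} into $\Sigma'$ (starting from $\dmin$), which by induction lies in $\Gamma_\aut{D}$ and therefore carries its distinguishing suffixes inside $\aut{S}_\aut{D}\subseteq\aut{S}'$; in the example, $100$ is compared with $0$ and the pair $0\cdot 0$, $100\cdot 0$ separates them. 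With that one change --- and with the prefixes $u$ you process restricted to lex-access words built from already-admitted letters, as in the paper's set $A_w$, so that contaminated prefixes cannot generate spurious breakpoints --- your argument goes through and matches Lemma~\ref{lemma:induction} and Proposition~\ref{lemma:sufficient_monotonic}.
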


In order to prove Theorem~\ref{prop:sample_monotonic}, we show that the sufficient condition holds for the case of monotonic algebras. 
Example~\ref{ex:infer_interval} demonstrates how to apply $\CharSFA$ and $\InferSFA$ in order to learn an SFA over the algebra~$\alge{A_\dom{N}}$.

\begin{prop} \label{lemma:necessary_monotonic}
There exist functions
$\concretize_{\alge{A_m}}$ and 
 $\generalize_{\alge{A_m}}$ for 
 a monotonic Boolean algebra $\alge{A_m}$, satisfying the criteria of Theorem~\ref{theorem:sufficient}. 
\end{prop}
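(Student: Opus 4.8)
The plan is to define $\concretize_{\alge{A_m}}$ and $\generalize_{\alge{A_m}}$ via the canonical interval structure guaranteed for monotonic algebras (Definition~\ref{def:canon} and Lemma~\ref{lemma:dnf_monotonic}), generalizing the interval-algebra functions of Example~\ref{ex:con_gen} from single intervals to arbitrary predicates. Given a predicate partition $\partition{\psi_1,\ldots,\psi_m}$, I would first rewrite each $\psi_i$ in its canonical form as a disjunction of maximal disjoint intervals. Since the partition covers $\dom{D}$ with pairwise-disjoint semantics, the collection of all these maximal intervals (over all $i$) tiles $\dom{D}$ into consecutive blocks. I then set $\concretize_{\alge{A_m}}(\partition{\psi_1,\ldots,\psi_m}) = \partition{\Gamma_1,\ldots,\Gamma_m}$, where $\Gamma_i$ is the set of left endpoints of the blocks belonging to $\psi_i$. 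Each such endpoint lies in $\sema{\psi_i}$, the $\Gamma_i$ are pairwise disjoint, and their union consists exactly of the left endpoints of all blocks, so $\concretize_{\alge{A_m}}$ is a valid concretizing function in the sense of Definition~\ref{def:gen-conc}.

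For $\generalize_{\alge{A_m}}(\partition{\Gamma'_1,\ldots,\Gamma'_m})$, I would sort the union $C = \Gamma'_1 \cup \cdots \cup \Gamma'_m = \{e_1 < \cdots < e_t\}$, and for each point $e \in \Gamma'_i$ contribute the interval $[e, e')$, where $e'$ is the successor of $e$ in $C$ (using the top boundary of the domain for $e_t$, and extending the leftmost interval down to $\dmin$ so that the output covers $\dom{D}$). Setting $\varphi_i$ to the disjunction of the intervals contributed by $\Gamma'_i$ yields a predicate partition, and both functions clearly run in polynomial time (canonicalization is linear by Lemma~\ref{lemma:dnf_monotonic}, and the rest is sorting and interval formation), with $\concretize_{\alge{A_m}}$ producing polynomially many boundary letters.

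The correctness argument --- showing $\sema{\varphi_i} = \sema{\psi_i}$ whenever $\Gamma_i \subseteq \Gamma'_i$ --- is where the real work lies. The key observation is that every block boundary of every $\psi_i$ already appears in $\bigcup_i \Gamma_i \subseteq C$: the right endpoint $b$ of a maximal block $[a,b)$ of $\psi_i$ is the left endpoint of an adjacent block of some $\psi_j$, hence $b \in \Gamma_j \subseteq C$. Consequently, for any $e \in \Gamma'_i$ lying in a block $[a,b)$ of $\psi_i$, the successor of $e$ in $C$ cannot exceed $b$, so the reconstructed interval $[e, e')$ stays inside $\sema{\psi_i}$; this gives $\sema{\varphi_i} \subseteq \sema{\psi_i}$. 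The reverse inclusion then follows from a covering argument: the sets $\sema{\varphi_i}$ tile $\dom{D}$ and each is contained in the corresponding $\sema{\psi_i}$, while the $\sema{\psi_i}$ are themselves pairwise disjoint and cover $\dom{D}$, so no point of $\sema{\psi_i}$ can be claimed by another $\varphi_j$ with $j \ne i$.

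The main obstacle I anticipate is controlling the extra concrete letters in $\Gamma'_i$ beyond those produced by $\concretize_{\alge{A_m}}$. The ``no overshoot'' step above requires that these extra letters still satisfy $\psi_i$, i.e.\ that $\Gamma'_i \subseteq \sema{\psi_i}$; a stray letter lying in $\sema{\psi_j}$ for $j \ne i$ would corrupt the reconstruction, and $\generalize_{\alge{A_m}}$ has no way to detect it. I would therefore emphasize that this consistency is exactly what the surrounding use in Theorem~\ref{theorem:sufficient} supplies --- the additional letters come from a sample that agrees with the target SFA, so they satisfy the predicate labeling their transition --- and that the boundary set $\bigcup_i\Gamma_i$ chosen by $\concretize_{\alge{A_m}}$ is precisely what pins down each block's extent regardless of these interior points. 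Careful treatment of the bottom element $\dmin$ and of the top boundary, so that the leftmost and rightmost reconstructed intervals cover the ends of $\dom{D}$, is a minor but necessary bookkeeping detail.
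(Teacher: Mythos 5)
Your construction is correct and follows essentially the same route as the paper: concretize a predicate partition by taking the (minimal) left endpoints of its intervals, and generalize a concrete partition by scanning the letters in increasing order and extending each interval up to the next letter belonging to a different class. The only differences are presentational --- the paper first reduces to basic predicates, so each $\Gamma_i$ is the singleton $\{\min\sema{\psi_i}\}$, whereas you keep all block endpoints of a general predicate; and you spell out explicitly both the ``no overshoot'' argument and the implicit assumption $\Gamma'_i\subseteq\sema{\psi_i}$, which the paper also relies on (its correctness step assumes $\Gamma_i\subseteq\Gamma'_i\subseteq\sema{\psi_i}$).
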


\begin{proof}
Let $\dom{D}$ be the domain of $\alge{A_m}$. 
We provide the functions $\concretize_{\alge{A_m}}$ and 
 $\generalize_{\alge{A_m}}$
 and prove that the criteria of Theorem~\ref{theorem:sufficient} hold for them.
For ease of presentation, for the function $\concretize_{\alge{A_m}}$  we consider 
basic predicates. Note that for monotonic algebras, basic predicates are in fact intervals, as a conjunction of intervals is an interval. We can assume all predicates are basic since, as we show in Lemma~\ref{lemma:dnf_monotonic}, 
 for monotonic algebras the transformation from a general formula to a DNF formula of basic predicates is linear. Then, each basic predicate in the formula corresponds to a different predicate in the predicate partition. 
The definitions 
of $\concretize_{\alge{A_m}}$ and 
 $\generalize_{\alge{A_m}}$
are generalizations of the functions $\concretize_{\alge{A_\dom{N}}}$ and  $\generalize_{\alge{A_\dom{N}}}$ given in Example~\ref{ex:con_gen}.
We define 
$\concretize_{\alge{A_m}}(\partition{ \psi_1 ,\ldots \psi_m}) = \partition{\Gamma_1, \ldots , \Gamma_m}$ where we set $\Gamma_i = \{\ \min \{ {d\in\dom{D}}\,|\,{d\in\sema{\psi_i}}  \}\ \} $ for $1\leq i \leq m$.
 Since $\alge{A_m}$ is monotonic, $\Gamma_i$ is well-defined and contains a single element, 
 thus $\concretize_{\alge{A_m}}$ is an efficient concretizing function.

We define $\generalize_{\alge{A_m}}(\partition{\Gamma_1,\ldots, \Gamma_m}) = \partition{\psi_1, \ldots, \psi_m}$, where $\psi_i$ is defined as follows. 
Let $\Gamma= \bigcup_{1\leq i\leq m} \Gamma_i$. 
First, for all $1\leq i\leq m$ we set $\psi_i = \bot$. 
Then, we iteratively look for the minimal element $\gamma\in\Gamma$. Let~$i$ be such that $\gamma\in \Gamma_{i}$, and let~$\gamma'$ be the minimal element in $\Gamma$ satisfying $\gamma'\notin \Gamma_{i}$. We then set  $\psi_{i} = \psi_{i} \vee [\gamma, \gamma')$, and remove all elements $\gamma \leq \gamma'' < \gamma'$ from~$\Gamma$. We repeat the process until for the found $\gamma\in\Gamma_{j}$, there is no $\gamma' > \gamma$ such that $\gamma'\notin\Gamma_{j}$. 
In that case, we define  $\psi_{j} = \psi_{j} \vee [\gamma, \dmax)$.
Then, $\Gamma_i \subseteq \sema{\psi_i} $ and the predicates are disjoint,  
thus $\generalize_{\alge{A_m}}$ is an efficient generalizing function. See Example~\ref{exa:genIntervals} and \autoref{fig:genIntervals}.

Now, let $\partition{\Gamma_1, \ldots, \Gamma_m}$ be the concrete partition obtained from $\concretize_{\alge{A_m}}$ when applied on the predicate partition $\partition{\psi_1,\ldots,\psi_m}$. Assume further that the predicate partition $\partition{\Gamma'_1, \ldots, \Gamma'_m}$ satisfies  $\Gamma_i \subseteq \Gamma'_i \subseteq \sema{\psi_i}$ for $1\leq i \leq m$. In particular, 
$\min(\Gamma'_i) = \min(\Gamma_i)$, since~$\Gamma_i$ contains the minimal  
elements in $\sema{\psi_i}$, and $\Gamma_i \subseteq \Gamma'_i \subseteq \sema{\psi_i}$. Thus applying $\generalize_{\alge{A_m}}$ will result in the same interval, satisfying the criterion 
of Theorem~\ref{theorem:sufficient}.
\end{proof}

\begin{figure}
    \centering
    \subfloat[]{\label{figexa-1}\includegraphics[scale=.35]{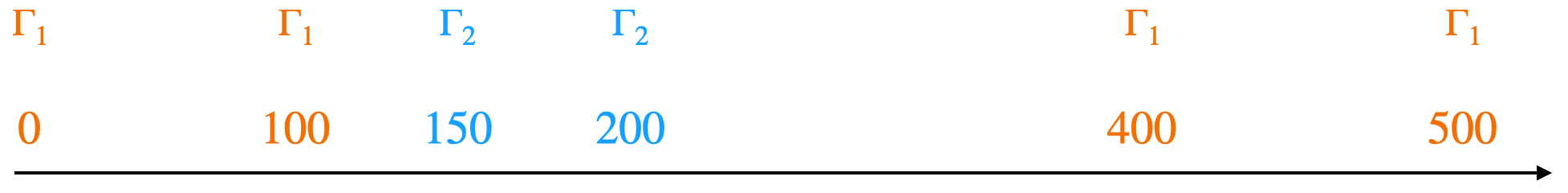}}\\
      \centering 
      \subfloat[]{\label{figexa-3}\includegraphics[scale=.35]{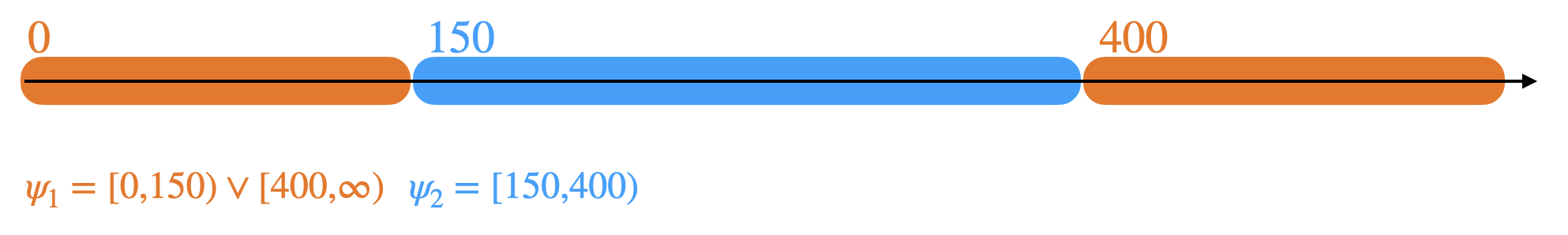}}
    \caption{Visualisation of Example~\ref{exa:genIntervals}}
    \label{fig:genIntervals}
\end{figure}

\begin{exa}\label{exa:genIntervals}
     Let $\Gamma_1 = \{ 0, 100, 400, 500 \}$ and $\Gamma_2 = \{ 150, 200 \}$ over the algebra ${{\alge{A_\dom{N}}}}$ with domain $\dom{N}\cup\{ \infty \}$, see Figure~\ref{figexa-1}. Then, $\generalize_\alge{A_\dom{N}}$ sets $\Gamma = \{0, 100, 150, 200, 400, 500 \}$, and finds the minimal element in $\Gamma$, which is $0$. Since $0\in\Gamma_1$, it then looks for the minimal element $\gamma\in\Gamma$ such that $\gamma\notin\Gamma_1$, and finds $150\in\Gamma_2$. Therefore $\psi_1 = [0, 150)$ and we remove from $\Gamma$ all elements in $[0, 150)$, that is, we remove $0$ and $100$, and we have $\Gamma = \{ 150, 200, 400, 500 \}$. Next, it finds the minimal element in the updated $\Gamma$, which is $150$ and is in $\Gamma_2$. The minimal element in $\Gamma\setminus \Gamma_2$ is $400$. 
     Then, $\psi_2$ is set to be $\psi_2 = [150, 400)$ and $\Gamma = \{400, 500\}$.
     Now, $\psi_1 = [0, 150) \vee [400, \infty)$ since $400\in \Gamma_1$ and there is no greater element that is not in $\Gamma_1$ (Figure~\ref{figexa-3}).
\end{exa}

To show that any class of SFAs $\class{M}_\alge{A_m}$ over a monotonic algebra $\alge{A_m}$
is efficiently identifiable, we define in Algorithm~\ref{alg:dec} an algorithm that implements a decontaminating function
$\Decontaminate_{\class{M}_\alge{A_m}}$, 
that fulfills the requirements of Theorem~\ref{theorem:sufficient}.
Loosely speaking, the idea of the algorithm is to simultaneously collect elements into two sets $A_w$ and $\Sigma'$
such that~$A_w$ will consist of the minimal representative 
 according to the lexicographic order of each
equivalence class in $\sim_\aut{S}$ and $\Sigma'$
will consist of minimal letters aiding to distinguish these words. 
When this process terminates the algorithm returns the subset of words in the sample
that consist of only letters in $\Sigma'$.

\begin{figure}[t]
	
	\centering
	
	 \includegraphics[page=2, width=0.9\textwidth, clip, trim=110 435 120 155]{algorithms.pdf}
	\captionof{algorithm}{$\Decontaminate_{\class{M}_{\alge{A_m}}}$ -- finding the necessary letters for a characteristic sample}\label{alg:dec}
\end{figure}

\begin{lem}\label{lemma:induction}
Assume the input to $\Decontaminate_{\class{M}_{\alge{A_m}}}$ is $\aut{S}$, where $\aut{S} \supseteq \aut{S}_\aut{M}$
for some $\aut{M}\in\class{M}_{\alge{A_m}}$ such that $\aut{S}_\aut{M}= \CharDFA(\Concretize_{\class{M}_{\alge{A_m}}} (\aut{M}) )$, and
$\aut{D}_\aut{M}=\Concretize_{\class{M}_{\alge{A_m}}}(\aut{M})$ is over the alphabet $\Gamma_\aut{M}$.
Then for $\Sigma'$ constructed by $\Decontaminate_{\class{M}_{\alge{A_m}}}$ (Algorithm~\ref{alg:dec}) it holds that $\Sigma'\!=\!\Gamma_{\aut{M}}$.
\end{lem}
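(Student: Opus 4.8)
The plan is to prove the set equality $\Sigma'=\Gamma_\aut{M}$ by two inclusions, reading off $\Gamma_\aut{M}$ from the concretization and $\Sigma'$ from the run of Algorithm~\ref{alg:dec}, and matching the two through an invariant maintained along the algorithm's iterations. First I would make explicit what $\Gamma_\aut{M}$ is: by the definition of $\Concretize_{\class{M}_{\alge{A_m}}}$ (Algorithm~\ref{alg:conc}) together with the concretizing function $\concretize_{\alge{A_m}}$ of Proposition~\ref{lemma:necessary_monotonic}, for every state $q$ of $\aut{M}$ with outgoing predicate partition $\pi_q=\langle\psi_1,\ldots,\psi_m\rangle$ the DFA $\aut{D}_\aut{M}$ carries exactly the outgoing letters $\min\sema{\psi_1},\ldots,\min\sema{\psi_m}$, i.e.\ the left endpoints of the intervals at $q$. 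Hence $\Gamma_\aut{M}$ is precisely the set of these minimal interval endpoints, collected over all (reachable) states. The crucial consequence of monotonicity I would record here is a step-function property: fixing an access word $w$ of a state $q$, the map sending a letter $\gamma\in\dom{D}$ to the state that $\aut{M}$ reaches on $w\gamma$ is constant on each block $\sema{\psi_i}$ of $\pi_q$ and therefore changes value only when $\gamma$ crosses a left endpoint $\min\sema{\psi_i}$; since $\aut{S}\supseteq\aut{S}_\aut{M}$ still agrees with $\aut{M}$, these crossings are exactly the places where the $\sim_\aut{S}$-class of $w\gamma$ can change.

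Next I would set up the invariant tying the algorithm to this picture, and prove it by induction on the iterations of Algorithm~\ref{alg:dec}: at all times $A_w$ is a set of lex-access words to distinct states of $\aut{D}_\aut{M}$, and each letter placed into $\Sigma'$ is a detected breakpoint of the above step function for some $w\in A_w$. The soundness inclusion $\Sigma'\subseteq\Gamma_\aut{M}$ follows immediately from the invariant: the algorithm inserts $\gamma$ only when $w\gamma$ is $\sim_\aut{S}$-distinguished from $w\gamma''$ for the previous candidate letter $\gamma''$, and by the step-function property such a change can occur only at a left endpoint, hence at an element of $\Gamma_\aut{M}$. Here it is essential that $\aut{S}$ \emph{agrees with} $\aut{M}$, so that the $\sim_\aut{S}$-classification on the relevant words coincides with the classification given by $\aut{M}$; otherwise a contaminating word could force a spurious distinction.

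For the completeness inclusion $\Gamma_\aut{M}\subseteq\Sigma'$ I would invoke the three structural guarantees of Theorem~\ref{prop:charDFA} on $\aut{S}_\aut{M}=\CharDFA(\aut{D}_\aut{M})$: it contains a prefix-closed set of pairwise-separated lex-access words together with the one-letter extensions and distinguishing suffixes needed to recover every transition of $\aut{D}_\aut{M}$. Since $\aut{S}\supseteq\aut{S}_\aut{M}$, every letter of $\Gamma_\aut{M}$ already appears in the sample as a one-letter extension, and each transition change at such a letter is witnessed by suffixes present in the sample. An auxiliary induction on the breadth-first order in which states of $\aut{D}_\aut{M}$ are reached then shows that every reachable state obtains a lex-access word in $A_w$, so that for each state $q$ and each left endpoint $\gamma\in\Gamma_\aut{M}$ at $q$ the algorithm sees the corresponding step change and inserts $\gamma$ into $\Sigma'$. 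This direction requires that consecutive intervals at $q$ lead to \emph{distinct} states, so that every endpoint is an \emph{observable} breakpoint; I would secure this by working with the canonical minimal neat SFA of Lemma~\ref{lemma:canonical} (equivalently, $\aut{D}_\aut{M}$ as the minimal complete DFA required by Theorem~\ref{prop:charDFA}), where adjacent intervals to a common target are always merged into a single maximal interval.

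The step I expect to be the main obstacle is controlling contamination while still exposing every endpoint, i.e.\ simultaneously making the two inclusions airtight. On one side, $\aut{S}$ may contain words using letters outside $\Gamma_\aut{M}$, and one must argue these are never mistaken for breakpoints nor allowed to corrupt the $\sim_\aut{S}$-classes: any contaminating letter $\gamma'$ lies in the interior of some interval $\sema{\psi_i}$, so $w\gamma'$ is $\sim_\aut{S}$-equivalent to $w\min\sema{\psi_i}$ (both reach the same state of $\aut{M}$, and $\aut{S}$ agrees with $\aut{M}$), whence scanning in increasing order absorbs $\gamma'$ into the class of the endpoint already handled and produces no new selection. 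On the other side, one must guarantee that no genuine endpoint is hidden behind such absorptions — this is exactly where the canonical-form/distinct-consecutive-targets property is used. Making this ``interior letters are absorbed, endpoints are selected'' argument precise, uniformly over all states and consistently with the lexicographic and breadth-first bookkeeping of $A_w$, is the technical heart of the lemma; the remaining polynomiality and routine consistency checks follow directly from the efficiency of $\concretize_{\alge{A_m}}$, $\generalize_{\alge{A_m}}$, and $\CharDFA$.
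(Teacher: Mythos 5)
Your proposal follows essentially the same route as the paper's proof: an induction over the iterations of Algorithm~\ref{alg:dec} maintaining that $A_w$ consists of lex-access words of $\aut{D}_\aut{M}$, a soundness inclusion $\Sigma'\subseteq\Gamma_\aut{M}$ obtained from the fact that $\aut{S}$ agrees with $\aut{M}$ (so a $\sim_\aut{S}$-change at a minimal letter can only happen at a left endpoint of an interval), and a completeness inclusion $\Gamma_\aut{M}\subseteq\Sigma'$ obtained from items (1)--(3) of Theorem~\ref{prop:charDFA} applied to $\aut{S}_\aut{M}\subseteq\aut{S}$. The one genuine difference is that you make explicit a hypothesis the paper's proof uses silently: for the completeness direction, every left endpoint in $\Gamma_\aut{M}$ must be an \emph{observable} breakpoint, i.e.\ two adjacent intervals leaving a state must not lead to the same target, since otherwise $\concretize_{\alge{A_m}}$ still places the second endpoint in $\Gamma_\aut{M}$ while no suffix in any sample agreeing with $\aut{M}$ can separate it from the first, and the algorithm would (correctly, by its own test) decline to add it to $\Sigma'$. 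The paper's proof asserts that ``all minimal letters indicating new intervals are added'' without addressing this case; your resolution --- taking $\aut{M}$ in the canonical minimal neat form of Lemma~\ref{lemma:canonical}, where adjacent intervals with a common target are merged --- is the right fix. Only your parenthetical claim that this is ``equivalent'' to $\aut{D}_\aut{M}$ being a minimal complete DFA is imprecise: state-minimality of $\aut{D}_\aut{M}$ does not by itself forbid two letters from one state reaching the same successor, so the canonicity of $\aut{M}$ (not of $\aut{D}_\aut{M}$) is what you actually need to invoke.
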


\begin{proof} Let $\aut{M}=(\alge{A},Q,q_\iota ,F,\Delta_\aut{M})$, $\aut{D}_\aut{M}=\Concretize_{\class{M}_{\alge{A_m}}} (\aut{M})$ 
and $\aut{S}_\aut{M}= \CharDFA(\aut{D}_\aut{M})$.
Assume $\aut{D}_\aut{M}=(\Gamma_\aut{M},Q,q_\iota ,F,\Delta_\aut{D})$.
We inductively show that for $\Decontaminate_{\class{M}_{\alge{A_m}}}$
given in Algorithm~\ref{alg:dec}, if its input $\aut{S}$ satisfies $\aut{S}\supseteq \aut{S}_\aut{M}$ 
then the set $A_w$ is exactly the set of all lex-access words (lexicographically smallest access words) 
of states in $\aut{D}_\aut{M}$ and that $\Sigma' = \Gamma_{\aut{M}}$
(where $\Gamma_{\aut{M}}$ is the alphabet of $\aut{D}_\aut{M}$).

First, we show that every $u\in A_w$ is a lex-access word and that $\Sigma' \subseteq \Gamma_{\aut{M}}$.

For the base case, we consider $A_w = \{ \epsilon \}$ and $\Sigma' = \{\dmin  \}$. 
From item~\ref{item:lex} of Theorem~\ref{prop:charDFA}, we can assume access words are minimal according to the lexicographic order. Thus, $\epsilon\in A_w$ is indeed a lex-access word (of the state $q_\iota$). 
For $\dmin\in\Sigma'$, it holds that
$\Gamma_{\aut{M}}$ contains the minimal element of $\dom{D}$ since it contains all concretizations  of intervals, the SFA is complete and $\concretize_{\alge{A_m}}$ returns the minimal element of each interval. 
Therefore $\dmin\in\Gamma_{\aut{M}}$.

For the induction step, 
 assume that $A_w$ contains only lex-access words and that the current $\Sigma'$ is a subset of $\Gamma_{\aut{M}}$.
Then, when considering $u\in A_w$ in line~4, 
 it holds that $u$ is a lex-access word of some state $q$.
Then, $\sigma$ is added to $\Sigma'$ only if  $u\sigma\not\sim_\aut{S}u\dmin$. Since $\aut{S}$ agrees with $\aut{M}$, it holds that $\Delta_\aut{M}(q_\iota, u\sigma)\neq \Delta_\aut{M}(q_\iota, u\dmin) $ 
 and $\sigma$ is a minimal element with that property. Then, $\sigma$ must be a minimal element of an interval labeling an outgoing transition from $q$, therefore is in $\Gamma_{\aut{M}}$. Inductively this holds for all elements added to $\Sigma'$ in the current iteration.
 This proves that $\Sigma' \subseteq \Gamma_{\aut{M}}$.
 Assume now that $A_w$ contains only lex-access words and let $u\sigma$ be a word added to $A_w$ in line~9. 
 Thus, for all $u'\in A_w$ it holds that $u\sigma\not\sim_{\aut{S}} u'$.
 
 \begin{itemize}
 \item []
 \textbf{Claim.} In this setting, $u\sigma\not\sim_{\aut{S}} u'$ implies $u\sigma\not\sim_{\aut{S}_\aut{M}} u'$.
 \item []
 \textbf{Proof.}
 Since we assume all words already in $A_w$ are lex-access words,
 then in particular $u$ is a lex-access word. In addition, $\sigma\in\Sigma'$ and thus $\sigma\in\Gamma_{\aut{M}}$.
 Since $u\sigma\not\sim_{\aut{S}} u'$
 there is $z\in\Sigma^*$ such that $\langle u\sigma z, b_1 \rangle, \langle u'z, b_2 \rangle \in\aut{S}$ and $b_1\neq b_2$. 
 Since $\aut{S}$ agrees with $\aut{M}$ it holds that $\Delta_\aut{M}(q_\iota, u\sigma) \neq \Delta_\aut{M}(q_\iota, u')$.
 Now, $u\sigma$ and $u'$ are all in $\Gamma_{\aut{M}}$ since from line~9 
 we have $A_w\subseteq\Sigma'^*$, and thus
 $\Delta_ {\aut{D}_\aut{M}}(q_\iota, u\sigma) \neq \Delta_{\aut{D}_\aut{M}}(q_\iota, v)$, and from item~\ref{item:notsim_inS} of Theorem~\ref{prop:charDFA}
it holds that $u\sigma\not\sim_{\aut{S}_\aut{M}}u'$. This proves the claim.
\end{itemize}

Then, for all $u'\in A_w$ we have $\Delta_{\aut{D}}(q_\iota, u\sigma) \neq  \Delta_{\aut{D}}(q_\iota, u')$. 
Then, since we traverse words and letters in lexicographic order, 
$u\sigma$ is a lex-access word for $\Delta_{\aut{D}}(q_\iota, u\sigma)$. 
This concludes the first direction.

For the second direction, we show that every lex-access word is in $A_w$ and that $\Gamma_{\aut{M}}\subseteq \Sigma'$.

Let $u\sigma$ be a lex-access word (for $\epsilon$ it holds that $\epsilon\in A_w$). For all lex-access words $u'$ found in previous iterations it holds that $u\sigma\not\sim_{\aut{S}_\aut{M}} u'$ from item~\ref{item:accesswords} of Theorem~\ref{prop:charDFA}, and thus $u\sigma\not\sim_{\aut{S}} u'$ since $\aut{S}_\aut{M}\subseteq\aut{S}$. Then, $u\sigma$ satisfies the condition of line~9 
and is added to $A_w$. 

To prove $\Gamma_{\aut{M}}\subseteq\Sigma'$, 
let $\sigma\in\Gamma_{\aut{M}}$. From the construction of $\concretize_{\alge{A_m}}$ it holds that~$\sigma$ is the left endpoint of some interval that is an outgoing transition from $q_\iota$. Then, indeed~$\sigma$ is found in the first iteration of line~4. 
 Inductively, let~$\sigma$ label an outgoing transition of~$q$ for some~$q\in Q$ and let~$u_q$ be the lex-access word of~$q$. 
 Since~$A_w$ contains all lex-access words, it holds that $u_q\in A_w$, and then the outgoing transitions of~$q$ will be considered in some following iteration. Thus, all minimal letters indicating new intervals are added to~$\Sigma'$ and we have that $\Gamma_{\aut{M}}\subseteq\Sigma'$. We conclude that $\Sigma' = \Gamma_{\aut{M}}$. 
\end{proof}

\begin{prop}\label{lemma:sufficient_monotonic}
The sufficient condition of Theorem~\ref{theorem:sufficient} holds for the class $\class{M}_{\alge{A_m}}$ of SFAs over a monotonic Boolean algebra $\alge{A_m}$. 
\end{prop}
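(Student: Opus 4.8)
The plan is to discharge, one by one, the three hypotheses of Theorem~\ref{theorem:sufficient} for the class $\class{M}_{\alge{A_m}}$: efficient functions $\concretize_{\alge{A_m}}$ and $\generalize_{\alge{A_m}}$ satisfying the round-trip criterion, together with an efficient decontaminating function. The first two ingredients are already delivered by Proposition~\ref{lemma:necessary_monotonic}, which exhibits $\concretize_{\alge{A_m}}$ and $\generalize_{\alge{A_m}}$ and verifies that whenever $\concretize_{\alge{A_m}}(\partition{\psi_1,\ldots,\psi_m})=\partition{\Gamma_1,\ldots,\Gamma_m}$ and $\generalize_{\alge{A_m}}(\partition{\Gamma'_1,\ldots,\Gamma'_m})=\partition{\varphi_1,\ldots,\varphi_m}$ with $\Gamma_i\subseteq\Gamma'_i$, one recovers $\sema{\varphi_i}=\sema{\psi_i}$. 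Hence the only remaining obligation is to produce an efficient decontaminating function, for which I would use $\Decontaminate_{\class{M}_{\alge{A_m}}}$ as given in Algorithm~\ref{alg:dec}; recall it returns exactly the sub-sample whose words use only letters from the accumulated set $\Sigma'$.

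The heart of the argument is then to certify that $\Decontaminate_{\class{M}_{\alge{A_m}}}$ meets Definition~\ref{def:decont}, and here I would invoke Lemma~\ref{lemma:induction}. Fix $\aut{M}\in\class{M}_{\alge{A_m}}$, set $\aut{D}_\aut{M}=\Concretize_{\class{M}_{\alge{A_m}}}(\aut{M})$ with alphabet $\Gamma_\aut{M}$, and let $\aut{S}_\aut{D}=\CharDFA(\aut{D}_\aut{M})$. For an arbitrary $\aut{S}'\supseteq\aut{S}_\aut{D}$ that agrees with $\aut{M}$, Lemma~\ref{lemma:induction} guarantees that the set $\Sigma'$ computed by the algorithm equals $\Gamma_\aut{M}=\Gamma_\aut{D}$. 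Given this identity, both inclusions of Definition~\ref{def:decont} are immediate. For the upper inclusion, the output consists by construction only of words of $\aut{S}'$ all of whose letters lie in $\Sigma'=\Gamma_\aut{D}$, so $\Decontaminate_{\class{M}_{\alge{A_m}}}(\aut{S}')\subseteq \aut{S}'\cap\Gamma_\aut{D}$. For the lower inclusion, every labeled word of $\aut{S}_\aut{D}$ is already over $\Gamma_\aut{M}=\Sigma'$ and belongs to $\aut{S}'$, so it survives the filtering step; thus $\aut{S}_\aut{D}\subseteq\Decontaminate_{\class{M}_{\alge{A_m}}}(\aut{S}')$.

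It remains to check efficiency and assemble the pieces. I would argue that $\Decontaminate_{\class{M}_{\alge{A_m}}}$ runs in polynomial time by noting that it only ever inspects prefixes and one-letter extensions of words occurring in $\aut{S}'$, decides the equivalences $\sim_{\aut{S}'}$ needed to grow $A_w$ and $\Sigma'$ (each a polynomial-time test in $|\aut{S}'|$), and concludes with a single linear pass discarding words that contain a letter outside $\Sigma'$. Combining this with Proposition~\ref{lemma:necessary_monotonic} shows that all three hypotheses of Theorem~\ref{theorem:sufficient} are satisfied by $\class{M}_{\alge{A_m}}$, which is exactly the assertion of the proposition. I expect the only genuinely delicate point to be the correctness of the identity $\Sigma'=\Gamma_\aut{M}$---both that every contaminating letter is excluded and that no necessary boundary letter is dropped---but this is precisely what Lemma~\ref{lemma:induction} already establishes, so for this proposition the remaining work is essentially bookkeeping around Definition~\ref{def:decont}.
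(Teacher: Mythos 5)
Your proposal is correct and follows essentially the same route as the paper's proof: both delegate the concretizing/generalizing functions to Proposition~\ref{lemma:necessary_monotonic}, reduce the remaining obligation to showing that Algorithm~\ref{alg:dec} is an efficient decontaminating function, and derive the two inclusions of Definition~\ref{def:decont} from the identity $\Sigma'=\Gamma_{\aut{M}}$ established in Lemma~\ref{lemma:induction}. The only difference is presentational: you spell out the two inclusions slightly more explicitly than the paper, which states them compactly as $\aut{S'}=\aut{S}\cap\Gamma_{\aut{M}}^*$ together with $\Gamma_{\aut{M}}^*\supseteq\aut{S}_{\aut{M}}$.
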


\begin{proof}
In Proposition~\ref{lemma:necessary_monotonic} we have shown that there exist functions
$\concretize_{\alge{A_m}}$ and 
 $\generalize_{\alge{A_m}}$ for 
 a monotonic Boolean algebra $\alge{A_m}$, satisfying the criteria of Theorem~\ref{theorem:sufficient}.
 It is left to show that $\Decontaminate_{\class{M}_{\alge{A_m}}}$ is an efficient decontaminating function. 
Assume that $\aut{S}\supseteq \aut{S}_\aut{M}$ where
$ \aut{S}_\aut{M}= \CharDFA(\Concretize_{\class{M}_{\alge{A_m}}}\!(\aut{M}) )$, 
and $\Concretize_{\class{M}_{\alge{A_m}}}\!(\aut{M})$  is over alphabet $\Gamma_{\aut{M}}$.
In Lemma~\ref{lemma:induction} we showed that under these assumptions it holds that the alphabet $\Sigma'$ of the returned sample $\aut{S'}$  is $\Gamma_{\aut{M}}$. Then, for the set $\aut{S'}$ returned in line~13 
(Algorithm~\ref{alg:dec}) it holds that  $\aut{S'} = \aut{S}\cap \Gamma_{\aut{M}}^* $. Since $\aut{S} \supseteq \aut{S}_\aut{M}$ and $\Gamma_{\aut{M}}^* \supseteq \aut{S}_\aut{M}$, it holds that $\aut{S'} \supseteq \aut{S}_\aut{M}$ and $\aut{S'}$ is defined over the alphabet $\Gamma_{\aut{M}}$. Therefore, $\Decontaminate_{\class{M}_{\alge{A_m}}}$
 is a decontaminating~function.~In addition, 
it runs in time  polynomial in the size of $\aut{S}$, thus the conditions of Theorem~\ref{theorem:sufficient} are~met. 
\end{proof}

We now provide an example for efficient identification of SFAs over the interval algebra (Example~\ref{ex:infer_interval}), and an example for the need of a decontaminating function (Example~\ref{ex:decontanimate}). 

\begin{exa} \label{ex:infer_interval}\label{example:mono}
     Continuing Example~\ref{ex:con_gen},  
     let $\aut{M}$ be the SFA from Figure~\ref{fig:SFA} and consider the class $\class{M}_{{\alge{A_\dom{N}}}}$ of SFAs over the interval algebra. Algorithm $\CharSFA$
     computes the set $\Gamma_{\aut{M}}$ using the function $\concretize_{\class{M}_{{\alge{A_\dom{N}}}}}$ given in Example~\ref{ex:con_gen}.
     That is, for the predicates labeling outgoing transitions from $q_0$
    we have $\concretize_{{\alge{A_\dom{N}}}}(\partition{[0,100), [100, \infty)}  ) = \partition{\{ 0 \}, \{ 100 \}}$; and for outgoing transitions from $q_1$, it holds that $\concretize_{{\alge{A_\dom{N}}}}(\partition{[0,200), [200, \infty)}  ) = \partition{\{ 0 \}, \{ 200 \}}$. Then,  $\Gamma_{\aut{M}}=\{0, 100, 200  \}$, and $\CharSFA$ constructs the DFA over $\Gamma_{\aut{M}}$ where concrete transitions agree with symbolic transitions of the original SFA.
     See Figure~\ref{fig:DFA} for the resulting DFA. 
     Then, using $\CharDFA$, it returns the sample set: $$\aut{S}_\aut{M} = \{ \langle \epsilon, 0  \rangle, \langle 0, 1  \rangle, \langle 100, 0  \rangle, \langle 200, 0  \rangle,
     \langle 0\cdot 0, 1  \rangle, \langle 0\cdot 100, 1  \rangle,\langle 0\cdot 200, 0  \rangle
    \}.$$

    Now, assume algorithm $\InferSFA$ is given the set 
    $\aut{S} = \aut{S}_\aut{M} \cup \{ \langle  150, 0 \rangle \}$
    over the alphabet $\Sigma = \{ 0, 100, 150, 200 \}$. 
    The algorithm applies $\Decontaminate_{\class{M}_{{\alge{A_\dom{N}}}}}$ that generates the set $\aut{S'}$ over $\Gamma_{\aut{M}}$. 
    To do so, 
    it first finds the set $\Gamma_{\aut{M}}$
    of all elements that are a minimal left point of some interval, and then chooses from $\aut{S}$ the words over $\Gamma_{\aut{M}}$. 
    It does so as follows. First, note that $100 \sim_\aut{S} 150$, $100 \sim_\aut{S} 200$ and $150 \sim_\aut{S} 200$, 
    while $0 \not\sim_\aut{S} 100, 150, 200$. Since $0$ is the minimal element it has to be in $\Gamma_{\aut{M}}$; and since $100$ is the minimal element that is not equivalent to $0$ it has to define a new interval and thus is in $\Gamma_{\aut{M}}$ as well.    Next, we consider suffixes of words over $\{0, 100 \}$. These are $0\cdot 0$ and $0\cdot 100$ 
    which are equivalent, and 
    $0\cdot 200$ which is not equivalent to the former. Since $100$ is equivalent to $0$ it does not define a new interval now, but $200$ does as it is the minimal (and only) element that is not equivalent to $0$ when considering suffixes of $0$. Then, we deduce that 
    $\Gamma_{\aut{M}} = \{0, 100, 200  \}$
    and thus 
 $\aut{S'} =
 \{ \langle \epsilon, 0  \rangle, \langle 0, 1  \rangle, \langle 100, 0  \rangle,  \langle 200, 0  \rangle, 
     \langle 0\cdot 0, 1  \rangle, \langle 0\cdot 100, 1  \rangle, \langle 0\cdot 200, 0  \rangle
    \}$.

    Now Algorithm $\InferSFA$ is applied to the set $\aut{S'}$ 
    and the resulting DFA would be 
    the DFA $\aut{D}_\aut{M}$ of Figure~\ref{fig:DFA}. 
    Then it applies $\generalize_{\class{M}_{{\alge{A_\dom{N}}}}}$ described in Example~\ref{ex:con_gen} and the result will be the original SFA of Figure~\ref{fig:SFA}.  
    That is, for outgoing transitions of $q_0$ it applies $\generalize_{\alge{A_\dom{N}}} (\partition{ \{ 0 \}, \{ 100, 200\} }) =  \partition{[0, 100), [100, \infty) }$ and for outgoing transitions of~$q_1$ it applies $\generalize_{\alge{A_\dom{N}}} (\partition{ \{ 0, 100 \}, \{ 200 \} }) =  \partition{[0, 200), [200, \infty) }$ and uses these predicates to annotate the corresponding transitions in the SFA. 
\end{exa}

\begin{figure}[t]
\begin{tikzpicture}[scale=0.14]
\tikzstyle{every node}+=[inner sep=0pt]
\draw [black] (18.3,-19) circle (3);
\draw (18.3,-19) node {$q_0$};
\draw [black] (32.4,-19) circle (3);
\draw (32.4,-19) node {$q_1$};
\draw [black] (32.4,-19) circle (2.4);
\draw [black] (10.8,-19) -- (15.3,-19);
\fill [black] (15.3,-19) -- (14.5,-18.5) -- (14.5,-19.5);
\draw [black] (19.952,-16.519) arc (135.06693:44.93307:7.625);
\fill [black] (30.75,-16.52) -- (30.54,-15.6) -- (29.83,-16.31);
\draw (25.35,-13.78) node [above] {$0$};
\draw [black] (33.36,-16.17) arc (189:-99:2.25);
\draw (37.7,-13) node [right] {$0, 100$};
\fill [black] (35.23,-18.04) -- (36.1,-18.41) -- (35.94,-17.42);
\draw [black] (18.597,-21.973) arc (33.44395:-254.55605:2.25);
\draw (9,-22.77) node [below] {$100, 200$};
\fill [black] (16.12,-21.04) -- (15.18,-21.07) -- (15.73,-21.9);
\draw [black] (30.128,-20.939) arc (-58.82797:-121.17203:9.231);
\fill [black] (20.57,-20.94) -- (21,-21.78) -- (21.52,-20.93);
\draw (25.35,-22.77) node [below] {$200$};
\end{tikzpicture}     \caption{The DFA $\aut{D}_\aut{M}$ constructed in \CharSFA}
        \label{fig:DFA}
    
\end{figure}


\begin{figure}[t]
    \begin{minipage}{.5\textwidth}
            \centering
    		{\small
        \begin{tikzpicture}[scale=0.15]
\tikzstyle{every node}+=[inner sep=0pt]
\draw [black] (26,-24.5) circle (3);
\draw (26,-24.5) node {$q_\iota$};
\draw [black] (36.3,-19.3) circle (3);
\draw (36.3,-19.3) node {$q_1$};
\draw [black] (36.3,-19.3) circle (2.4);
\draw [black] (36.3,-29) circle (3);
\draw (36.3,-29) node {$q_2$};
\draw [black] (36.3,-29) circle (2.4);
\draw [black] (45.1,-24.5) circle (3);
\draw (45.1,-24.5) node {$q_3$};
\draw [black] (17.8,-24.5) -- (23,-24.5);
\fill [black] (23,-24.5) -- (22.2,-24) -- (22.2,-25);
\draw [black] (26.845,-21.648) arc (150.66009:82.91416:6.691);
\fill [black] (33.5,-18.29) -- (32.77,-17.69) -- (32.65,-18.68);
\draw (26.27,-18.44) node [above] {$[0,100)$};
\draw [black] (33.412,-29.734) arc (-87.35775:-139.84263:7.391);
\fill [black] (33.41,-29.73) -- (32.59,-29.27) -- (32.64,-30.27);
\draw (24.92,-29.68) node [below] {$[100,\mbox{ }\infty)$};
\draw [black] (39.174,-18.56) arc (89.95337:28.88818:5.931);
\fill [black] (44.36,-21.63) -- (44.41,-20.68) -- (43.54,-21.17);
\draw (47.51,-18.88) node [above] {$[100,\mbox{ }\infty)$};
\draw [black] (34.977,-16.62) arc (234:-54:2.25);
\draw (36.3,-12.05) node [above] {$[0,100)$};
\fill [black] (37.62,-16.62) -- (38.5,-16.27) -- (37.69,-15.68);
\draw [black] (37.623,-31.68) arc (54:-234:2.25);
\draw (36.3,-36.25) node [below] {$[100,\mbox{ }\infty)$};
\fill [black] (34.98,-31.68) -- (34.1,-32.03) -- (34.91,-32.62);
\draw [black] (43.555,-27.046) arc (-43.25739:-82.57551:7.156);
\fill [black] (43.56,-27.05) -- (42.64,-27.29) -- (43.37,-27.97);
\draw (44.99,-29.03) node [below] {$[0,100)$};
\draw [black] (47.438,-22.638) arc (156.26477:-131.73523:2.25);
\draw (52.44,-22.7) node [right] {$[0,\infty)$};
\fill [black] (48,-25.22) -- (48.53,-26) -- (48.93,-25.09);
\end{tikzpicture}
        } \captionof{figure}{The SFA $\aut{M}_1$}\label{fig:SFA_decon}
    \end{minipage}%
    \hfill
    \begin{minipage}{.5\textwidth}
            \centering
    		
    		{\small 
    		\begin{tikzpicture}[scale=0.12]
\tikzstyle{every node}+=[inner sep=0pt]
\draw [black] (36.6,-20.5) circle (3);
\draw [black] (24.8,-30.4) circle (3);
\draw [black] (24.8,-30.4) circle (2.4);
\draw [black] (36.6,-30.4) circle (3);
\draw [black] (36.6,-30.4) circle (2.4);
\draw [black] (47.4,-30.4) circle (3);
\draw [black] (47.4,-30.4) circle (2.4);
\draw [black] (15.7,-40.4) circle (3);
\draw [black] (15.7,-40.4) circle (2.4);
\draw [black] (24.8,-40.4) circle (3);
\draw [black] (24.8,-50.9) circle (3);
\draw [black] (32.1,-40.4) circle (3);
\draw [black] (32.1,-40.4) circle (2.4);
\draw [black] (41.2,-40.4) circle (3);
\draw [black] (54.4,-39) circle (3);
\draw [black] (54.4,-39) circle (2.4);
\draw [black] (34.3,-22.43) -- (27.1,-28.47);
\fill [black] (27.1,-28.47) -- (28.03,-28.34) -- (27.39,-27.57);
\draw (29.69,-24.96) node [above] {$0$};
\draw [black] (36.6,-23.5) -- (36.6,-27.4);
\fill [black] (36.6,-27.4) -- (37.1,-26.6) -- (36.1,-26.6);
\draw (36.1,-25.45) node [left] {$100$};
\draw [black] (38.81,-22.53) -- (45.19,-28.37);
\fill [black] (45.19,-28.37) -- (44.94,-27.46) -- (44.26,-28.2);
\draw (44.01,-24.96) node [above] {$150$};
\draw [black] (22.78,-32.62) -- (17.72,-38.18);
\fill [black] (17.72,-38.18) -- (18.63,-37.93) -- (17.89,-37.25);
\draw (19.71,-33.94) node [left] {$0$};
\draw [black] (24.8,-33.4) -- (24.8,-37.4);
\fill [black] (24.8,-37.4) -- (25.3,-36.6) -- (24.3,-36.6);
\draw (25.3,-35.4) node [right] {$100$};
\draw [black] (24.8,-43.4) -- (24.8,-47.9);
\fill [black] (24.8,-47.9) -- (25.3,-47.1) -- (24.3,-47.1);
\draw (24.3,-45.65) node [left] {$0$};
\draw [black] (36.6,-11.9) -- (36.6,-17.5);
\fill [black] (36.6,-17.5) -- (37.1,-16.7) -- (36.1,-16.7);
\draw [black] (35.37,-33.14) -- (33.33,-37.66);
\fill [black] (33.33,-37.66) -- (34.12,-37.14) -- (33.2,-36.73);
\draw (34.63,-34.4) node [left] {$100$};
\draw [black] (37.85,-33.13) -- (39.95,-37.67);
\fill [black] (39.95,-37.67) -- (40.07,-36.74) -- (39.16,-37.16);
\draw (39.62,-34.37) node [right] {$0$};
\draw [black] (49.29,-32.73) -- (52.51,-36.67);
\fill [black] (52.51,-36.67) -- (52.39,-35.74) -- (51.61,-36.37);
\draw (50.34,-36.13) node [left] {$250$};
\end{tikzpicture}

    		}
    	\captionof{figure}{The prefix-tree automaton for the set $\aut{S}'$ of Example~\ref{ex:decontanimate}}\label{fig:prefix_tree}
    \end{minipage}
\end{figure}

\begin{figure}[t!]

\begin{tabular}{cc}
        {\small
        \begin{tikzpicture}[scale=0.15]
\tikzstyle{every node}+=[inner sep=0pt]
\draw [black] (26,-24.5) circle (3);
\draw (26,-24.5) node {$q_\iota$};
\draw [black] (36.3,-19.3) circle (3);
\draw (36.3,-19.3) node {$q_1$};
\draw [black] (36.3,-19.3) circle (2.4);
\draw [black] (36.3,-29) circle (3);
\draw (36.3,-29) node {$q_2$};
\draw [black] (36.3,-29) circle (2.4);
\draw [black] (45.1,-24.5) circle (3);
\draw (45.1,-24.5) node {$q_3$};
\draw [black] (17.8,-24.5) -- (23,-24.5);
\fill [black] (23,-24.5) -- (22.2,-24) -- (22.2,-25);
\draw [black] (26.845,-21.648) arc (150.66009:82.91416:6.691);
\fill [black] (33.5,-18.29) -- (32.77,-17.69) -- (32.65,-18.68);
\draw (26.27,-18.44) node [above] {$0, 150$};
\draw [black] (33.412,-29.734) arc (-87.35775:-139.84263:7.391);
\fill [black] (33.41,-29.73) -- (32.59,-29.27) -- (32.64,-30.27);
\draw (24.92,-29.68) node [below] {$100$};
\draw [black] (39.174,-18.56) arc (89.95337:28.88818:5.931);
\fill [black] (44.36,-21.63) -- (44.41,-20.68) -- (43.54,-21.17);
\draw (47.51,-18.88) node [above] {$100$};
\draw [black] (34.977,-16.62) arc (234:-54:2.25);
\draw (36.3,-12.05) node [above] {$0, 250$};
\fill [black] (37.62,-16.62) -- (38.5,-16.27) -- (37.69,-15.68);
\draw [black] (37.623,-31.68) arc (54:-234:2.25);
\draw (36.3,-36.25) node [below] {$100$};
\fill [black] (34.98,-31.68) -- (34.1,-32.03) -- (34.91,-32.62);
\draw [black] (43.555,-27.046) arc (-43.25739:-82.57551:7.156);
\fill [black] (43.56,-27.05) -- (42.64,-27.29) -- (43.37,-27.97);
\draw (44.99,-29.03) node [below] {$0$};
\draw [black] (47.438,-22.638) arc (156.26477:-131.73523:2.25);
\draw (52.44,-22.7) node [right] {$0, 100, 150$};
\fill [black] (48,-25.22) -- (48.53,-26) -- (48.93,-25.09);
\end{tikzpicture}
        }
        
             & 
             
    {\small
        \begin{tikzpicture}[scale=0.15]
\tikzstyle{every node}+=[inner sep=0pt]
\draw [black] (26,-24.5) circle (3);
\draw (26,-24.5) node {$q_\iota$};
\draw [black] (36.3,-19.3) circle (3);
\draw (36.3,-19.3) node {$q_1$};
\draw [black] (36.3,-19.3) circle (2.4);
\draw [black] (36.3,-29) circle (3);
\draw (36.3,-29) node {$q_2$};
\draw [black] (36.3,-29) circle (2.4);
\draw [black] (45.1,-24.5) circle (3);
\draw (45.1,-24.5) node {$q_3$};
\draw [black] (17.8,-24.5) -- (23,-24.5);
\fill [black] (23,-24.5) -- (22.2,-24) -- (22.2,-25);
\draw [black] (26.845,-21.648) arc (150.66009:82.91416:6.691);
\fill [black] (33.5,-18.29) -- (32.77,-17.69) -- (32.65,-18.68);
\draw (26.27,-18.44) node [above] {$0$};
\draw [black] (33.412,-29.734) arc (-87.35775:-139.84263:7.391);
\fill [black] (33.41,-29.73) -- (32.59,-29.27) -- (32.64,-30.27);
\draw (24.92,-29.68) node [below] {$100, 150$};
\draw [black] (39.174,-18.56) arc (89.95337:28.88818:5.931);
\fill [black] (44.36,-21.63) -- (44.41,-20.68) -- (43.54,-21.17);
\draw (47.51,-18.88) node [above] {$100$};
\draw [black] (34.977,-16.62) arc (234:-54:2.25);
\draw (36.3,-12.05) node [above] {$0$};
\fill [black] (37.62,-16.62) -- (38.5,-16.27) -- (37.69,-15.68);
\draw [black] (37.623,-31.68) arc (54:-234:2.25);
\draw (36.3,-36.25) node [below] {$100, 250$};
\fill [black] (34.98,-31.68) -- (34.1,-32.03) -- (34.91,-32.62);
\draw [black] (43.555,-27.046) arc (-43.25739:-82.57551:7.156);
\fill [black] (43.56,-27.05) -- (42.64,-27.29) -- (43.37,-27.97);
\draw (44.99,-29.03) node [below] {$0$};
\draw [black] (47.438,-22.638) arc (156.26477:-131.73523:2.25);
\draw (52.44,-22.7) node [right] {$0, 100, 150$};
\fill [black] (48,-25.22) -- (48.53,-26) -- (48.93,-25.09);
\end{tikzpicture}
        }

\end{tabular}

        \caption{Two DFAs that are consistent with the set $\aut{S}'$ of Example~\ref{ex:decontanimate}.}
        \label{fig:two_DFAs}
\end{figure}

 \begin{exa} \label{ex:decontanimate}\label{example:decontaminate}
Consider the SFA $\aut{M}_1$ of Figure~\ref{fig:SFA_decon}. 
Applying $\CharDFA(\Concretize_{\class{M}_{\alge{A}_\dom{N}}} \!(\aut{M}_1))$ results in the following $\aut{S}$. 
$$\aut{S} = \{ \langle \epsilon, 0 \rangle, \langle 0, 1 \rangle, \langle 100, 1 \rangle, \langle 0\cdot 0, 1  \rangle, \langle 0\cdot 100, 0 \rangle, \langle 100\cdot 100, 1 \rangle, \langle  100\cdot 0, 0\rangle, \langle 0\cdot100\cdot0, 0 \rangle \}$$

Now, let $\aut{S}'= \aut{S} \cup \{ \langle 150, 1 \rangle, \langle 150\cdot 250, 1 \rangle  \}$. Note that $\aut{S}'$
is consistent with $\aut{L}(\aut{M}_1)$. 
When trying to learn an SFA from $\aut{S}'$ and applying $\InferDFA(\aut{S}')$, the algorithm cannot distinguish between the words $150$ and $100$, as well as between $0$ and $150$. 
 The same holds also for $0\cdot 0$ vs.\ $150\cdot 250$, and $100\cdot 100$ vs.\ $150\cdot 250$.

Figure~\ref{fig:two_DFAs} presents two DFAs that are both consistent with the set $\aut{S}'$. Since $\aut{S}'$ does not contain any characteristic sample for DFAs over the alphabet $\{ 0, 100, 150, 250 \}$, \InferDFA\ concludes that $\aut{S}'$ does not subsume any characteristic sample, and returns the prefix-tree automaton, given in Figure~\ref{fig:prefix_tree}.

This example illustrates that {\InferSFA} cannot classify $150$ and $150\cdot 250$ by only applying \InferDFA, without reasoning about the predicates of the algebra. To this end we provide the function $\Decontaminate_{\class{M}_\alge{A}}$, which is able, in the case of a monotonic algebra, to find which letters are the ones that should be used to define new predicates.
  \end{exa}

\section{Negative Result}\label{subsec:negative}

The result of Theorem~\ref{prop:sample_monotonic} 
does not extend to the non-monotonic case, as~stated in Theorem~\ref{prop:no_poly_set} regarding SFAs over the general propositional algebra.
Let $\dom{D}_\mathbb{B}=\{ {\mathbb{B}^k} \}_{k\in\mathbb{N}}$. Recall that $\mathbb{B}=\{0,1\}$ and $\mathbb{B}^k$ is the set of all valuations of $k$ atomic propositions.
Let $\mathbb{P}_\mathbb{B}=\{ \mathbb{P}_{\mathbb{B}_k} \}_{k\in\mathbb{N}}$ 
where  $\mathbb{P}_{\mathbb{B}_k}$ is the set of predicates over at most $k$ atomic propositions.
Let~$\alge{A}_\mathbb{B}$ be the Boolean algebra defined over the discrete domain $\dom{D}_\mathbb{B}$ and
the set of predicates $\mathbb{P}_\mathbb{B}$, and the usual operators $\vee$, $\wedge$ and $\neg$.
Let $\class{M}_{\alge{A}_\mathbb{B}}$ be the class of SFAs over the Boolean algebra~$\alge{A}_\mathbb{B}$. We show that unless $P=NP$, this class of SFAs is not efficiently identifiable.\footnote{
This result may be contrasted with~\cite{ArgyrosD18} who provide a positive learnability result regarding SFAs over the OBDDs algebra. The result of~\cite{ArgyrosD18} is with respect to query learning, while Theorem~\ref{prop:no_poly_set} concerns efficient identifiability in the limit. As we discuss in \autoref{sec:discuss}, one cannot derive {efficient learnability} from a positive result in the query learning setting. Moreover, in Theorem~\ref{prop:no_poly_set} (as well as in Corollary~\ref{cor:neg-qlearn-sfa-prop-alg} and the discussion in \autoref{sec:query_learning}) we refer to efficient learnability with respect the propositional algebra as defined in~\autoref{sec:bool-alge} where the size is measured with respect to the number of atomic propositions, while~\cite{ArgyrosD18} refer to the size of the SFAs in which the predicates are OBDDs, whose size is measured by the number of nodes in the OBDD. However, the number of nodes in an OBDD can be exponential in the number of atomic propositions. Therefore, our result has no conflict with the result of~\cite{ArgyrosD18}.}

\begin{thm}
\label{prop:no_poly_set}
    The class $\class{M}_{\alge{A}_\mathbb{B}}$ is not efficiently identifiable unless $P=NP$. 
\end{thm}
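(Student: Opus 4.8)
The plan is to assume, for contradiction, that $\class{M}_{\alge{A}_\mathbb{B}}$ is efficiently identifiable, and to extract from this assumption a polynomial-time algorithm for \textsc{sat}, thereby forcing $P=NP$. First I would note that efficient identifiability (Definition~\ref{def:eff-ident}) implies identification in the limit using polynomial time and data (Definition~\ref{def:ident-limit}): for a minimal $\aut{M}$ the set $\Char(\aut{M})$ is of polynomial size since $\Char$ runs in polynomial time, and it serves as a characteristic sample for the learner $\Infer$. Since $\class{M}_{\alge{A}_\mathbb{B}}$ contains all basic SFAs (Definition~\ref{def:non-trivial}), Theorem~\ref{theorem:necessary-cond} then supplies efficient dyadic functions $\concretize_{\alge{A}_\mathbb{B}}$ and $\generalize_{\alge{A}_\mathbb{B}}$ obeying the recovery condition displayed in that theorem. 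The entire argument thus reduces to showing that such a pair of efficient functions cannot exist for $\alge{A}_\mathbb{B}$ unless $P=NP$.

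Now fix a propositional formula $\psi$ over $k$ atomic propositions, so that $\partition{\psi,\neg\psi}$ is a predicate partition of $\mathbb{B}^k$. Computing $\partition{\Gamma_1,\Gamma_2}=\concretize_{\alge{A}_\mathbb{B}}(\partition{\psi,\neg\psi})$ takes polynomial time and, since $\concretize_{\alge{A}_\mathbb{B}}$ is concretizing, $\Gamma_1\subseteq\sema{\psi}$; hence $\Gamma_1\neq\emptyset$ already certifies that $\psi$ is satisfiable. The core of the proof is the converse, which I would phrase as: modulo a fixed polynomial-size set of exceptional witnesses, a satisfiable $\psi$ \emph{forces} $\Gamma_1\neq\emptyset$. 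Granting this, \textsc{sat} is decided in polynomial time as follows. First compute $\partition{\emptyset,\Gamma_2^\bot}=\concretize_{\alge{A}_\mathbb{B}}(\partition{\bot,\top})$ over $\mathbb{B}^k$, whose positive part is empty because $\sema{\bot}=\emptyset$. If some valuation in the polynomial-size set $\Gamma_2^\bot$ satisfies $\psi$, answer \emph{satisfiable}; otherwise compute $\Gamma_1$ as above and answer \emph{satisfiable} if and only if $\Gamma_1\neq\emptyset$.

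To establish the forcing claim I would exploit that $\generalize_{\alge{A}_\mathbb{B}}$ is single-valued, together with the recovery condition. Suppose $\psi$ is satisfiable, $\Gamma_1=\emptyset$, and no valuation of $\Gamma_2^\bot$ satisfies $\psi$, i.e.\ $\Gamma_2^\bot\subseteq\sema{\neg\psi}$. Consider the single concrete partition $\partition{\emptyset,\Gamma_2\cup\Gamma_2^\bot}$. Componentwise it extends $\concretize_{\alge{A}_\mathbb{B}}(\partition{\psi,\neg\psi})=\partition{\emptyset,\Gamma_2}$ and is consistent with $\psi$ (its negative part lies in $\sema{\neg\psi}$), so applying the recovery condition to $\psi$ the predicate returned by $\generalize_{\alge{A}_\mathbb{B}}$ on this input has semantics $\sema{\psi}$. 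The same partition also extends $\concretize_{\alge{A}_\mathbb{B}}(\partition{\bot,\top})=\partition{\emptyset,\Gamma_2^\bot}$ and is trivially consistent with $\bot$, so applying the recovery condition to $\bot$ the returned predicate has semantics $\sema{\bot}=\emptyset$. As $\generalize_{\alge{A}_\mathbb{B}}$ is a function the two outputs coincide, giving $\sema{\psi}=\emptyset$ and contradicting satisfiability; hence $\Gamma_1\neq\emptyset$, which proves the claim.

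The main obstacle is exactly this forcing step, and specifically the consistency requirement built into the recovery condition: recovery is only guaranteed on supersets of a concretization that still \emph{agree} with the target predicate, so I cannot blindly merge $\Gamma_2$ and $\Gamma_2^\bot$ unless $\Gamma_2^\bot$ avoids $\sema{\psi}$. This is precisely what motivates the preliminary scan of the fixed, polynomial-size set $\Gamma_2^\bot$: either it already exhibits a satisfying assignment, or it is disjoint from $\sema{\psi}$ and the clean two-predicate argument goes through. A secondary point I would check is that every invocation stays within a single arity $k$, so that ``polynomial time'' for $\concretize_{\alge{A}_\mathbb{B}}$ and $\generalize_{\alge{A}_\mathbb{B}}$ means polynomial in the number of atomic propositions (the size measure fixed for the propositional algebra in \autoref{sec:bool-alge}); this ensures the resulting decision procedure for \textsc{sat} is genuinely polynomial and hence yields $P=NP$.
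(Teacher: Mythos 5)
Your proof is correct and follows the same route as the paper: reduce efficient identifiability to the existence of efficient dyadic concretizing and generalizing functions via Theorem~\ref{theorem:necessary-cond}, and then show that such a pair would yield a polynomial-time algorithm for \textsc{sat} by testing whether the positive part of $\concretize(\partition{\psi,\neg\psi})$ is nonempty. The one place you go beyond the paper is the preliminary scan of $\Gamma_2^\bot$, the negative part of $\concretize(\partition{\bot,\top})$. The paper's correctness argument is the single sentence that an empty $\Gamma_1$ for satisfiable $\psi$ would prevent $f_g$ from distinguishing $\psi$ from $\bot$; making that precise requires exhibiting one concrete partition that componentwise extends both concretizations \emph{and} remains consistent with both target partitions, and the merged partition $\partition{\emptyset,\Gamma_2\cup\Gamma_2^\bot}$ is consistent with $\partition{\psi,\neg\psi}$ only when $\Gamma_2^\bot$ avoids $\sema{\psi}$. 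Your case split --- either some element of the fixed polynomial-size set $\Gamma_2^\bot$ already witnesses satisfiability, or the merge is legitimate and the recovery condition forces $\sema{\psi}=\emptyset$ --- closes exactly this gap, at the cost of a slightly longer \textsc{sat} procedure. Your closing remarks on arity and on efficient identifiability implying identification in the limit are both needed and correctly handled.
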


\begin{proof}

We show that there is no pair of efficient dyadic concretizing and generalizing functions $f_c:\ppart{\class{P}_\mathbb{B}}\rightarrow \cpart{\dom{D}_\mathbb{B}}$ and  $f_g: \cpart{\dom{D}_\mathbb{B}}\rightarrow \ppart{\class{P}_\mathbb{B}}$
    unless $P=NP$. From Theorem~\ref{theorem:necessary-cond} it follows that $\class{M}_{\mathbb{B}}$ is not efficiently identifiable unless $P=NP$.

Assume towards contradiction that such a pair of functions exist. We provide a polynomial
time algorithm $\Asat$ for SAT. On a predicate $\varphi$, the algorithm $\Asat$ invokes
$f_c(\partition{\varphi,\neg\varphi})$. Suppose the returned concrete partition is $\partition{\Gamma_1,\Gamma_2}$.
Then $\Asat$ returns ``true'' if and only if $\Gamma_1\neq \emptyset$.
Correctness follows from the fact that if there exists a system of characteristic samples for $\mathbb{P}_\mathbb{B}$
then the set of positive examples associated with a satisfiable predicate $\varphi$ must be non-empty,
as otherwise $f_g$ cannot distinguish $\varphi$ from $\bot$.
\end{proof}

\section{Query Learning}\label{sec:query_learning}

The paradigm of \emph{query learning} stipulates that the \emph{learner} can interact with an \emph{oracle} (\emph{teacher})
by asking it several types of allowed queries. 
In this section we consider these queries to be \emph{membership queries} (\mq) and \emph{equivalence queries} (\eq). We say that a class~$\class{M}$ of automata is efficiently learnable using \mq s / \eq s / both \mq s and \eq s if there is an algorithm that for every language $L$ with a representation in $\class{M}$
asks a polynomial number of \mq s / \eq s / both \mq s and \eq s and outputs an automaton in $\class{M}$ that is polynomial in the minimal representation of $L$ in $\class{M}$. 

Angluin showed, on the negative side,  that regular languages cannot be efficiently learned (in the exact model) from only \mq s~\cite{Angluin81} or only \eq s~\cite{Angluin90}.
On the positive side, she showed that regular languages, represented as DFAs, can be efficiently learned
using both \mq s and \eq s~\cite{Angluin87}. 
The celebrated algorithm, termed \lstar, was extended
to learning many other classes of languages and representations, e.g., \cite{Sakakibara90,BV96,AartsV10,BolligHKL09,AngluinEF15,MalerP95,AngluinF16,NitayFZ21}. See the survey~\cite{Fisman18} for more references.

In particular, an extension of \lstar, termed \matstar, to learn SFAs was provided in~\cite{ArgyrosD18} which proved that SFAs over an algebra \alge{A} can be efficiently learned using \matstar\ if and only if the underlying algebra is efficiently learnable, and the size of disjunctions  of~$k$ predicates does not grow exponentially in~$k$.\footnote{As is the case, for instance, in the OBDD (Ordered Binary Decisions Diagrams) algebra~\cite{DBLP:journals/tc/Bryant86}.}
From this it was concluded that SFAs over the following underlying algebras are efficiently learnable: Boolean algebras over finite domains, equality algebra,  tree automata algebra, and SFAs algebra.  
 Efficient learning of SFAs over a monotonic algebra using \mq s and \eq s was established in~\cite{ChubachiDYS17}, which improved the results 
of~\cite{MalerM14,MalerM17} by using a binary search instead of a helpful teacher.

The result of~\cite{ArgyrosD18} provides means to establish new positive results on learning classes of SFAs using \mq s and \eq s, but it 
does not provide means for obtaining negative results for query learning of SFAs using \mq s and \eq s.
We strengthen this result by providing a learnability result that is independent of the use of a specific learning algorithm.
In particular, we show that efficient learnability of a Boolean algebra $\alge{A}$ using \mq s and \eq s is a necessary condition for the learnability of a class of SFAs over $\alge{A}$,
as we state in Theorem~\ref{thm:neg-query}.

\begin{thm}\label{thm:neg-query}
    A class of  SFAs $\class{M}$ over a Boolean algebra $\alge{A}$, that contains all basic SFAs over $\alge{A}$, is polynomially learnable using \mq s and \eq s,
    only if  $\alge{A}$ is polynomially learnable using \mq s and \eq s.
\end{thm}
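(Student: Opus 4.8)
The plan is to prove the implication directly by a reduction: assuming a polynomial-query learner $\mathcal{A}$ for the class $\class{M}$ using \mq s and \eq s, I would construct a polynomial-query learner $\mathcal{B}$ for the algebra $\alge{A}$ that identifies an unknown predicate $\varphi\in\class{P}$. The bridge is the family of basic SFAs of Definition~\ref{def:non-trivial}: since $\class{M}$ contains every $\aut{M}_\varphi$, and $\aut{M}_\varphi$ accepts precisely the length-one words $d$ with $d\in\sema{\varphi}$ (and is the minimal SFA for this language), learning $\varphi$ amounts to learning $\aut{M}_\varphi$. Thus $\mathcal{B}$ runs $\mathcal{A}$ with hidden target language $\lang{\aut{M}_\varphi}$, simulating the SFA teacher using only the algebra's membership and equivalence oracles for $\varphi$.

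Simulating membership queries is immediate. When $\mathcal{A}$ queries a word $w$, the learner $\mathcal{B}$ answers $0$ if $w=\epsilon$ or $|w|\geq 2$, since the rigid three-state shape of $\aut{M}_\varphi$ rejects every such word; and if $w=d$ is a single letter, $\mathcal{B}$ forwards the question $d\in\sema{\varphi}$ to the algebra's \mq\ oracle. This faithfully simulates the \mq\ teacher for $\lang{\aut{M}_\varphi}$ at a cost of one algebra \mq\ per word.

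The crux, and the step I expect to be the main obstacle, is simulating the equivalence queries, because $\mathcal{B}$ must decide $\lang{\aut{N}}=\lang{\aut{M}_\varphi}$ for an arbitrary hypothesis SFA $\aut{N}$ proposed by $\mathcal{A}$, and return a concrete counterexample word when they differ, while possessing only an \eq\ oracle for the predicate $\varphi$. I would separate a structural part from a semantic part. Structurally, using effectiveness of $\alge{A}$, $\mathcal{B}$ inspects $\aut{N}$ alone: if $q_\iota\in F_\aut{N}$ then $\epsilon$ is a counterexample, and if some accepting state is reachable from $q_\iota$ by a path of length at least two all of whose predicates are satisfiable, then selecting a satisfying letter at each edge yields a counterexample word of length $\geq 2$; both are polynomial reachability computations. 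If $\aut{N}$ passes these checks it accepts only length-one words, namely the set $\sema{\psi_\aut{N}}$ where $\psi_\aut{N}=\bigvee\{\psi : \langle q_\iota,\psi,q'\rangle\in\Delta_\aut{N},\ q'\in F_\aut{N}\}$. Then $\lang{\aut{N}}=\lang{\aut{M}_\varphi}$ iff $\sema{\psi_\aut{N}}=\sema{\varphi}$, so $\mathcal{B}$ poses $\psi_\aut{N}$ to the algebra's \eq\ oracle: a returned letter $d$ in the symmetric difference is a length-one counterexample word for $\mathcal{A}$, while a positive answer lets $\mathcal{B}$ answer ``equivalent'' to $\mathcal{A}$ and output $\psi_\aut{N}$, which then satisfies $\sema{\psi_\aut{N}}=\sema{\varphi}$.

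Finally I would check the polynomial bounds. The minimal SFA for $\lang{\aut{M}_\varphi}$ has three states and out-degree two, with a transition predicate of size $l$ equal to that of the smallest predicate equivalent to $\varphi$; hence $\mathcal{A}$ asks polynomially many queries and outputs hypotheses of size polynomial in the minimal representation of $\varphi$. Each simulated query costs one algebra query plus a polynomial-time computation on the current hypothesis $\aut{N}$, so $\mathcal{B}$ runs in polynomial time, asks polynomially many algebra \mq s and \eq s, and its output $\psi_\aut{N}$ is polynomial in the minimal predicate for $\varphi$, establishing polynomial learnability of $\alge{A}$. The difficulty concentrated in the equivalence step is resolved by the observation that the fixed shape of $\aut{M}_\varphi$ forces any wrong hypothesis to disagree either on a word of the wrong length, which $\mathcal{B}$ detects and witnesses from $\aut{N}$ alone, or on a single letter, which a single algebra \eq\ on $\psi_\aut{N}$ exposes.
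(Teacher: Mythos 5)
Your proposal is correct and follows the same overall reduction as the paper: both arguments run the assumed SFA learner against the hidden target $\lang{\aut{M}_\varphi}$, answer \mq s on words of length $\neq 1$ negatively while forwarding length-one \mq s to the algebra's oracle, and discharge the final equivalence test with a single algebra \eq. Where you genuinely diverge is the treatment of equivalence queries on non-basic hypotheses. The paper's teacher performs a purely \emph{syntactic} check --- if the hypothesis $\aut{M}$ is not literally of the form $\aut{M}_\psi$ it answers ``no'' and returns some word of length $>1$ from $\lang{\aut{M}}$ as a counterexample --- which tacitly assumes such a word exists; a hypothesis can fail that syntactic test yet still accept only length-one words (possibly even exactly the right ones), in which case no counterexample of that shape is available. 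Your \emph{semantic} decomposition (test acceptance of $\epsilon$ and of some word of length $\geq 2$ by reachability over satisfiable transitions; otherwise collapse the hypothesis to the predicate $\psi_\aut{N}$ given by the disjunction of initial-to-accepting transition predicates and pose one algebra \eq) closes exactly this case and makes the simulated teacher honest on every hypothesis. The price is that your structural check needs polynomial-time satisfiability and witness extraction for the hypothesis's predicates, whereas the paper's definition of an effective algebra only promises decidability; but the paper's own device of enumerating fresh long counterexamples in lexicographic order carries the same hidden cost, so neither version is disadvantaged on this point, and your handling of the equivalence step is the more robust of the two.
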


\begin{proof}
    Assume that $\class{M}$ is polynomially learnable using \mq s and \eq s, using an algorithm~$\algoQSFA$.
    We show that there exists a polynomial learning algorithm~$\algoQalge$ for the algebra~$\alge{A}$ using \mq s and \eq s.
    The algorithm~$\algoQalge$ uses~$\algoQSFA$ as a subroutine, and behaves as a teacher for~$\algoQSFA$. 
    Whenever $\algoQSFA$ asks an
    $\class{M}$-$\mq$ on word $\gamma_1\ldots\gamma_k$, if $k>1$ then $\algoQalge$ answers ``no''.
    If~$k\!=\!1$ then the~$\class{M}$-$\mq$ is essentially an $\alge{A}$-$\mq$, thus 
    $\algoQalge$ issues this query  and passes the answer to $\algoQSFA$.
        Whenever $\algoQSFA$ asks a
    $\class{M}$-$\eq$  on SFA $\aut{M}$, if $\aut{M}$ is not of the form $\aut{M}_\psi$ for some $\psi$ (as defined in Definition~\ref{def:non-trivial})
    then $\algoQalge$ answers ``no'' to the $\class{M}$-$\eq$
    and returns some word $w\in\lang{\aut{M}}$ s.t.\ $|w|>1$ and $w$ was not provided before, as a counterexample. To this aim it can record the largest counterexample given so far (according to the lexicographic order) and return the next one in this order. 
    Otherwise (if the SFA is of the form $\aut{M}_\psi$ for some~$\psi$) $\algoQalge$ asks an $\alge{A}$-$\eq$ on $\psi$. 
    If the answer is ``yes'' then $\algoQalge$ terminates and returns $\psi$ as the result of the learning algorithm; if the answer to the $\alge{A}$-$\eq$ on $\psi$ is ``no'', then the provided counterexample $\langle \gamma,b_\gamma \rangle$ is passed back to $\algoQSFA$ together with the answer ``no'' to the $\class{M}$-$\eq$. It is easy to verify that $\algoQalge$ terminates correctly in polynomial time.
    \end{proof}

From Theorem~\ref{thm:neg-query} we derive  what we believe to be the first negative result on learning SFAs from \mq s and \eq s, as we show
that SFAs over $\alge{A}_{\class{B}_k}$, the propositional algebra over~$k$ variables, are not polynomially learnable using \mq s and \eq s.
Polynomiality is measured with respect to the parameters $\langle n,m,l\rangle$ representing the size of the SFA and the number $k$ of atomic propositions.
Note that the algebra $\alge{A}_{\class{B}_k}$ is a restriction of the algebra $\alge{A}_\mathbb{B}$ considered in \autoref{subsec:negative} and therefore implies a negative result also with regard to the algebra $\alge{A}_\mathbb{B}$
considered there.

We achieve this by showing that no learning algorithm $\algor{A}$ for the propositional algebra  using \mq s and \eq s can do better than asking $2^k$ \mq s/\eq s, where $k$ is the number of atomic propositions.\footnote{In~\cite{Nakamura00} Boolean formulas represented using OBDDs are claimed to be polynomially learnable with \mq s and \eq s. However,~\cite{Nakamura00} measures the size of an OBDD by its number of nodes, which can be exponential in the number of propositions.} 
We assume the learning algorithm is \emph{sound},
that is, if  $S_i^+$ and $S_i^-$ are the sets of positive and negative examples observed by the algorithm up to stage $i$, then at stage~$i+1$ the algorithm will not ask a \mq\ for a word in $S_i^+ \cup S_i^-$ 
or an \eq\ for an automaton that rejects a word in $S_i^+$ or accepts a word in~$S_i^-$.

\begin{prop} \label{proposition:propositional algebra learning}
	Let $\algor{A}$ be a sound learning algorithm for the propositional algebra over~$\mathbb{B}^k$. There exists a target predicate $\psi$ of size $k$, for which $\algor{A}$ will be forced to ask at least~$2^k-1$ queries (either \mq\ or \eq).
\end{prop}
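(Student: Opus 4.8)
The plan is to run a fixed adversary (an adaptive teacher) against the given sound learner $\algor{A}$ and force it through a version space of candidate targets that it can shrink by at most one element per query. As the hard family I would take the $2^k$ full minterms over $\dom{B}^k$: for each valuation $v\in\dom{B}^k$ let $\psi_v$ be the conjunction of the $k$ literals fixed by $v$, so that $\sema{\psi_v}=\{v\}$ and $\psi_v$ has size $k$; all of these are legal size-$k$ targets over $\dom{B}^k$. The adversary maintains the set $C$ of minterms consistent with all answers returned so far, starting from $|C|=2^k$, and always answers so as to keep $|C|$ maximal. For a membership query on a letter $x\in\dom{B}^k$ it answers $0$; this is consistent with every $\psi_v$ with $v\ne x$, so it discards at most the single candidate $\psi_x$, and by soundness no letter is queried twice, so distinct membership queries discard distinct candidates.

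For an equivalence query on a hypothesis predicate $h$ the teacher must return a counterexample unless $h$ is equivalent to the eventual target. As long as $\sema{h}\ne\emptyset$ the adversary picks some $x\in\sema{h}$ and returns $x$ labelled $0$: this is a genuine counterexample, since every $\psi_v$ with $v\ne x$ rejects $x$, and it again removes at most $\psi_x$ from $C$. With both query types deleting at most one candidate, the count is immediate. Whenever $|C|\ge 2$ there are two distinct minterms consistent with the entire interaction, so any single predicate output by $\algor{A}$ is wrong for at least one of them and $\algor{A}$ cannot have terminated correctly. Since each query removes at most one candidate, after fewer than $2^k-1$ queries we still have $|C|\ge 2$; fixing the target to be any surviving minterm (of size $k$) therefore forces $\algor{A}$ to ask at least $2^k-1$ queries.

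The step I expect to be the main obstacle is the equivalence query on the all-rejecting hypothesis $h\equiv\bot$. Here $\sema{h}=\emptyset$, so the only admissible counterexamples are \emph{positive} letters, and a sound $\algor{A}$ may legitimately pose this query even after having seen only negative data, since the empty set of positive examples places no constraint on $\bot$. Returning any positive letter $w$ commits the target to the unique minterm $\psi_w$ and collapses the version space in a single step, which would defeat the naive count above. Making the argument go through therefore requires ensuring that positive counterexamples cannot identify the target --- for instance by arranging that all surviving candidates share whatever witness the teacher is forced to reveal --- while still keeping the per-query deletion bounded and the candidates representable in size $k$. Reconciling these competing demands is the delicate core of the proof, and it is where the adversary's bookkeeping has to be carried out with care.
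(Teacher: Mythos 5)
You follow essentially the same route as the paper: an adversarial teacher, a pool of $2^k$ full minterms $\varphi_v$ with $\sema{\varphi_v}=\{v\}$, uniformly negative answers, and the observation that each query can eliminate at most one candidate, which yields the $2^k-1$ count. Up to the point where every hypothesis submitted to an \eq\ is satisfiable, your argument is fine (and soundness of $\algor{A}$ indeed guarantees that any $x\in\sema{h}$ is still unclassified, since $S_i^+=\emptyset$ and $\sema{h}\cap S_i^-=\emptyset$, so returning $\langle x,0\rangle$ is a legitimate counterexample that removes only $\varphi_x$).

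The case you flag and then leave open --- an \eq\ on the hypothesis $\bot$ --- is, however, a genuine hole rather than a bookkeeping nuisance. Under an all-negative history $\bot$ is a sound hypothesis, the teacher is forced to return a \emph{positive} counterexample, and with the candidate pool restricted to minterms this pins the target down to a single $\varphi_v$; one further \eq\ then ends the game. So against the learner that opens with an \eq\ on $\bot$ your adversary loses after two queries, and no refinement of the accounting over the minterm family alone can repair this: the candidate family must be enlarged (for instance with size-$k$ predicates of large semantics, such as $\neg\varphi_v$, so that a forced positive label still leaves an exponentially large version space), and the one-candidate-per-query count must then be re-established for that richer family. It is worth noting that the paper's own proof is no more explicit here: it counts classified letters rather than surviving candidates, asserts that the teacher can always return an unclassified counterexample on which the hypothesis ``disagrees,'' and its closing remark (that every singleton $\{v\}$ is the semantics of a size-$k$ predicate) only yields a final target when the entire history is negative. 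You have therefore reproduced the paper's argument and correctly located the step it glosses over, but since your proposal does not close that step either, as written it does not yet establish the proposition.
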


\begin{proof}
	Since $\algor{A}$ is sound, at stage $i+1$ we have $S_{i+1}^+\supseteq S_i^+$ and $S_{i+1}^-\supseteq S_i^-$ and at least~one inclusion is strict. Since the size of the concrete alphabet is $2^k$, 
	for every round $i<2^k$,
	an adversarial teacher can answer both \mq s and~\eq s negatively. In the case of an \eq\ there must be an element in $\mathbb{B}^k\setminus (S_i^-\cup S_i^+)$ with which the provided automaton disagrees. The adversary~will return one such element as a counterexample. This forces $\algor{A}$ to ask at least $2^k\!-\!1$ queries.
	Note that for any element $v$ in $\mathbb{B}^k$ there exists a predicate $\varphi_v$ of size $k$ such that $\sema{\varphi_v}=\{v\}$.
\end{proof}

\begin{cor}\label{cor:neg-qlearn-sfa-prop-alg}
SFAs  over the propositional algebra $\alge{A}_{\class{B}_k}$ with $k$ propositions cannot be learned in $\mathit{poly}(k)$ time using \mq s and \eq s. 
\end{cor}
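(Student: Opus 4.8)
The plan is to obtain the corollary as an immediate consequence of Theorem~\ref{thm:neg-query} and Proposition~\ref{proposition:propositional algebra learning}, read in contrapositive form. First I would use Proposition~\ref{proposition:propositional algebra learning} to conclude that the propositional algebra $\alge{A}_{\class{B}_k}$ is \emph{not} polynomially learnable using \mq s and \eq s. Indeed, the proposition exhibits, for each $k$, a target predicate $\psi$ of size $k$ (namely a conjunction of $k$ literals pinning down a single valuation $v\in\mathbb{B}^k$) against which any sound learner can be driven by an adversary to ask at least $2^k-1$ queries. Since the natural size parameter of a predicate over $\mathbb{B}^k$ is the number $k$ of atomic propositions (equivalently its parse-tree size, which is $\Theta(k)$), and $2^k-1$ is super-polynomial in $k$, no sound learner can meet the polynomial-query requirement for this algebra.

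Next I would invoke Theorem~\ref{thm:neg-query}. By the standing assumption of this section, the class $\class{M}_{\alge{A}_{\class{B}_k}}$ contains all basic SFAs $\aut{M}_\psi$, so the hypothesis of the theorem is satisfied. The theorem asserts that polynomial learnability of such a class of SFAs over $\alge{A}_{\class{B}_k}$ would entail polynomial learnability of $\alge{A}_{\class{B}_k}$ itself. Combined with the previous paragraph, this yields a contradiction: were SFAs over $\alge{A}_{\class{B}_k}$ learnable in $\mathit{poly}(k)$ time using \mq s and \eq s, we would obtain a polynomial learner for $\alge{A}_{\class{B}_k}$, which we have just ruled out. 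Hence no such SFA learner exists.

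The only point requiring care --- and the one I expect to be the real substance rather than the corollary itself --- is aligning the size measures so that ``polynomial in the representation'' coincides with ``polynomial in $k$''. Polynomiality is measured against the parameters $\langle n,m,l\rangle$ of a minimal SFA representation; for the witnessing basic SFA $\aut{M}_\psi$ the number of states and the maximal out-degree are constant, while the largest predicate ($\psi$ or $\neg\psi$) has size $\Theta(k)$. Thus the minimal representation of each hard instance is of size polynomial in $k$, and a $\mathit{poly}(k)$-time learner would be exactly a learner polynomial in the representation size. All the genuine work is packed into the lower bound of Proposition~\ref{proposition:propositional algebra learning}; the corollary merely transfers that exponential query lower bound from the algebra to the SFA class through Theorem~\ref{thm:neg-query}.
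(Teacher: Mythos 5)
Your proposal is correct and follows exactly the paper's intended derivation: the corollary is obtained by combining the query lower bound of Proposition~\ref{proposition:propositional algebra learning} (which rules out polynomial learnability of the algebra $\alge{A}_{\class{B}_k}$ itself) with the contrapositive of Theorem~\ref{thm:neg-query}. Your additional remark aligning the $\langle n,m,l\rangle$ size measure with the parameter $k$ via the basic SFAs $\aut{M}_\psi$ is a sensible and accurate elaboration of what the paper leaves implicit.
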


  The propositional algebra $\alge{A}_{\class{B}_k}$ is a special case of 
  the $n$-dimensional boxes algebra. 
  Learning $n$-dimensional boxes was studied
 using \mq s and \eq s~\cite{DBLP:conf/colt/GoldbergGM94,DBLP:journals/siamcomp/BshoutyGGM98,DBLP:journals/algorithmica/BeimelK98}, as well as in the PAC setting~\cite{DBLP:journals/ipl/BeimelK00}. 
 The algorithms presented in~\cite{DBLP:conf/colt/GoldbergGM94,DBLP:journals/siamcomp/BshoutyGGM98,DBLP:journals/algorithmica/BeimelK98,DBLP:journals/ipl/BeimelK00} are mostly exponential in $n$. Alternatively,~\cite{DBLP:conf/colt/GoldbergGM94,DBLP:journals/siamcomp/BshoutyGGM98} suggest algorithms that are exponential in the number of boxes in the union. In~\cite{DBLP:journals/algorithmica/BeimelK98} a linear query learning algorithm for unions of disjoint boxes is presented. 
 Since $n$-dimensional boxes subsume the propositional algebra, Corollary~\ref{cor:neg-qlearn-sfa-prop-alg} implies the following.  
  \begin{cor}
The class of SFAs over the $n$-dimensional boxes algebra cannot be learned in $\mathit{poly}(n)$ time using \mq s and \eq s. 
\end{cor}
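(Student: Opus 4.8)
The plan is to exhibit the propositional algebra $\alge{A}_{\class{B}_k}$ as a specialization of the $n$-dimensional boxes algebra and then transfer the hardness already established in Corollary~\ref{cor:neg-qlearn-sfa-prop-alg}. Concretely, I would first set $n=k$ and instantiate every coordinate of the $n$-dimensional boxes algebra so that it ranges over a two-element set $\{0,1\}$. Under this instantiation an atomic box $I_1 \times \cdots \times I_n$, with each factor $I_i \in \{\emptyset, \{0\}, \{1\}, \{0,1\}\}$, corresponds exactly to a monomial over $k$ atomic propositions: the choice $I_i=\{1\}$ encodes the literal $p_i$, the choice $I_i=\{0\}$ encodes $\neg p_i$, and $I_i=\{0,1\}$ encodes the absence of $p_i$. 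A finite union of boxes then corresponds to a disjunction of monomials. Hence, on a domain with exactly two values per dimension, the $n$-dimensional boxes algebra coincides with the propositional algebra $\alge{A}_{\class{B}_n}$, both as a set of definable predicates over the concrete domain $\mathbb{B}^n$ and, up to a constant factor, as a size measure on predicates (a box and its monomial have the same number of nontrivial coordinates).

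Next I would argue by contraposition. Suppose some algorithm learns every SFA over the $n$-dimensional boxes algebra using a number of \mq s and \eq s polynomial in $n$ (and the remaining size parameters of the target), running in polynomial time. Instantiating the boxes algebra with dimension size exactly $2$ makes its concrete domain literally $\mathbb{B}^n$ and its predicates literally those of $\alge{A}_{\class{B}_n}$, by the correspondence above. Confined to this instance, the same algorithm therefore learns SFAs over $\alge{A}_{\class{B}_n}$ with polynomially many queries in polynomial time, with the dimension $n$ playing exactly the role of the number $k$ of propositions. This contradicts Corollary~\ref{cor:neg-qlearn-sfa-prop-alg}, which states that SFAs over $\alge{A}_{\class{B}_k}$ cannot be learned in $\mathit{poly}(k)$ time using \mq s and \eq s, yielding the claim.

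The step I expect to require the most care is ensuring that the specialization is genuinely a restriction of the \emph{same} learning problem, so that the two-valued learner never leaves the propositional fragment. In particular, I would verify that every \mq\ and \eq\ issued by the boxes learner, when the target lives in the two-valued sub-domain, can be answered within $\alge{A}_{\class{B}_n}$. This is immediate precisely because I fix each coordinate set to $\{0,1\}$ so that the concrete domain is literally $\mathbb{B}^n$, rather than embedding $\mathbb{B}^n$ into a larger grid (which could tempt the learner to query points outside the fragment). I would also confirm that the polynomial bound measured in the dimension $n$ for the boxes algebra transfers to a polynomial bound in the proposition count $k=n$ for $\alge{A}_{\class{B}_n}$; this is routine given the predicate-size correspondence noted above. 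With those two checks in place, the reduction is complete.
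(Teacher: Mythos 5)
Your proposal is correct and follows essentially the same route as the paper: the paper's entire argument is the observation that the propositional algebra $\alge{A}_{\class{B}_k}$ is a special case of the $n$-dimensional boxes algebra, so Corollary~\ref{cor:neg-qlearn-sfa-prop-alg} transfers directly. You have merely made the subsumption and the contrapositive reduction explicit, which the paper leaves implicit.
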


\section{Discussion}\label{sec:discuss}
We examined the question of learnability of a class of SFAs over certain algebras
where the main focus of our study is on passive learning. 
We provided a necessary condition for identification of SFAs 
in the limit using polynomial time and data, as well as a necessary condition for efficient learning of SFAs using membership and equivalence queries.
We note that a positive result on learning deterministic SFAs using \mq s and \eq s implies a positive result 
for identification of deterministic SFAs 
in the limit using polynomial time and data. 
The latter follows because a systematic set of characteristic samples $\{S_L\}_{L\in\class{L}}$ for a class of languages $\class{L}$ may be obtained by collecting
the words observed by the query learner when learning $L$, and given the SFA is deterministic, the words in the sample can be restricted to ones of polynomial size, thus the size of the sample is  polynomial in the size of the SFA.\footnote{For non-deterministic SFAs, as is the case for NFAs, it might be the case that some exponentially long words are required to learn the language~\cite{Higuera97}.} However, it does not imply a positive result regarding the stronger notion of {efficient identifiability}, as the latter requires the set to be also constructed efficiently, and the complexity analysis for query learning does not include the complexity of the teacher in computing queries, e.g., deciding equivalence and in constructing counterexamples.
We thus 
provided a
sufficient condition for {efficient identification} of a class of SFAs, and showed that the class of SFAs over any monotonic algebra satisfies these conditions.

We hope that these sufficient or necessary conditions will help to obtain more positive and negative results for learning of SFAs,
and spark an interest in investigating characteristic samples in other automata models used in verification.

\section*{Acknowledgments}
This research was partially supported by the Israel Science Foundation grant 2507/21.
S.~Zilles acknowledges financial support by the Natural Sciences and Engineering Research Council (NSERC) of Canada, both through the Canada Research Chairs program and the Discovery Grants program. She was also supported through the Canada CIFAR AI Chairs program as an Affiliate Chair with the Alberta Machine Intelligence Institute (Amii).

\bibliographystyle{alphaurl}
\bibliography{bib}

\newcommand{\etalchar}[1]{$^{#1}$}
\begin{thebibliography}{VGRW22}

\bibitem[AD18]{ArgyrosD18}
George Argyros and Loris D'Antoni.
\newblock The learnability of symbolic automata.
\newblock In Hana Chockler and Georg Weissenbacher, editors, {\em Computer
  Aided Verification - 30th International Conference, {CAV} 2018, Held as Part
  of the Federated Logic Conference, FloC 2018, Oxford, UK, July 14-17, 2018,
  Proceedings, Part {I}}, volume 10981 of {\em Lecture Notes in Computer
  Science}, pages 427--445. Springer, 2018.
\newblock \href {https://doi.org/10.1007/978-3-319-96145-3\_23}
  {\path{doi:10.1007/978-3-319-96145-3\_23}}.

\bibitem[AEF15]{AngluinEF15}
Dana Angluin, Sarah Eisenstat, and Dana Fisman.
\newblock Learning regular languages via alternating automata.
\newblock In Qiang Yang and Michael~J. Wooldridge, editors, {\em Proceedings of
  the Twenty-Fourth International Joint Conference on Artificial Intelligence,
  {IJCAI} 2015, Buenos Aires, Argentina, July 25-31, 2015}, pages 3308--3314.
  {AAAI} Press, 2015.
\newblock URL: \url{http://ijcai.org/Abstract/15/466}.

\bibitem[AF16]{AngluinF16}
Dana Angluin and Dana Fisman.
\newblock Learning regular omega languages.
\newblock {\em Theor. Comput. Sci.}, 650:57--72, 2016.
\newblock \href {https://doi.org/10.1016/j.tcs.2016.07.031}
  {\path{doi:10.1016/j.tcs.2016.07.031}}.

\bibitem[AFH99]{AlurFH99}
Rajeev Alur, Limor Fix, and Thomas~A. Henzinger.
\newblock Event-clock automata: {A} determinizable class of timed automata.
\newblock {\em Theor. Comput. Sci.}, 211(1-2):253--273, 1999.
\newblock \href {https://doi.org/10.1016/S0304-3975(97)00173-4}
  {\path{doi:10.1016/S0304-3975(97)00173-4}}.

\bibitem[Ang81]{Angluin81}
Dana Angluin.
\newblock A note on the number of queries needed to identify regular languages.
\newblock {\em Inf. Control.}, 51(1):76--87, 1981.
\newblock \href {https://doi.org/10.1016/S0019-9958(81)90090-5}
  {\path{doi:10.1016/S0019-9958(81)90090-5}}.

\bibitem[Ang87a]{Angluin87}
Dana Angluin.
\newblock Learning regular sets from queries and counterexamples.
\newblock {\em Inf. Comput.}, 75(2):87--106, 1987.
\newblock \href {https://doi.org/10.1016/0890-5401(87)90052-6}
  {\path{doi:10.1016/0890-5401(87)90052-6}}.

\bibitem[Ang87b]{Angluin88}
Dana Angluin.
\newblock Queries and concept learning.
\newblock {\em Mach. Learn.}, 2(4):319--342, 1987.
\newblock \href {https://doi.org/10.1007/BF00116828}
  {\path{doi:10.1007/BF00116828}}.

\bibitem[Ang90]{Angluin90}
Dana Angluin.
\newblock Negative results for equivalence queries.
\newblock {\em Mach. Learn.}, 5:121--150, 1990.
\newblock \href {https://doi.org/10.1007/BF00116034}
  {\path{doi:10.1007/BF00116034}}.

\bibitem[ASJ{\etalchar{+}}16]{ArgyrosSJKK16}
George Argyros, Ioannis Stais, Suman Jana, Angelos~D. Keromytis, and Aggelos
  Kiayias.
\newblock {SFAD}iff: Automated evasion attacks and fingerprinting using
  black-box differential automata learning.
\newblock In Edgar~R. Weippl, Stefan Katzenbeisser, Christopher Kruegel,
  Andrew~C. Myers, and Shai Halevi, editors, {\em Proceedings of the 2016 {ACM}
  {SIGSAC} Conference on Computer and Communications Security, Vienna, Austria,
  October 24-28, 2016}, pages 1690--1701. {ACM}, 2016.
\newblock \href {https://doi.org/10.1145/2976749.2978383}
  {\path{doi:10.1145/2976749.2978383}}.

\bibitem[ASKK16]{ArgyrosSKK16}
George Argyros, Ioannis Stais, Aggelos Kiayias, and Angelos~D. Keromytis.
\newblock Back in black: Towards formal, black box analysis of sanitizers and
  filters.
\newblock In {\em {IEEE} Symposium on Security and Privacy, {SP} 2016, San
  Jose, CA, USA, May 22-26, 2016}, pages 91--109. {IEEE} Computer Society,
  2016.
\newblock \href {https://doi.org/10.1109/SP.2016.14}
  {\path{doi:10.1109/SP.2016.14}}.

\bibitem[AV10]{AartsV10}
Fides Aarts and Frits~W. Vaandrager.
\newblock Learning {I/O} automata.
\newblock In Paul Gastin and Fran{\c{c}}ois Laroussinie, editors, {\em {CONCUR}
  2010 - Concurrency Theory, 21th International Conference, {CONCUR} 2010,
  Paris, France, August 31-September 3, 2010. Proceedings}, volume 6269 of {\em
  Lecture Notes in Computer Science}, pages 71--85. Springer, 2010.
\newblock \href {https://doi.org/10.1007/978-3-642-15375-4\_6}
  {\path{doi:10.1007/978-3-642-15375-4\_6}}.

\bibitem[BGGM98]{DBLP:journals/siamcomp/BshoutyGGM98}
Nader~H. Bshouty, Paul~W. Goldberg, Sally~A. Goldman, and H.~David Mathias.
\newblock Exact learning of discretized geometric concepts.
\newblock {\em {SIAM} J. Comput.}, 28(2):674--699, 1998.
\newblock \href {https://doi.org/10.1137/S0097539794274246}
  {\path{doi:10.1137/S0097539794274246}}.

\bibitem[BGJ{\etalchar{+}}05]{DBLP:conf/fase/BergGJLRS05}
Therese Berg, Olga Grinchtein, Bengt Jonsson, Martin Leucker, Harald Raffelt,
  and Bernhard Steffen.
\newblock On the correspondence between conformance testing and regular
  inference.
\newblock In Maura Cerioli, editor, {\em Fundamental Approaches to Software
  Engineering, 8th International Conference, {FASE} 2005, Held as Part of the
  Joint European Conferences on Theory and Practice of Software, {ETAPS} 2005,
  Edinburgh, UK, April 4-8, 2005, Proceedings}, volume 3442 of {\em Lecture
  Notes in Computer Science}, pages 175--189. Springer, 2005.
\newblock \href {https://doi.org/10.1007/978-3-540-31984-9\_14}
  {\path{doi:10.1007/978-3-540-31984-9\_14}}.

\bibitem[BHKL09]{BolligHKL09}
Benedikt Bollig, Peter Habermehl, Carsten Kern, and Martin Leucker.
\newblock Angluin-style learning of {NFA}.
\newblock In Craig Boutilier, editor, {\em {IJCAI} 2009, Proceedings of the
  21st International Joint Conference on Artificial Intelligence, Pasadena,
  California, USA, July 11-17, 2009}, pages 1004--1009, 2009.
\newblock URL: \url{http://ijcai.org/Proceedings/09/Papers/170.pdf}.

\bibitem[BK98]{DBLP:journals/algorithmica/BeimelK98}
Amos Beimel and Eyal Kushilevitz.
\newblock Learning boxes in high dimension.
\newblock {\em Algorithmica}, 22(1/2):76--90, 1998.
\newblock \href {https://doi.org/10.1007/PL00013835}
  {\path{doi:10.1007/PL00013835}}.

\bibitem[BK00]{DBLP:journals/ipl/BeimelK00}
Amos Beimel and Eyal Kushilevitz.
\newblock Learning unions of high-dimensional boxes over the reals.
\newblock {\em Inf. Process. Lett.}, 73(5-6):213--220, 2000.
\newblock \href {https://doi.org/10.1016/S0020-0190(00)00024-7}
  {\path{doi:10.1016/S0020-0190(00)00024-7}}.

\bibitem[BK08]{BaierKatoenBook}
Christel Baier and Joost{-}Pieter Katoen.
\newblock {\em Principles of model checking}.
\newblock {MIT} Press, 2008.

\bibitem[Bry86]{DBLP:journals/tc/Bryant86}
Randal~E. Bryant.
\newblock Graph-based algorithms for boolean function manipulation.
\newblock {\em {IEEE} Trans. Computers}, 35(8):677--691, 1986.
\newblock \href {https://doi.org/10.1109/TC.1986.1676819}
  {\path{doi:10.1109/TC.1986.1676819}}.

\bibitem[BV96]{BV96}
Francesco Bergadano and Stefano Varricchio.
\newblock Learning behaviors of automata from multiplicity and equivalence
  queries.
\newblock {\em {SIAM} J. Comput.}, 25(6):1268--1280, 1996.
\newblock \href {https://doi.org/10.1137/S009753979326091X}
  {\path{doi:10.1137/S009753979326091X}}.

\bibitem[CDYS17]{ChubachiDYS17}
K.~Chubachi, Diptarama, R.~Yoshinaka, and A.~Shinohara.
\newblock Query learning of regular languages over large ordered alphabets.
\newblock In {\em 1st Workshop on Learning and Automata (LearnAut)}, 2017.

\bibitem[CGP01]{ClarkeGP2001}
Edmund~M. Clarke, Orna Grumberg, and Doron~A. Peled.
\newblock {\em Model checking, 1st Edition}.
\newblock {MIT} Press, 2001.
\newblock URL: \url{http://books.google.de/books?id=Nmc4wEaLXFEC}.

\bibitem[CHYS19]{abs-1902-07417}
Kaizaburo Chubachi, Diptarama Hendrian, Ryo Yoshinaka, and Ayumi Shinohara.
\newblock Query learning algorithm for residual symbolic finite automata.
\newblock In J{\'{e}}r{\^{o}}me Leroux and Jean{-}Fran{\c{c}}ois Raskin,
  editors, {\em Proceedings Tenth International Symposium on Games, Automata,
  Logics, and Formal Verification, GandALF 2019, Bordeaux, France, 2-3rd
  September 2019}, volume 305 of {\em {EPTCS}}, pages 140--153, 2019.
\newblock \href {https://doi.org/10.4204/EPTCS.305.10}
  {\path{doi:10.4204/EPTCS.305.10}}.

\bibitem[CKKS20]{ChocklerKKS20}
Hana Chockler, Pascal Kesseli, Daniel Kroening, and Ofer Strichman.
\newblock Learning the language of software errors.
\newblock {\em J. Artif. Intell. Res.}, 67:881--903, 2020.
\newblock \href {https://doi.org/10.1613/jair.1.11798}
  {\path{doi:10.1613/jair.1.11798}}.

\bibitem[DD17]{DrewsD17}
Samuel Drews and Loris D'Antoni.
\newblock Learning symbolic automata.
\newblock In Axel Legay and Tiziana Margaria, editors, {\em Tools and
  Algorithms for the Construction and Analysis of Systems - 23rd International
  Conference, {TACAS} 2017, Held as Part of the European Joint Conferences on
  Theory and Practice of Software, {ETAPS} 2017, Uppsala, Sweden, April 22-29,
  2017, Proceedings, Part {I}}, volume 10205 of {\em Lecture Notes in Computer
  Science}, pages 173--189, 2017.
\newblock \href {https://doi.org/10.1007/978-3-662-54577-5\_10}
  {\path{doi:10.1007/978-3-662-54577-5\_10}}.

\bibitem[dlH97]{Higuera97}
Colin de~la Higuera.
\newblock Characteristic sets for polynomial grammatical inference.
\newblock {\em Mach. Learn.}, 27(2):125--138, 1997.
\newblock \href {https://doi.org/10.1023/A:1007353007695}
  {\path{doi:10.1023/A:1007353007695}}.

\bibitem[DV14]{DBLP:conf/popl/DAntoniV14}
Loris D'Antoni and Margus Veanes.
\newblock Minimization of symbolic automata.
\newblock In Suresh Jagannathan and Peter Sewell, editors, {\em The 41st Annual
  {ACM} {SIGPLAN-SIGACT} Symposium on Principles of Programming Languages,
  {POPL} '14, San Diego, CA, USA, January 20-21, 2014}, pages 541--554. {ACM},
  2014.
\newblock \href {https://doi.org/10.1145/2535838.2535849}
  {\path{doi:10.1145/2535838.2535849}}.

\bibitem[DV16]{DAntoniV16}
Loris D'Antoni and Margus Veanes.
\newblock Minimization of symbolic tree automata.
\newblock In Martin Grohe, Eric Koskinen, and Natarajan Shankar, editors, {\em
  Proceedings of the 31st Annual {ACM/IEEE} Symposium on Logic in Computer
  Science, {LICS} '16, New York, NY, USA, July 5-8, 2016}, pages 873--882.
  {ACM}, 2016.
\newblock \href {https://doi.org/10.1145/2933575.2933578}
  {\path{doi:10.1145/2933575.2933578}}.

\bibitem[DVLM14]{DAntoniVLM14}
Loris D'Antoni, Margus Veanes, Benjamin Livshits, and David Molnar.
\newblock Fast: a transducer-based language for tree manipulation.
\newblock In Michael F.~P. O'Boyle and Keshav Pingali, editors, {\em {ACM}
  {SIGPLAN} Conference on Programming Language Design and Implementation,
  {PLDI} '14, Edinburgh, United Kingdom - June 09 - 11, 2014}, pages 384--394.
  {ACM}, 2014.
\newblock \href {https://doi.org/10.1145/2594291.2594309}
  {\path{doi:10.1145/2594291.2594309}}.

\bibitem[FFZ22]{FFZ22}
Dana Fisman, Hadar Frenkel, and Sandra Zilles.
\newblock Inferring symbolic automata.
\newblock In Florin Manea and Alex Simpson, editors, {\em 30th {EACSL} Annual
  Conference on Computer Science Logic, {CSL} 2022, February 14-19, 2022,
  G{\"{o}}ttingen, Germany (Virtual Conference)}, volume 216 of {\em LIPIcs},
  pages 21:1--21:19. Schloss Dagstuhl - Leibniz-Zentrum f{\"{u}}r Informatik,
  2022.
\newblock \href {https://doi.org/10.4230/LIPIcs.CSL.2022.21}
  {\path{doi:10.4230/LIPIcs.CSL.2022.21}}.

\bibitem[FGPS20]{DBLP:conf/tacas/FrenkelGPS20}
Hadar Frenkel, Orna Grumberg, Corina~S. Pasareanu, and Sarai Sheinvald.
\newblock Assume, guarantee or repair.
\newblock In Armin Biere and David Parker, editors, {\em Tools and Algorithms
  for the Construction and Analysis of Systems - 26th International Conference,
  {TACAS} 2020, Held as Part of the European Joint Conferences on Theory and
  Practice of Software, {ETAPS} 2020, Dublin, Ireland, April 25-30, 2020,
  Proceedings, Part {I}}, volume 12078 of {\em Lecture Notes in Computer
  Science}, pages 211--227. Springer, 2020.
\newblock \href {https://doi.org/10.1007/978-3-030-45190-5\_12}
  {\path{doi:10.1007/978-3-030-45190-5\_12}}.

\bibitem[FGPS22]{DBLP:journals/sttt/FrenkelGPS22}
Hadar Frenkel, Orna Grumberg, Corina~S. Pasareanu, and Sarai Sheinvald.
\newblock Assume, guarantee or repair: a regular framework for non regular
  properties.
\newblock {\em Int. J. Softw. Tools Technol. Transf.}, 24(5):667--689, 2022.
\newblock \href {https://doi.org/10.1007/s10009-022-00669-9}
  {\path{doi:10.1007/s10009-022-00669-9}}.

\bibitem[Fis18]{Fisman18}
Dana Fisman.
\newblock Inferring regular languages and \emph{{\(\omega\)}}-languages.
\newblock {\em J. Log. Algebraic Methods Program.}, 98:27--49, 2018.
\newblock \href {https://doi.org/10.1016/j.jlamp.2018.03.002}
  {\path{doi:10.1016/j.jlamp.2018.03.002}}.

\bibitem[GGM94]{DBLP:conf/colt/GoldbergGM94}
Paul~W. Goldberg, Sally~A. Goldman, and H.~David Mathias.
\newblock Learning unions of boxes with membership and equivalence queries.
\newblock In Manfred~K. Warmuth, editor, {\em Proceedings of the Seventh Annual
  {ACM} Conference on Computational Learning Theory, {COLT} 1994, New
  Brunswick, NJ, USA, July 12-15, 1994}, pages 198--207. {ACM}, 1994.
\newblock \href {https://doi.org/10.1145/180139.181102}
  {\path{doi:10.1145/180139.181102}}.

\bibitem[GJL10]{GrinchteinJL10}
Olga Grinchtein, Bengt Jonsson, and Martin Leucker.
\newblock Learning of event-recording automata.
\newblock {\em Theor. Comput. Sci.}, 411(47):4029--4054, 2010.
\newblock \href {https://doi.org/10.1016/j.tcs.2010.07.008}
  {\path{doi:10.1016/j.tcs.2010.07.008}}.

\bibitem[GM96]{GoldmanM96}
Sally~A. Goldman and H.~David Mathias.
\newblock Teaching a smarter learner.
\newblock {\em J. Comput. Syst. Sci.}, 52(2):255--267, 1996.
\newblock \href {https://doi.org/10.1006/jcss.1996.0020}
  {\path{doi:10.1006/jcss.1996.0020}}.

\bibitem[Gol78]{DBLP:journals/iandc/Gold78}
E.~Mark Gold.
\newblock Complexity of automaton identification from given data.
\newblock {\em Inf. Control.}, 37(3):302--320, 1978.
\newblock \href {https://doi.org/10.1016/S0019-9958(78)90562-4}
  {\path{doi:10.1016/S0019-9958(78)90562-4}}.

\bibitem[HD17]{HuD17}
Qinheping Hu and Loris D'Antoni.
\newblock Automatic program inversion using symbolic transducers.
\newblock In Albert Cohen and Martin~T. Vechev, editors, {\em Proceedings of
  the 38th {ACM} {SIGPLAN} Conference on Programming Language Design and
  Implementation, {PLDI} 2017, Barcelona, Spain, June 18-23, 2017}, pages
  376--389. {ACM}, 2017.
\newblock \href {https://doi.org/10.1145/3062341.3062345}
  {\path{doi:10.1145/3062341.3062345}}.

\bibitem[HLM{\etalchar{+}}11]{HooimeijerLMSV11}
Pieter Hooimeijer, Benjamin Livshits, David Molnar, Prateek Saxena, and Margus
  Veanes.
\newblock Fast and precise sanitizer analysis with {BEK}.
\newblock In {\em 20th {USENIX} Security Symposium, San Francisco, CA, USA,
  August 8-12, 2011, Proceedings}. {USENIX} Association, 2011.
\newblock URL:
  \url{http://static.usenix.org/events/sec11/tech/full\_papers/Hooimeijer.pdf}.

\bibitem[HSM11]{DBLP:conf/vmcai/HowarSM11}
Falk Howar, Bernhard Steffen, and Maik Merten.
\newblock Automata learning with automated alphabet abstraction refinement.
\newblock In Ranjit Jhala and David~A. Schmidt, editors, {\em Verification,
  Model Checking, and Abstract Interpretation - 12th International Conference,
  {VMCAI} 2011, Austin, TX, USA, January 23-25, 2011. Proceedings}, volume 6538
  of {\em Lecture Notes in Computer Science}, pages 263--277. Springer, 2011.
\newblock \href {https://doi.org/10.1007/978-3-642-18275-4\_19}
  {\path{doi:10.1007/978-3-642-18275-4\_19}}.

\bibitem[HV11]{DBLP:conf/vmcai/HooimeijerV11}
Pieter Hooimeijer and Margus Veanes.
\newblock An evaluation of automata algorithms for string analysis.
\newblock In Ranjit Jhala and David~A. Schmidt, editors, {\em Verification,
  Model Checking, and Abstract Interpretation - 12th International Conference,
  {VMCAI} 2011, Austin, TX, USA, January 23-25, 2011. Proceedings}, volume 6538
  of {\em Lecture Notes in Computer Science}, pages 248--262. Springer, 2011.
\newblock \href {https://doi.org/10.1007/978-3-642-18275-4\_18}
  {\path{doi:10.1007/978-3-642-18275-4\_18}}.

\bibitem[KT14]{KeilT14}
Matthias Keil and Peter Thiemann.
\newblock Symbolic solving of extended regular expression inequalities.
\newblock In Venkatesh Raman and S.~P. Suresh, editors, {\em 34th International
  Conference on Foundation of Software Technology and Theoretical Computer
  Science, {FSTTCS} 2014, December 15-17, 2014, New Delhi, India}, volume~29 of
  {\em LIPIcs}, pages 175--186. Schloss Dagstuhl - Leibniz-Zentrum f{\"{u}}r
  Informatik, 2014.
\newblock \href {https://doi.org/10.4230/LIPIcs.FSTTCS.2014.175}
  {\path{doi:10.4230/LIPIcs.FSTTCS.2014.175}}.

\bibitem[MM14]{MalerM14}
Oded Maler and Irini{-}Eleftheria Mens.
\newblock Learning regular languages over large alphabets.
\newblock In Erika {\'{A}}brah{\'{a}}m and Klaus Havelund, editors, {\em Tools
  and Algorithms for the Construction and Analysis of Systems - 20th
  International Conference, {TACAS} 2014, Held as Part of the European Joint
  Conferences on Theory and Practice of Software, {ETAPS} 2014, Grenoble,
  France, April 5-13, 2014. Proceedings}, volume 8413 of {\em Lecture Notes in
  Computer Science}, pages 485--499. Springer, 2014.
\newblock \href {https://doi.org/10.1007/978-3-642-54862-8\_41}
  {\path{doi:10.1007/978-3-642-54862-8\_41}}.

\bibitem[MM17]{MalerM17}
Oded Maler and Irini{-}Eleftheria Mens.
\newblock A generic algorithm for learning symbolic automata from membership
  queries.
\newblock In Luca Aceto, Giorgio Bacci, Giovanni Bacci, Anna
  Ing{\'{o}}lfsd{\'{o}}ttir, Axel Legay, and Radu Mardare, editors, {\em
  Models, Algorithms, Logics and Tools - Essays Dedicated to Kim Guldstrand
  Larsen on the Occasion of His 60th Birthday}, volume 10460 of {\em Lecture
  Notes in Computer Science}, pages 146--169. Springer, 2017.
\newblock \href {https://doi.org/10.1007/978-3-319-63121-9\_8}
  {\path{doi:10.1007/978-3-319-63121-9\_8}}.

\bibitem[MP95]{MalerP95}
Oded Maler and Amir Pnueli.
\newblock On the learnability of infinitary regular sets.
\newblock {\em Inf. Comput.}, 118(2):316--326, 1995.
\newblock \href {https://doi.org/10.1006/inco.1995.1070}
  {\path{doi:10.1006/inco.1995.1070}}.

\bibitem[MRA{\etalchar{+}}17]{MamourasRAIK17}
Konstantinos Mamouras, Mukund Raghothaman, Rajeev Alur, Zachary~G. Ives, and
  Sanjeev Khanna.
\newblock {StreamQRE}: modular specification and efficient evaluation of
  quantitative queries over streaming data.
\newblock In Albert Cohen and Martin~T. Vechev, editors, {\em Proceedings of
  the 38th {ACM} {SIGPLAN} Conference on Programming Language Design and
  Implementation, {PLDI} 2017, Barcelona, Spain, June 18-23, 2017}, pages
  693--708. {ACM}, 2017.
\newblock \href {https://doi.org/10.1145/3062341.3062369}
  {\path{doi:10.1145/3062341.3062369}}.

\bibitem[Myh57]{Myhill57}
J.~Myhill.
\newblock Finite automata and the representation of events.
\newblock Technical report, Wright Patterson AFB, Ohio, 1957.

\bibitem[Nak00]{Nakamura00}
Atsuyoshi Nakamura.
\newblock Query learning of bounded-width obdds.
\newblock {\em Theor. Comput. Sci.}, 241(1-2):83--114, 2000.
\newblock \href {https://doi.org/10.1016/S0304-3975(99)00267-4}
  {\path{doi:10.1016/S0304-3975(99)00267-4}}.

\bibitem[Ner58]{Nerode58}
A.~Nerode.
\newblock Linear automaton transformations.
\newblock In {\em Proceedings of the American Mathematical Society, 9(4)}, page
  541–544, 1958.

\bibitem[NFZ21]{NitayFZ21}
Dolav Nitay, Dana Fisman, and Michal Ziv{-}Ukelson.
\newblock Learning of structurally unambiguous probabilistic grammars.
\newblock In {\em Thirty-Fifth {AAAI} Conference on Artificial Intelligence,
  {AAAI} 2021, Thirty-Third Conference on Innovative Applications of Artificial
  Intelligence, {IAAI} 2021, The Eleventh Symposium on Educational Advances in
  Artificial Intelligence, {EAAI} 2021, Virtual Event, February 2-9, 2021},
  pages 9170--9178. {AAAI} Press, 2021.
\newblock URL: \url{https://ojs.aaai.org/index.php/AAAI/article/view/17107}.

\bibitem[OG92]{RPNI}
J.~Oncina and P.~García.
\newblock Inferring regular languages in polynomial update time.
\newblock In {\em Selected Papers from the 4th Spanish Symposium on Pattern
  Recognition and Image Analysis}, pages 49--61, 1992.
\newblock \href {https://doi.org/10.1142/9789812797902\_0004}
  {\path{doi:10.1142/9789812797902\_0004}}.

\bibitem[PGB{\etalchar{+}}08]{DBLP:journals/fmsd/PasareanuGBCB08}
Corina~S. Pasareanu, Dimitra Giannakopoulou, Mihaela~Gheorghiu Bobaru,
  Jamieson~M. Cobleigh, and Howard Barringer.
\newblock Learning to divide and conquer: applying the {L}* algorithm to
  automate assume-guarantee reasoning.
\newblock {\em Formal Methods Syst. Des.}, 32(3):175--205, 2008.
\newblock \href {https://doi.org/10.1007/s10703-008-0049-6}
  {\path{doi:10.1007/s10703-008-0049-6}}.

\bibitem[PGLM15]{PredaGLM15}
Mila~Dalla Preda, Roberto Giacobazzi, Arun Lakhotia, and Isabella Mastroeni.
\newblock Abstract symbolic automata: Mixed syntactic/semantic similarity
  analysis of executables.
\newblock In Sriram~K. Rajamani and David Walker, editors, {\em Proceedings of
  the 42nd Annual {ACM} {SIGPLAN-SIGACT} Symposium on Principles of Programming
  Languages, {POPL} 2015, Mumbai, India, January 15-17, 2015}, pages 329--341.
  {ACM}, 2015.
\newblock \href {https://doi.org/10.1145/2676726.2676986}
  {\path{doi:10.1145/2676726.2676986}}.

\bibitem[Pit89]{Pitt89}
Leonard Pitt.
\newblock Inductive inference, {DFA}s, and computational complexity.
\newblock In Klaus~P. Jantke, editor, {\em Analogical and Inductive Inference,
  International Workshop {AII} '89, Reinhardsbrunn Castle, GDR, October 1-6,
  1989, Proceedings}, volume 397 of {\em Lecture Notes in Computer Science},
  pages 18--44. Springer, 1989.
\newblock \href {https://doi.org/10.1007/3-540-51734-0\_50}
  {\path{doi:10.1007/3-540-51734-0\_50}}.

\bibitem[Sak90]{Sakakibara90}
Yasubumi Sakakibara.
\newblock Learning context-free grammars from structural data in polynomial
  time.
\newblock {\em Theor. Comput. Sci.}, 76(2-3):223--242, 1990.
\newblock \href {https://doi.org/10.1016/0304-3975(90)90017-C}
  {\path{doi:10.1016/0304-3975(90)90017-C}}.

\bibitem[She19]{DBLP:conf/fm/Sheinvald19}
Sarai Sheinvald.
\newblock Learning deterministic variable automata over infinite alphabets.
\newblock In Maurice~H. ter Beek, Annabelle McIver, and Jos{\'{e}}~N. Oliveira,
  editors, {\em Formal Methods - The Next 30 Years - Third World Congress, {FM}
  2019, Porto, Portugal, October 7-11, 2019, Proceedings}, volume 11800 of {\em
  Lecture Notes in Computer Science}, pages 633--650. Springer, 2019.
\newblock \href {https://doi.org/10.1007/978-3-030-30942-8\_37}
  {\path{doi:10.1007/978-3-030-30942-8\_37}}.

\bibitem[SV17]{SaarikiviV17}
Olli Saarikivi and Margus Veanes.
\newblock Translating c{\#} to branching symbolic transducers.
\newblock In Thomas Eiter, David Sands, Geoff Sutcliffe, and Andrei Voronkov,
  editors, {\em IWIL@LPAR 2017 Workshop and {LPAR-21} Short Presentations,
  Maun, Botswana, May 7-12, 2017}, volume~1 of {\em Kalpa Publications in
  Computing}, pages 86--99. EasyChair, 2017.
\newblock \href {https://doi.org/10.29007/7jf1} {\path{doi:10.29007/7jf1}}.

\bibitem[Vaa17]{Vaandrager17}
Frits~W. Vaandrager.
\newblock Model learning.
\newblock {\em Commun. {ACM}}, 60(2):86--95, 2017.
\newblock \href {https://doi.org/10.1145/2967606} {\path{doi:10.1145/2967606}}.

\bibitem[VdHT10]{DBLP:conf/icst/VeanesHT10}
Margus Veanes, Peli de~Halleux, and Nikolai Tillmann.
\newblock Rex: Symbolic regular expression explorer.
\newblock In {\em Third International Conference on Software Testing,
  Verification and Validation, {ICST} 2010, Paris, France, April 7-9, 2010},
  pages 498--507. {IEEE} Computer Society, 2010.
\newblock \href {https://doi.org/10.1109/ICST.2010.15}
  {\path{doi:10.1109/ICST.2010.15}}.

\bibitem[VGRW22]{DBLP:conf/tacas/VaandragerGRW22}
F.W. Vaandrager, B.~Garhewal, J.~Rot, and T.~Wi{\ss}mann.
\newblock A new approach for active automata learning based on apartness.
\newblock In Dana Fisman and Grigore Rosu, editors, {\em Tools and Algorithms
  for the Construction and Analysis of Systems {TACAS} 2022}, volume 13243 of
  {\em Lecture Notes in Computer Science}, pages 223--243. Springer, 2022.
\newblock \href {https://doi.org/10.1007/978-3-030-99524-9\_12}
  {\path{doi:10.1007/978-3-030-99524-9\_12}}.

\bibitem[ZSZR18]{Zhu18}
Xiaojin Zhu, Adish Singla, Sandra Zilles, and Anna~N. Rafferty.
\newblock An overview of machine teaching.
\newblock {\em CoRR}, abs/1801.05927, 2018.
\newblock URL: \url{http://arxiv.org/abs/1801.05927}, \href
  {http://arxiv.org/abs/1801.05927} {\path{arXiv:1801.05927}}.

\end{thebibliography}

\newpage
\appendix

\section{Efficient Identification of DFAs}\label{App:DFA}
It was shown by~\cite{DBLP:journals/iandc/Gold78,RPNI} that  DFAs are identifiable in the limit using polynomial time and data.
Since the proof of Proposition~\ref{lemma:sufficient_monotonic}  relies on some properties of the involved procedures and 
for completeness of the presentation, we provide a complete description of the procedures showing that 
DFAs are identifiable in the limit using polynomial time and data, and that they satisfy the required properties.

\bigparagraph{Theorem~\ref{prop:charDFA} (restated) \cite{RPNI}}
    \emph{
    \itemI
    The class of DFAs is efficiently identifiable via
    procedures $\CharDFA$ and $\InferDFA$. 
    \itemII
    Furthermore, these procedures satisfy that if 
    $\aut{D}$ is a minimal {and complete} DFA and
    $\CharDFA(\aut{D})=\mathcal{S}_\aut{D}$ then
     the following holds:
   \begin{enumerate}
          \item 
          $\mathcal{S}_\aut{D}$ contains a prefix-closed set $A$ of access words. 
          Moreover,  $A$ can be chosen to contain only lex-access words,  i.e., only the lexicographically smallest access word for each state. 
        \item For every $u_1,u_2\in A$ it holds that $u_1\not\sim_{\aut{S}_\aut{D}} u_2$. 
        \item 
        For every $u,v\in A$ and $\sigma\in\Sigma$, if $\Delta(q_\iota, u\sigma)\neq \Delta(q_\iota, v)$ then $u\sigma\not\sim_{\aut{S}_\aut{D}}v$. 
    \end{enumerate}
}\quad 

To prove Theorem~\ref{prop:charDFA} we first show, in \autoref{sec:charDFA},
that given a
DFA $\aut{D} = \langle \Sigma,Q,q_\iota ,F,\Delta \rangle$
we can construct a polynomial-sized sample of words $\mathcal{S}_\aut{D}$ that
agrees with $\aut{D}$ and satisfies the required properties.
In \autoref{sec:inferDFA} we 
show an algorithm that (i) can infer in polynomial time from a given sample $\aut{S}$
a DFA that agrees with $\aut{S}$, and (ii) if it is given the set $\mathcal{S}_\aut{D}$, or
any set $\mathcal{S}\supseteq\mathcal{S}_\aut{D}$ that
agrees with $\aut{D}$, then it infers a DFA that is equivalent to $\aut{D}$.
All this together proves Theorem~\ref{prop:charDFA} (and explains why
we can refer to $\mathcal{S}_\aut{D}$ as the \emph{characteristic sample}).

\subsection{Constructing a characteristic set}\label{sec:charDFA}

The algorithm $\CharDFA$ works as follows. It first creates a prefix-closed set of access words to states.
This can  be done by considering the graph of the automaton and running an algorithm for finding a spanning tree  from
the initial state. Choosing one of the letters on each edge, the access word for a state is obtained by concatenating
the labels on the unique path of the obtained tree that reaches that state.
If we wish to work with lex-access words, we can use a depth-first search algorithm that spans branches according to the order
of letters in $\Sigma$, starting from the smallest. The labels  on the paths of the spanning tree constructed this way will form the set of lex-access words.

Let $S$ be the set of access words (or lex-access words).
Next the algorithm turns to find a distinguishing word $v_{i,j}$ for every pair of state $s_i,s_j\in S$ (where $s_i\neq s_j$).
Lemma~\ref{lem:distinquishing-word-dfa} below states that any pair of states of the minimal DFA has a distinguishing word of size quadratic in the size of the DFA.
Let $E$
be the set of all such distinguishing words. We may assume $\epsilon\in E$.\footnote{Unless $\aut{D}$ accepts all words or rejects all words, it has at least one  accepting state and one rejecting state, and $\epsilon$ is the shortest word distinguishing these states. If 
all states of $\aut{D}$ are accepting (or all rejecting) the algorithm returns $\mathcal{S}_\aut{D}=\{\langle  \epsilon, 1 \rangle\}$ (resp.  $\mathcal{S}_\aut{D}=\{\langle  \epsilon, 0 \rangle\}$).}
	Then the algorithm returns the set $\mathcal{S}_\aut{D}=\{\langle w, \aut{D}(w) \rangle ~|~ w\in (S\cdot E) \cup (S \cdot \Sigma \cdot E)\}$ where $\aut{D}(w)$ is the label $\aut{D}$ gives $w$ (i.e., $1$ if it is accepted,  and $0$ otherwise). 
	
	It is easy to see that $\mathcal{S}_\aut{D}$ satisfies the properties of Theorem~\ref{prop:charDFA}.

\begin{lem}\label{lem:distinquishing-word-dfa}
	Let $\aut{D} = \langle \Sigma,Q,q_\iota ,F,\Delta \rangle $ be a minimal DFA, and let $q_1,q_2\in Q$ s.t. $q_1\neq q_2$.
	There exists a polynomial time procedure that returns a word $v$ of size at most $|Q|^2$ such that $\Delta(q_1,v)$ is accepting iff $\Delta(q_2,v)$ is rejecting.
\end{lem}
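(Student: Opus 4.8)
The plan is to reduce the search for a distinguishing word to a shortest-path computation in the product of $\aut{D}$ with itself. First I would form the product graph $G$ whose vertices are the pairs $(p,q)\in Q\times Q$ and which, for every $\sigma\in\Sigma$, has an edge from $(p,q)$ to $(\Delta(p,\sigma),\Delta(q,\sigma))$ labeled $\sigma$. Reading a word $v$ from a vertex $(p,q)$ along $G$ lands exactly at the vertex $(\Delta(p,v),\Delta(q,v))$, so $v$ distinguishes $p$ and $q$ --- i.e.\ $\Delta(p,v)$ is accepting iff $\Delta(q,v)$ is rejecting --- if and only if the path labeled $v$ from $(p,q)$ ends at a vertex $(r,s)$ with exactly one of $r,s$ in $F$. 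Call such a vertex a \emph{mismatch} vertex and let $T$ be the set of all mismatch vertices.

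The procedure then runs a breadth-first search in $G$ starting from $(q_1,q_2)$ and stopping at the first mismatch vertex reached; concatenating the edge labels along the discovered path yields the output word $v$. The key fact making this well defined is the minimality of $\aut{D}$: since $q_1\neq q_2$, by the Myhill--Nerode characterization the two states are distinguishable, so some word $v_0$ satisfies that exactly one of $\Delta(q_1,v_0),\Delta(q_2,v_0)$ is final, which is precisely a path from $(q_1,q_2)$ to $T$ in $G$. Hence the search always succeeds.

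For the size and time bounds, I would use the standard property of breadth-first search that a shortest path in a graph with $N$ vertices visits each vertex at most once and therefore has length at most $N-1$. Here $N=|Q|^2$, so $|v|\le |Q|^2-1<|Q|^2$, as required. The graph $G$ has $|Q|^2$ vertices and $|Q|^2\cdot|\Sigma|$ edges and can be built and searched in time polynomial in $|Q|$ and $|\Sigma|$, so the whole procedure is polynomial.

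I do not expect a serious obstacle here; the only point that needs care is the equivalence between ``$v$ distinguishes $q_1,q_2$'' and ``the $v$-labeled path from $(q_1,q_2)$ reaches a mismatch vertex,'' together with invoking minimality to guarantee reachability. Everything else is bookkeeping, and the $|Q|^2$ bound falls out immediately from the simple-path length bound, so I would not even need the tighter $|Q|-1$ bound obtainable from a Moore-style partition-refinement argument.
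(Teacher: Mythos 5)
Your proposal is correct and follows essentially the same route as the paper: both build the product automaton on $Q\times Q$, run a breadth-first search from $(q_1,q_2)$ to a state in $F\times(Q\setminus F)$ or $(Q\setminus F)\times F$, and bound the word length by the $|Q|^2$ vertices of the product. Your explicit appeal to Myhill--Nerode to justify reachability of a mismatch vertex is a detail the paper leaves implicit, but the argument is the same.
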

\begin{proof}
	We can apply the product construction to $\aut{D}_i = \langle \Sigma,Q,q_i ,F,\Delta \rangle $ for $i\in\{1,2\}$ 
	and search for a path from the initial state $(q_1,q_2)$ to a state in $F\times (Q\setminus F)$ or $(Q\setminus F) \times F$ 
	to find a word that leads to an accepting state when read from $q_1$ and a rejecting state when read from $q_2$ or vice versa.
	Since a shortest simple path in a graph is bounded by the number of nodes, the shortest such word is of length at most $|Q|^2$.
	The shortest path can be found using breadth-first search algorithms that run in time linear in the number of vertices and edges,
	thus polynomial in  the size of the DFA.
\end{proof}	

Since computing a spanning tree (in particular via DFS) and finding shortest paths can be done in polynomial time this shows that for DFAs we can construct the characteristic set in polynomial time. That is, while Definition~\ref{def:ident-limit} only requires that the characteristic set be of polynomial size, for DFAs we can show that it can also be computed in polynomial time.

\subsection{Inferring a DFA}\label{sec:inferDFA}
Next we describe algorithm $\InferDFA$ that given a sample of words $\aut{S}$, infers 
from it in polynomial time a DFA that agrees with  $\aut{S}$. And moreover, if $\aut{S}$ subsumes
the characteristic set $\aut{S}_\aut{D}$ of a DFA $\aut{D}$ then  $\InferDFA$ returns a DFA that recognizes $\aut{D}$. 

Let $W$ be the set of words in the given sample $\aut{S}$ (without their labels). 
Let $R$ be the set of prefixes of $W$ and $C$ the set of suffixes of $W$.
Note that  $\epsilon\in R$ and $\epsilon \in C$.
Let $r_0,r_1,\ldots$ be some enumeration of $R$ and $c_0,c_1,\ldots$ some enumeration of $C$ where $r_0=c_0=\epsilon$.
In the sequel we often use $i_w$ for the index of $w$ in $R$.
The algorithm builds a matrix $M$ of size $|R|\times|C|$ whose entries take values in $\{0,1,?\}$.
The algorithm set the value of entry $(i,j)$ as follows. If $r_i c_j$ is not in $W$, it is set to $?$. 
Otherwise it is set to $1$ iff the word $r_i c_j$ is labeled $1$ in $\mathcal{S}$. 
We get that 
$r_i \sim_\mathcal{S} r_j$ iff
for every $k$ such  that both $M(i,k)$ and $M(j,k)$ are different from $?$ we have that $M(i,k)=M(j,k)$.

The algorithm 
sets $R_0=\{\epsilon\}$. Once $R_i$ is constructed, the algorithm tries to establish whether $r\sigma$ for $r\in R_i$ and $\sigma\in\Sigma$
is distinguished from all words in $R_i$. It does so by considering all other words $r'\in R_i$ and checking whether $r \sim_\mathcal{S} r'$. 
If $r\sigma$ is found to be distinct from all words in $R_i$, then $R_{i+1}$ is set to $R_i\cup\{r\sigma\}$. The algorithm proceeds until no new words are distinguished. Let $k$ be minimal such that $R_k=R_{k'}$ for all $k'>k$, and let $R= R_k$.
 If not all words in $R$ are in $W$ (that is $M(i,0)= \ ?$ for some $r_i\in R$), the algorithm returns the prefix-tree automaton.\footnote{The prefix-tree automaton is the automaton obtained by placing all words in a tree data structure (sharing common prefixes) and labeling a state accepting iff the unique word reaching that state is in the sample and is labeled $1$.}
 Otherwise, the states of the constructed DFA are set to be the words in $R$. The initial state is $\epsilon$ and a state $r_i$ is classified as accepting 
iff $M(i,0)=1$ (recall that the entry $M(i,0)$ stands for the value of $r_i\cdot \epsilon$ in $\mathcal{S}$).
To determine the transitions, for every $r\in R$ and $\sigma \in \Sigma$, recall that there exists at least one state $r'\in R$ that cannot be distinguished from $r\sigma$.  
The algorithm then adds a transition from $r$ on $\sigma$ to $r'$.

\begin{prop} \quad
	\begin{enumerate}
		\item Algorithm $\InferDFA$ runs in polynomial time and returns a DFA that agrees with the given sample $\mathcal{S}$.
		\item 	Let $\mathcal{S}_\aut{D}$ be the sample constructed for a DFA $\aut{D}$ by algorithm $\CharDFA$, and let $\mathcal{S} \supseteq \mathcal{S}_\aut{D}$.
		Then algorithm $\InferDFA$ returns a DFA that recognized the same language as $\aut{D}$. 
	\end{enumerate}
\end{prop}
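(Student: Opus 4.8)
The plan is to treat the two claims separately, with claim~1 establishing feasibility and soundness of $\InferDFA$ and claim~2 establishing its correctness on (supersets of) characteristic samples. For claim~1 I would first bound the running time. Writing $W$ for the set of words in $\mathcal{S}$, both $R$ (prefixes) and $C$ (suffixes) have size at most the total length of $\mathcal{S}$, so the matrix $M$ and every lookup $r_ic_j\in W$ are computed in polynomial time, and the predicate $r_i\sim_\mathcal{S} r_j$ is decided by scanning the two rows. Each round of the construction adds at least one word to $R_i$, so there are at most $|R|$ rounds, and in each round we test $|R_i|\cdot|\Sigma|$ candidate extensions against all current representatives; hence the whole loop and the final assembly of states and transitions run in polynomial time. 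For agreement with $\mathcal{S}$ I would split into the two output cases. If some $r_i\in R$ has $M(i,0)=\,?$ the algorithm returns the prefix-tree automaton, which by construction accepts exactly the positively labelled words of $\mathcal{S}$ and rejects the negatively labelled ones, so it agrees with $\mathcal{S}$. Otherwise, I would prove by induction on the length of a prefix $p$ of a sample word that the run of the constructed DFA on $p$ ends in a state $r\in R$ with $r\sim_\mathcal{S} p$; the base case $p=\epsilon$ is immediate, and the inductive step uses that the transition out of $r$ on $\sigma$ was defined to land on a representative not distinguished from $r\sigma$. Applying this to a full word $w\in W$, the reached state $r$ and $w$ are both in $W$ (we are in the case where all representatives lie in $W$), so $M(i_r,0)$ and $M(i_w,0)$ are both defined; since $r\sim_\mathcal{S} w$ they agree, and $r$ is declared accepting exactly when $M(i_r,0)=1$, so the DFA classifies $w$ by its label.

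For claim~2 I would use the three structural guarantees of $\CharDFA$ from Theorem~\ref{prop:charDFA}. The idea is to show the representative set $R$ produced by $\InferDFA$ is in bijection with the states of $\aut{D}$, realised by the prefix-closed set $A$ of lex-access words (item~\ref{item:lex}). I would argue that every lex-access word survives into $R$ and no two collapse: by item~\ref{item:accesswords} distinct access words are distinguished already in $\mathcal{S}_\aut{D}\subseteq\mathcal{S}$, hence remain $\not\sim_\mathcal{S}$, so the greedy loop never merges two states and never halts before all of $A$ is reached (prefix-closedness of $A$ guarantees each access word is discovered as a one-letter extension of one already in $R$). Conversely, since $\mathcal{S}$ agrees with $\aut{D}$, any two words leading to the same state of $\aut{D}$ stay $\sim_\mathcal{S}$, so no spurious representative beyond the $|Q|$ access words is created. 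The transitions are then recovered correctly by item~\ref{item:notsim_inS}: for $u\in A$ and $\sigma\in\Sigma$, the word $u\sigma$ is distinguished in $\mathcal{S}_\aut{D}$ from every access word except the one for $\Delta(q_\iota,u\sigma)$, so the unique admissible target is the correct state. Finally the accepting status matches because $\epsilon$ is among the distinguishing suffixes, pinning $M(i_u,0)=\aut{D}(u)$. Together these give an isomorphism between the inferred DFA and $\aut{D}$, so $\lang{\InferDFA(\mathcal{S})}=\lang{\aut{D}}$.

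The main obstacle I anticipate is the robustness to supersets in claim~2: $\mathcal{S}$ may contain many words beyond $\mathcal{S}_\aut{D}$, and I must ensure these neither force an unwanted split (creating more than $|Q|$ representatives) nor an unwanted merge. Since $\mathcal{S}$ is over the same alphabet $\Sigma$ and agrees with $\aut{D}$, extra examples can only add information consistent with the Myhill--Nerode classes of $\aut{D}$, so $\sim_\mathcal{S}$ refines to exactly those classes on the relevant prefixes; making this precise—that the partial information never contradicts the $\aut{D}$-classification and that the greedy discovery order still recovers all of $A$—is the delicate step. A secondary subtlety, already visible in claim~1, is that $\sim_\mathcal{S}$ need not be transitive or a congruence on arbitrary samples, so the induction there must be carried out directly on the reached representatives rather than by appealing to a quotient construction.
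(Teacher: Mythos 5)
Your proposal is correct and follows essentially the same route as the paper's proof: polynomial time by bounding $|R|$, $|C|$ and the number of rounds; agreement with $\mathcal{S}$ by a case split on the output and an induction showing the reached representative is $\sim_\mathcal{S}$-equivalent to the prefix read; and correctness on supersets of $\mathcal{S}_\aut{D}$ by identifying $R$ with the access-word set via the distinguishing suffixes, with transitions and acceptance pinned down by $S\cdot\Sigma\cdot E$ and $\epsilon\in E$ respectively. The subtleties you flag (non-transitivity of $\sim_\mathcal{S}$, and that extra consistent examples cannot force splits or merges) are exactly the points the paper's argument also has to handle, and it handles them the same way.
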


\begin{samepage}
\begin{proof} \hfill 
		\begin{enumerate}
			\item The number of prefixes (or suffixes) of a set of words is bounded by the size of the longest word times the size of the set. Thus $M$ is of polynomial size, and so is its construction. The number of iterations required for converging the $R_i$ sets is bounded by $|W|$. The prefix-tree automaton can be computed in polynomial time.
			Determining acceptance is polynomial in $|R|$, and
			determining the transitions is polynomial in  $|R|\times|\Sigma|$. 
			Thus the overall running time of the algorithm is polynomial.
			
			Clearly, if the algorithm returns the prefix-tree automaton then it agrees with the given sample $\aut{S}$.
			We claim that it agrees with the given sample also in the second case.
			We show, by induction on the length of the word, that for every $w\in W$, 
			if $w$ reaches state $r$ of the constructed DFA, then $w\sim_\mathcal{S} r$.
			Since $w$ is in the sample, and $r$ is in the sample (otherwise the algorithm would return the prefix-tree automaton),
			it follows that $M(i_w,0)=M(i_r,0)$ hence the DFA agrees with the sample on $w$.
			
			For the base case we have that $|w|=0$ then the DFA accepts if $r_0$ is accepting, which holds iff $M(0,0)=1$. Indeed this entry is filled with the label of $\epsilon$ in $\mathcal{S}$.
			Consider now $w=v\sigma$ for some $v\in\Sigma^*$ and $\sigma\in\Sigma$.
			Assume the DFA reaches state  $s_{\ell}$ on reading $v$ and $s_m$ after reading $w$. 
			By induction hypothesis, we know that $r_\ell \sim_{\mathcal{S}} v$.
			From the construction of the algorithm it follows that $r_\ell \sigma \sim_{\mathcal{S}} s_m$ as otherwise, reading $\sigma$ from $r_\ell$ would lead to a different state. If $r_m \not  \sim_{\mathcal{S}} w$ then there exists a suffix $c_i\in C$ s.t. $M(m,i)\neq M(i_w,i)$. But
			then $\sigma c_i$ is also in $C$, denote it by $c_j$. Then  $M(\ell,j)\neq M(i_v,j)$   contradicting that 
			$r_\ell \sim_{\mathcal{S}} v$.
			
			\item
			Next we show that if $\mathcal{S}$ subsumes $\mathcal{S}_\aut{D}$ then the returned DFA agrees with $\aut{D}$.
			Let $w_1,\ldots,w_n$ be the set of accessible words chosen by $\CharDFA$. 
			Since $\mathcal{S}$ consists of a distinguished word for every pair of access words $w_i,w_j$ of $\aut{D}$, algorithm $\InferDFA$
			will determine $w_i \not\sim_\mathcal{S} w_j$ and $R$ will consist of at least $n$ states. It may not consist of more states, since
			the sample has to agree with the language of $\aut{D}$ and every word $w$ agrees with some state of $\aut{D}$ on all possible suffixes, thus $\InferDFA$ cannot determine that $w$ corresponds to a new distinct state. 
			Since $S \cdot \Sigma \cdot E$ was placed in $\mathcal{S}$, for every distinguished state $w$ and every $\sigma\in\Sigma$ the algorithm $\InferDFA$ can determine the transition from $w$ upon reading $\sigma$.
			Since $S\cdot \epsilon$ is placed in $\mathcal{S}$, algorithm $\InferDFA$  can correctly label acceptance of states.
			Thus the obtained DFA is isomorphic to the original DFA. \qedhere
			\end{enumerate}
	\end{proof}
\end{samepage}

We can thus conclude that DFAs are identifiable in the limit using polynomial time and data. Furthermore, they satisfy
the properties of Theorem~\ref{prop:charDFA}.

\end{document}